\newtheorem{theorem}{Theorem}
\newtheorem{lemma}[theorem]{Lemma}
\newtheorem{proposition}[theorem]{Proposition}
\newtheorem{remark}[theorem]{Remark}
\newcommand{\R}{\mathbb{R}}
\newcommand{\C}{\mathbb{C}}
\newcommand{\E}{\mathbb{E}}
\renewcommand{\P}{\mathbb{P}}
\renewcommand{\O}{\mathcal{O}}
\renewcommand{\Re}{\mathrm{Re} \,}
\renewcommand{\Im}{\mathrm{Im} \,}
\renewcommand{\epsilon}{\varepsilon}
\def\:{\mathop{:}}
\newcommand{\gff}{\mathrm{GFF}}
\newcommand{\sg}{\mathrm{sG}}
\newcommand{\gffcf}[1]{\left\langle#1\right\rangle_{\gff}}
\newcommand{\sgcf}[1]{\left\langle#1\right\rangle_{\sg(\beta,\mu,\psi)}}
\numberwithin{equation}{section}
\numberwithin{theorem}{section}
\numberwithin{figure}{section}
\title{Operator product expansions of derivative fields in the sine-Gordon model}
\author{Alex Karrila \and Tuomas Virtanen \and Christian Webb}
\begin{document}

\begin{abstract}
In this article, we initiate the study of operator product expansions (OPEs) for the sine-Gordon model. For simplicity, we focus on the model below the first threshold of collapse ($\beta<4\pi$) and on the singular terms in OPEs of derivative-type fields $\partial \varphi$ and $\bar\partial\varphi$. We prove that compared to corresponding free field OPEs, the sine-Gordon OPEs develop logarithmic singularities and generate Wick ordered exponentials. Our approach for proving the OPEs relies heavily on Onsager-type inequalities and associated moment bounds for GFF correlation functions involving Wick ordered exponentials of the free field. 
\end{abstract}

\maketitle

{
  \hypersetup{linkcolor=black}
  \tableofcontents
}

\section{Introduction and main results}\label{sec:intro}

In this introductory section, we first review the notion of operator product expansions, give an informal definition of the sine-Gordon model (with a precise definition in Section \ref{sec:sG}), and then state Theorem \ref{th:main}, which is our main result about operator product expansions for derivative fields in the sine-Gordon model.

\subsection{The operator product expansion} In the theoretical physics literature, the \emph{operator product expansion} (OPE) is a profound tool for studying quantum field theories (QFTs). The OPE was originally introduced by Wilson \cite{Wilson}. An OPE describes what happens to two local quantum fields evaluated at points that are allowed to merge. It is typically postulated in theoretical physics that for any QFT that has a meaningful continuum definition, for $x$ in some neighborhood of $y$, the product of any two local fields $A(x)$ and $B(y)$ has an expansion of the form 
\begin{equation}\label{eq:ope}
A(x)B(y)\sim \sum_{i}c_i(x-y)C_i(y), 
\end{equation}
where $C_i$ are some further local fields, $c_i(x-y)$ are $\C$-valued functions that may diverge at zero, and $\sim$ indicates that this may be an asymptotic expansion or that the difference of the two sides is regular at $x=y$. The existence of the OPE \eqref{eq:ope} can be viewed either as a useful tool for defining possibly more complicated fields $C_i$ from the possibly simpler fields $A,B$, or then as a fundamental axiom of QFT needed to define interacting QFTs. For an overview of OPEs from the point of view of physics, see e.g. \cite{HW,Wilson}.

OPEs have been particularly important in studying \emph{conformal field theories} (CFTs), where OPEs are a fundamental building block of the \emph{conformal bootstrap} approach to CFT  -- see e.g. \cite[Section 6.6.5]{DFMS} and \cite{PRV}. Mathematically, this approach is closely related to that of \emph{vertex operator algebras}, which are an algebraic approach to defining $1+1$ dimensional CFTs \cite{Kac}.

In the past few decades, a major theme in modern probability theory has been understanding conjectures of physicists about describing scaling limits of critical models of statistical mechanics in terms of CFT. Part of this question is to understand concepts such as OPEs and the conformal bootstrap approach in probabilistic models defined for example as scaling limits of lattice models. While a general, mathematically rigorous solution to such a question seems out of reach currently, there has been progress in some particular models. For example, for the \emph{Liouville model}, the leading order terms for the OPE were studied in \cite{BW}, and the conformal bootstrap approach was justified in \cite{GKRV}. For the scaling limit of the \emph{critical Ising model}, the leading order terms of the OPE were studied in \cite[Section 6]{CHI}. The OPE for the scaling limit of the critical Ising model as well as that for the (compactified) massless scalar field played also an important role for proving \emph{bosonization identities} relating Ising correlation functions to free field correlation functions in \cite{BIVW}. Recently, a lattice-level version of the idea of generating more complex fields from simpler ones via OPEs (in a boundary CFT) was mathematically formalized in \cite{KLPR}.

\subsection{The sine-Gordon model}

The (Euclidean, massless) \emph{sine-Gordon model} is a Euclidean quantum field theory, that physicists might define through a path integral with a measure 
\begin{equation}\label{eq:sGformal}
\frac{1}{Z_{\beta,\mu}}e^{-\frac{1}{2}\int |\nabla \varphi|^2+\mu \int \:\cos(\sqrt{\beta}\varphi)\:}\mathcal D\varphi,
\end{equation}
where $Z_{\beta,\mu}$ is a (possibly infinite) normalization constant known as the partition function, $\mu\in \R$ and $\beta>0$, the integrals are over $\R^2$,  $\:\cdot\:$ denotes \emph{Wick ordering} (we return to a precise definition of this in Section \ref{sec:wick}), and $\mathcal D\varphi$ denotes the non-existent Lebesgue measure on the space of functions from $\R^2$ to $\R$. Before turning to a mathematically precise defintion of this model (which we present in detail in Section \ref{sec:sG} for $\beta\in(0,4\pi)$), we mention some of the reasons that the sine-Gordon model is an important model in the study of two-dimensional QFTs. 
\begin{itemize}
\item On the level of the partition function and suitable correlation functions, the sine-Gordon model is equivalent to a model of classical statistical mechanics describing (in the grand canonical formalism) a gas of charged particles interacting through the two-dimensional Coulomb potential. See e.g. \cite{Frohlich} for a mathematically rigorous discussion of this.
\item The sine-Gordon model is a prime example of a \emph{near critical model}. It is not a CFT, but a rather mild perturbation of one (namely the free scalar field). In fact, physicists predict that despite losing the structure associated with conformal invariance, the model retains some type of integrability, and certain quantities can be computed exactly. See e.g. \cite{LZ,Mussardo}.
\item The sine-Gordon model is a prime example of an interacting quantum field theory exhibiting a duality known as \emph{bosonization}. The dual theory is a fermionic theory known as the \emph{massive Thirring model}. This was first predicted in the physics literature by Coleman \cite{Coleman}. For mathematical results of this flavor, see \cite{BaWe,BFM}.
\item The sine-Gordon model is conjectured to be the scaling limit of a wide class of models of statistical mechanics, such as the \emph{six-vertex model}. See e.g. \cite{Lashkevich} for conjectures from the physics literature and \cite{Mason} for mathematical work relating the scaling limit of the \emph{dimer model} and the sine-Gordon model.
\item The sine-Gordon model is an interesting model from the point of view of \emph{renormalization}. One expects that the sine-Gordon model is non-trivial only in the regime $\beta\in(0,8\pi)$ and that there exists an increasing sequence of thresholds $\beta=\beta_n$ with $\lim_{n\to\infty}\beta_n=8\pi$ such that more and more complicated renormalization (counter terms) is needed to define the model for $\beta\in[\beta_n,\beta_{n+1})$. We will focus on the case $\beta<\beta_1=4\pi$, where Wick ordering is the only renormalization that is needed. See for example \cite{DH,Frohlich,NRS} for further details on this. 
\end{itemize}

As is typical in a mathematical approach to Euclidean QFT (see e.g. \cite{GJ}), the way of making sense of the path integral \eqref{eq:sGformal} physicists propose for the sine-Gordon model is to interpret $e^{-\frac{1}{2}\int |\nabla \varphi|^2}\mathcal D\varphi= \P(d\varphi)$ as a centered Gaussian measure with covariance $-\Delta^{-1}$, and try to construct the interacting measure $\frac{1}{Z_{\beta,\mu}}e^{\mu \int \:\cos(\sqrt{\beta}\varphi)\:}\P(d\varphi)$ through regularization and renormalization. The usual difficulty here is that the Gaussian measure is ill defined since the Laplacian is not invertible in the full plane. There are various ways to try to circumvent this issue, but our choice will be to focus on the \emph{finite volume} sine-Gordon model. That is, we fix some bounded open set $\Omega$ (with a nice boundary), a function $\psi\in C_c^\infty(\Omega)$, and try to view the finite volume sine-Gordon model as the $\epsilon\to 0$ limit of the measures
\begin{equation}\label{eq:sGreg}
\frac{1}{Z_{\beta,\mu,\psi}(\epsilon)}e^{\mu \int_{\R^2}d^2x\, \psi(x)\epsilon^{-\frac{\beta}{4\pi}}\cos(\sqrt{\beta}\varphi_\epsilon(x))}\P(d\varphi_\epsilon),
\end{equation}
where $\varphi_\epsilon$ is a regularization of the \emph{Gaussian free field} (GFF) on $\Omega$ with zero Dirichlet boundary conditions on $\partial \Omega$, and $Z_{\beta,\mu,\psi}(\epsilon)$ is a normalization constant. We will return to this regularization in detail in Section \ref{sec:wick}, but for now we mention that it is such that almost surely $\varphi_\epsilon$ is a continuous function on $\Omega$, so for each $\epsilon>0$, the measure above is well defined.\footnote{Strictly speaking, we find it convenient to define the measure \eqref{eq:sGreg} in a slightly different way, where $\mu$ is replaced by a slightly different constant. This has the benefit of yielding an object that does not depend on the choice of our regularization. See \eqref{eq:wickdef} and Section \ref{sec:sG}.}

The main fields whose correlation functions are of interest (from various points of view: the Coulomb gas picture, the model being a perturbation of a CFT, bosonization) are so-called \emph{charge or vertex fields} $\:e^{i\alpha \varphi(x)}\:$ for $\alpha\in \R$, and \emph{current or derivative\footnote{Here $\partial$ and $\bar\partial$ are the holomorphic and antiholomorphic derivatives: for $x=(x_1,x_2)\in \R^2$, $\partial f(x)=\frac{1}{2}(\frac{\partial}{\partial x_1}-i\frac{\partial}{\partial x_2})f(x)$ and $\bar\partial f(x)=\frac{1}{2}(\frac{\partial}{\partial x_1}+i\frac{\partial}{\partial x_2})f(x)$.} fields} $\partial \varphi(x)$ and $\bar\partial\varphi(x)$ (the charge/current terminology comes from the Coulomb gas picture).  Thus the main objects of interest in the model are the following \emph{correlation functions}
\begin{align}\label{eq:sgcf}
&\sgcf{\prod_{j=1}^p \partial \varphi(u_j)\prod_{k=1}^q \bar\partial \varphi(v_k) \prod_{l=1}^r \:e^{i\alpha_l\varphi(x_l)}\:}\\
&\qquad =\lim_{\epsilon\to 0}\frac{1}{Z_{\beta,\mu,\psi}(\epsilon)}\E\bigg[\prod_{j=1}^p \partial \varphi_\epsilon(u_j)\prod_{k=1}^q \bar\partial \varphi_\epsilon(v_k) \prod_{l=1}^r \epsilon^{-\frac{\alpha_l^2}{4\pi}}e^{i\alpha_l\varphi_\epsilon(x_l)}\notag  e^{\mu \int_{\R^2}d^2x\, \psi(x)\epsilon^{-\frac{\beta}{4\pi}}\cos(\sqrt{\beta}\varphi_\epsilon(x))}\bigg].\notag 
\end{align}
Above, $\E[\cdot]$ denotes expectation with respect to the law of the regularization of the GFF, $\varphi_\epsilon$.

Ideally, the way one would study the OPEs \eqref{eq:ope} in the sine-Gordon model would be to construct these correlation functions, prove that they are continuous functions when none of the points involved coincide, and then one would study asymptotics of these correlation functions as two (or more) points merge. As we discuss shortly, we will take a slightly softer approach that avoids some technical difficulties, but still gives precise meaning to the OPE.

Ultimately, one would like to take the \emph{infinite volume limit}, $\Omega\to \R^2$, $\psi\to 1$, but this is a rather difficult question. Moreover, as one expects that the OPEs, which are the main topic of this article, are local notions which should not depend heavily on whether we are in finite or infinite volume, we choose to focus on the finite volume model.

As suggested above, there are various mathematically rigorous ways of studying the sine-Gordon model. The mathematical study of the model was initiated in \cite{Frohlich}. For other approaches, see for example \cite{Barashkov,BaWe,DH,NRS}.

\subsection{The main result} Given the fundamental importance of the concept of the OPE as well as the sine-Gordon model, it is natural to ask what is known about OPEs in the sine-Gordon model. OPEs feature in various way in studies of the sine-Gordon model in the physics literature, but as an example, let us mention the work \cite{Kehrein}, where the OPE for the vertex fields of the sine-Gordon model plays a role in diagonalizing the Hamiltonian operator of the sine-Gordon model (in the Minkowski space/Hilbert space representation of the model). Mathematically there is not much that is known, and the purpose of this article is to initiate the study of OPEs for the sine-Gordon model. 

The first question in the study of the OPEs for the sine-Gordon model is how to interpret \eqref{eq:ope} for such a probabilistic model. The typical way this is done in the physics and mathematics literature (see \cite{BIVW,CHI,KM}) is to view \eqref{eq:ope} as holding within an arbitrary correlation function. That is, for $\O$ in a suitable class \emph{spectator observables}, we should have  
\begin{equation}\label{eq:opecorr}
\sgcf{A(x)B(y)\O}\sim \sum_{i}c_i(x-y)\sgcf{C_i(y)\O},
\end{equation}
where $A,B$ are either derivative fields $\partial \varphi, \bar\partial \varphi$, or vertex fields $\:e^{i\alpha\varphi}\:$, and $\sim$ indicates that the difference between the left and right hand sides has a limit as $x\to y$ (or possibly even being an asymptotic expansion). The most natural choice for the class of spectator observables, and the one typically studied in the physics literature, is to take $\O$ to be a product of fields evaluated at distinct points, e.g. of the form 
\begin{equation*}
\prod_{j=1}^p \partial \varphi(u_j)\prod_{k=1}^q \bar\partial \varphi(v_k) \prod_{l=1}^r \:e^{i\alpha_l\varphi(x_l)}\:.
\end{equation*}
The interpretation is that if \eqref{eq:opecorr} holds for any such $\O$, then \eqref{eq:ope} holds.

As we are taking the first step into rigorous analysis of OPEs for the sine-Gordon model, we choose to simplify the question of analyzing of the OPEs in various ways.
\begin{itemize}
\item First of all, we consider the sine-Gordon model only for $\beta<4\pi$. In this regime, the sine-Gordon measure is absolutely continuous with respect to the measure of the GFF, which makes analysis significantly simpler.
\item The second simplification we make is that we focus only on the case where $A$ and $B$ are derivative fields. It seems that in our approach to the OPEs, the vertex fields are more challenging to control. To be more precise, certain high dimensional integral operators we encounter in Section \ref{sec:wick} are more singular for the vertex fields and they are harder to control accurately.
\item  Our third simplification is that instead considering spectator observables that are products of fields evaluated at distinct points, we take $\O$ to be in a class of random variables that generate the $\sigma$-algebra of the underlying probability space. More precisely, we take $\O$ to be an \emph{exponential functional in the field}, namely 
\begin{equation*}
\O=e^{i\varphi(f)}
\end{equation*}
for some $f\in C_c^\infty(\Omega)$. At least formally, one could generate all possible correlation functions \eqref{eq:sgcf} from these by taking suitable limits. For example, if one would take $f(x)=\sum_{j=1}^n \alpha_j \delta_\epsilon(x-x_j)$, where $\delta_\epsilon$ is a suitable approximation of the Dirac delta-function of width $\epsilon$, then multiplying $\O$ by a suitable $\epsilon$–dependent quantity and taking $\epsilon\to 0$ would yield vertex fields in the $\epsilon\to 0$ limit. Derivative fields could in principle be generated by similar limiting arguments and series expanding the exponential for suitable choices of $f$. Controlling such limits in the OPE does not seem particularly simple though.

From the point of probability theory, this is a natural interpreation for our interpretation of the OPE since if for two integrable random variables $X$ and $Y$ we have $\E(X\O)=\E(Y\O)$ for all $\O$ in some class of random variables that generates the full $\sigma$-algebra of the probability space, we have $X=Y$ almost surely.

 On a technical level, this simplification bypasses the need to control the correlation functions \eqref{eq:sgcf} in general, and again this simplifies the analysis of certain singular integrals we encounter. 
\item Finally, we consider only the singular part of the OPE \eqref{eq:opecorr}, namely focusing on terms in \eqref{eq:opecorr} where $c_i$ does not have a limit as $x\to y$, instead of trying to prove an all order expansion. 
\end{itemize}
We expect that with significantly more involved analysis of the integrals appearing in Section \ref{sec:wick}, most of these simplifications can be dispensed with, at least for $\beta<4\pi$ -- the extension to $\beta\geq 4\pi$ requiring a slightly different approach. Fully understanding OPEs for the sine-Gordon model (involving also vertex fields, understanding the expansions to all order and all values of $\beta$) would be a very interesting question that we leave for further study.

Our main result is as follows.

\begin{theorem}\label{th:main} 
 For $\Omega\subset \R^2$ bounded, connected, and open with smooth boundary, $\beta\in (0,4\pi)$, $\mu\in \R$, $\psi\in C_c^\infty(\Omega)$,  and $\O$ an exponential functional in the field $\varphi$, namely $\O=e^{i\varphi(f)}$ with $f\in C_c^\infty(\Omega)$, the following claims hold.
\begin{enumerate}
\item For $x,y\in \Omega$ distinct, the correlation functions 
\begin{equation*}
\sgcf{\partial \varphi(x)\partial \varphi(y)\O}, \qquad \sgcf{\bar\partial \varphi(x)\bar\partial \varphi(y)\O}, \qquad \sgcf{\partial \varphi(x)\bar\partial \varphi(y)\O}
\end{equation*}
exist and are continuous functions on the set $\{(x,y)\in \Omega^{2}: x\neq y\}$. Also, the correlation functions
\begin{equation*}
\sgcf{\:e^{\pm i\sqrt{\beta}\varphi(x)}\:\O}
\end{equation*}
exist and are continuous in $x\in \Omega$.
\item The following limits exist (and are finite) for $y\in \Omega$
\begin{align}\label{eq:OPE1}
\lim_{x\to y}\left(\sgcf{\partial\varphi(x)\partial\varphi(y)\O}+\frac{1}{4\pi}\frac{1}{(x-y)^2}\sgcf{\O}-\mu \frac{\beta}{16\pi}\frac{(\bar x-\bar y)}{x-y}\psi(y)\sgcf{\:\cos (\sqrt{\beta}\varphi(y))\:\O}\right),
\end{align}
\begin{align}\label{eq:OPE2}
\lim_{x\to y}\left(\sgcf{\bar\partial\varphi(x)\bar\partial\varphi(y)\O}+\frac{1}{4\pi}\frac{1}{(\bar x-\bar y)^2}\sgcf{\O}-\mu \frac{\beta}{16\pi}\frac{(x-y)}{\bar x-\bar y}\psi(y)\sgcf{\:\cos (\sqrt{\beta}\varphi(y))\:\O}\right),
\end{align}
\begin{align}\label{eq:OPE3}
\lim_{x\to y}\left(\sgcf{\partial\varphi(x)\bar \partial\varphi(y)\O}+\mu \frac{\beta}{8\pi}\log |x-y|^{-1}\psi(y)\sgcf{\:\cos (\sqrt{\beta}\varphi(y))\:\O}\right),
\end{align}
where we have written 
\begin{align*}
\sgcf{\:\cos (\sqrt{\beta}\varphi(y))\:\O}=\frac{1}{2}\left(\sgcf{\:e^{i\sqrt{\beta}\varphi(y)}\:\O}+\sgcf{\:e^{-i\sqrt{\beta}\varphi(y)}\:\O}\right).
\end{align*}
\end{enumerate}
\end{theorem}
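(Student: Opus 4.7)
Since $\beta<4\pi$, the renormalized sine-Gordon measure is absolutely continuous with respect to the GFF with a density $\propto e^{\mu V}$ lying in every $L^p$, where $V=\int \psi(z)\:\cos(\sqrt{\beta}\varphi(z))\:dz$. My plan is to expand this density as a power series in $\mu$, so that every sine-Gordon correlator decomposes as
\begin{align*}
\sgcf{\partial\varphi(x)\partial\varphi(y)\O}= \frac{1}{Z_{\beta,\mu,\psi}}\sum_{n=0}^\infty\frac{\mu^n}{n!\,2^n}\sum_{\vec\sigma\in\{\pm1\}^n}\int_{\Omega^n}\prod_j\psi(z_j)\gffcf{\partial\varphi(x)\partial\varphi(y)\prod_j\:e^{i\sigma_j\sqrt{\beta}\varphi(z_j)}\:\O}d\vec z,
\end{align*}
and similarly for the other two derivative correlators and for the vertex correlators $\sgcf{\:e^{\pm i\sqrt{\beta}\varphi(x)}\:\O}$. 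The Onsager-type moment bounds established earlier in the paper make these series converge absolutely and uniformly on compact subsets of $\{(x,y):x\ne y\}$ (respectively on compacts of $\Omega$), which immediately delivers part (1).

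\textbf{Gaussian reduction.} Fix $\vec z,\vec\sigma$, set $L(\varphi)=\sqrt{\beta}\sum_j\sigma_j\varphi(z_j)+\varphi(f)$, and apply the Cameron--Martin shift $\varphi\mapsto\varphi+iG(\cdot,L)$ inside the Gaussian expectation. The resulting identity reads
\begin{align*}
\gffcf{\partial\varphi(x)\partial\varphi(y)\prod_j\:e^{i\sigma_j\sqrt{\beta}\varphi(z_j)}\:\O} = \gffcf{\prod_j\:e^{i\sigma_j\sqrt{\beta}\varphi(z_j)}\:\O}\bigl[\partial_x\partial_yG(x,y)-D_xD_y\bigr],
\end{align*}
with $D_x=\sqrt{\beta}\sum_j\sigma_j\partial_xG(x,z_j)+\partial_xG(x,f)$ and $D_y$ analogous (with $\bar\partial_y$ replacing $\partial_y$ in the mixed case). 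Using $\partial_xG(x,w)=-\frac{1}{4\pi(x-w)}$ in the bulk, the kinematic term $\partial_x\partial_yG(x,y)=-\frac{1}{4\pi(x-y)^2}$ resums directly into the free-field singular contribution $-\frac{1}{4\pi(x-y)^2}\sgcf{\O}$. Among the $D_xD_y$ cross terms, the only ones failing to be continuous as $x\to y$ are the \emph{diagonal} contractions $-\beta\sum_j\partial_xG(x,z_j)\partial_yG(y,z_j)$; cross terms with $f$ and off-diagonal ($j\ne k$) contractions remain continuous because their propagator singularities sit in distinct integration variables.

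\textbf{Extracting the singular coefficient.} It remains to analyze one diagonal term, whose essential content is the integral $I_{\partial\partial}(x,y)=\int_\Omega\frac{\psi(z)\,dz}{(x-z)(y-z)}$. Partial fractions give $I_{\partial\partial}(x,y)=\frac{F(y)-F(x)}{x-y}$ with $F(x)=\int\frac{\psi(z)\,dz}{x-z}$; since $\bar\partial_xF=\pi\psi$, Taylor-expanding $F$ at $y$ yields $I_{\partial\partial}(x,y)=-\pi\psi(y)\tfrac{\bar x-\bar y}{x-y}+R_\partial(x,y)$ with $R_\partial$ continuous across the diagonal. For the mixed case the analogous integral $I_{\partial\bar\partial}(x,y)=\int\frac{\psi(z)\,dz}{(x-z)(\bar y-\bar z)}$ evaluates, via polar coordinates and a short residue calculation, to $2\pi\psi(y)\log|x-y|^{-1}+R_{\partial\bar\partial}(x,y)$ with $R_{\partial\bar\partial}$ continuous; the $\bar\partial\bar\partial$ case is the complex conjugate of $\partial\partial$. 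By symmetry in $\vec z$, picking one of the $n$ indices to play the role of $j$ produces a combinatorial factor $n$ that converts $1/n!$ into $1/(n-1)!$, and the remaining $n-1$ insertions resum exactly to $\sgcf{\:\cos(\sqrt{\beta}\varphi(y))\:\O}$, reproducing the announced OPE coefficients $\mu\beta/(16\pi)$ in \eqref{eq:OPE1}--\eqref{eq:OPE2} and $\mu\beta/(8\pi)$ in \eqref{eq:OPE3}.

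\textbf{Main obstacle.} The delicate step is the final passage to the limit $x\to y$ \emph{inside} the resummed series, after subtracting the explicit singular pieces. One must show that the remainder --- a series of multiple integrals over $\vec z$ coupling the non-smooth propagators $\partial_xG(x,z_j),\partial_yG(y,z_k)$ to the Wick-ordered exponential correlators --- defines a continuous function of $(x,y)$ up to and across the diagonal $x=y$. This is precisely the setting in which the Onsager-type inequalities and associated moment bounds developed earlier in the paper provide $\vec z$-integrable dominating estimates uniformly in $(x,y)$ on compacts of $\Omega\times\Omega$, allowing dominated convergence as $x\to y$ to be invoked term-by-term and then summed over $n$.
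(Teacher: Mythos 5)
Your proposal is correct and follows essentially the same route as the paper: a power-series expansion in $\mu$, Gaussian/Girsanov reduction of each coefficient, identification of the diagonal contractions $\partial_x G_\Omega(x,z_j)\bar{\partial}_y G_\Omega(y,z_j)$ (resp.\ $\partial_y$) as the sole source of new singularities, and Onsager-type moment bounds to control the resummation uniformly near the diagonal. The only cosmetic difference is that you extract the singular coefficient via partial fractions and a Taylor expansion of the Cauchy transform of $\psi$ times the reduced correlation kernel (whose Lipschitz regularity is exactly what the paper's Proposition \ref{pr:mombd2} supplies), whereas the paper does the equivalent polar-coordinate series computation of $\int_{B(y,\delta)}\frac{d^2u}{(u-x)(u-y)}$.
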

\begin{remark}\label{rem:main}
We comment here on a few issues related to Theorem \ref{th:main}. 
\begin{enumerate}
\item This result can be seen as a way of defining a renormalized pointwise product of the derivatives. Note for the GFF, one would only need to Wick order the product, but for the sine-Gordon model, this renormalization is more complicated.
\item Closely related correlation functions were recently constructed in \cite{FC}, where the authors prove that after subtracting a suitable $\epsilon$-dependent deterministic constant (which is divergent in the $\epsilon\to 0$ limit), correlation functions involving $\partial_\mu\varphi_\epsilon(x)\partial_\nu\varphi_\epsilon(x)$ (where $\mu,\nu$ index coordinates of the spatial variable $x$) converge in the $\epsilon\to 0$ limit. The connection to the OPE is not discussed though.
\item It is instructive to compare these OPEs to the corresponding GFF ones. We discuss these in Section \ref{sec:GFFope}, but the upshot is that the singular quantities in the OPEs \eqref{eq:OPE1}--\eqref{eq:OPE3} are new compared to the GFF OPEs. Moreover, the GFF OPEs never generate exponentials from the derivatives. This illustrates how OPEs in near critical models/interacting models can differ from OPEs in CFTs. 
\item Comparing with \eqref{eq:ope}, one should interpret for example \eqref{eq:OPE1} as saying that 
\begin{equation*}
\partial \varphi(x)\partial\varphi(y)=-\frac{1}{4\pi}\frac{1}{(x-y)^2}+\mu \frac{\beta}{16\pi}\frac{\bar x-\bar y}{x-y} \psi(y) \:\cos (\sqrt{\beta}\varphi(y))\:+\, \mathrm{regular}.
\end{equation*}
Perhaps a more thorough justification of this would allow replacing the spectator observables $\O$ by all possible fields obtained from OPEs of the gradient and vertex fields, and we expect that the result is true in this generality, but it would require a full understanding of all order OPEs for all possible fields. 
\item In \cite{BH}, a variant of the sine-Gordon model is studied through a coupling between the GFF and the sine-Gordon field. The authors prove that for $\beta<6\pi$, one can construct a probability space, where the two fields differ by a Hölder-continuous function (continuously differentiable for $\beta<4\pi$). It would be interesting to understand the OPEs discussed here from this coupling perspective as well.
\item Using the bosonization results of \cite{BW}, one should be able to study gradient OPEs also for the infinite volume limit of the sine-Gordon model at $\beta=4\pi$. 
\item It would be interesting to study similar questions for a closely related model known as the sinh-Gordon model, whose definition is similar, but $\:\cos(\sqrt{\beta}\varphi)\:$ is replaced by $\:\cosh(\gamma \varphi)\:$. This model has recently been studied e.g. in \cite{BOW,GGV,HZ}.
\end{enumerate}
\end{remark}

The proof of Theorem \ref{th:main} is split into two parts: in Section \ref{sec:sG}, we construct the relevant correlation functions, while in Section \ref{sec:sGope}, we prove the OPE. Our proof is based on proving that the correlation functions are analytic in $\mu$ (using so-called Onsager-inequalities and improvements of associated moment bounds studied in \cite{GP,JSW}), with Taylor coefficients given by suitable GFF correlation functions, which we analyze in Section \ref{sec:wick}. We review some basic facts about the GFF as well as an analogue of Theorem \ref{th:main} for the GFF in Section \ref{sec:GFFope}.

\medskip

{\bf Acknowledgments: } A.K. and C.W. thank the Academy of Finland for support through the grants 339515 and 348452, respectively. T.V. thanks the Doctoral Network in Information Technologies and Mathematics at Åbo Akademi University for the financial support.

\section{A review of the GFF and an analogue of Theorem \ref{th:main} for the GFF}\label{sec:GFFope} 
In this section, we review basic properties of the \emph{Gaussian free field} (GFF), with the aim of discussing the analogue of Theorem \ref{th:main} in the setting of the GFF.   Much of what we discuss here is well known: see e.g. \cite{Sheffield} for the construction of the GFF as well as \cite[Section 2]{BIVW}, and \cite[Lecture 3]{KM} for OPEs for the GFF (with spectator fields that are products of fields at distinct points).

We begin with a brief review of what the GFF is.

\subsection{The Gaussian free field} The two-dimensional Gaussian free field, or the Euclidean massless free scalar field, is a Gaussian process $\varphi$ on $\R^2$ whose covariance is $(-\Delta)^{-1}$. This of course has some issues. First of all, $\Delta$ is not invertible on $\R^2$. To remedy this issue, we will work on a bounded open set $\Omega\subset \R^2$‚ and assume that it has a smooth boundary. In this case, $-\Delta_\Omega$, the Laplacian with zero Dirichlet boundary conditions, is invertible, and the inverse is an integral operator with  kernel $(-\Delta_\Omega)^{-1}(x,y)=G_\Omega(x,y)$, the \emph{Green's function of the domain} (with zero boundary conditions). For example, if $\Omega=B(0,1)$, then one readily checks that
\begin{equation*}
G_\Omega(x,y)=\frac{1}{2\pi}\log |x-y|^{-1}-\frac{1}{2\pi}\log |1-x\bar y|^{-1}.
\end{equation*} 
In general, 
\begin{equation}\label{eq:gomega}
G_\Omega(x,y)=\frac{1}{2\pi}\log |x-y|^{-1}+g_\Omega(x,y),
\end{equation}
where $x\mapsto g_\Omega(x,y)$ is harmonic in $\Omega$, $g_\Omega(x,y)=-\frac{1}{2\pi}\log |x-y|^{-1}$ for $x\in \partial \Omega$, and $g_\Omega(x,y)=g_\Omega(y,x)$. 

The second issue we encounter is that $G_\Omega(x,x)=\infty$, so $\E(\varphi(x)^2)=\infty$, which means that $\varphi$ cannot be a random function with a Gaussian distribution (what is a Gaussian random variable with infinite variance?). This can be remedied by viewing $\varphi$ as a \emph{random generalized function} -- e.g. as a random element of $\mathcal D'(\Omega)$ or a suitable Sobolev space with a negative regularity index. The upshot is that there exists a probability space on which we have a random generalized function $\varphi$ such that for any collection $f_1,...,f_n\in C_c^\infty(\Omega)$, the random variables $\varphi(f_1),...,\varphi(f_n)$ are jointly Gaussian, centered, and have covariance
\begin{equation}\label{eq:gffcov}
\gffcf{\varphi(f_i)\varphi(f_j)}:=\E[\varphi(f_i)\varphi(f_j)]=\int_{\Omega\times \Omega}d^2x\, d^2y\, f_i(x)f_j(y)G_\Omega(x,y).
\end{equation}
In what follows, we will often use $\gffcf{\cdot}$ instead of $\E$ for expectation with respect to the probability measure on this probability space. When relevant, we write $\P$ for the probability measure on this probability space. For a proof of existence of the GFF, see e.g. \cite{Sheffield}. The reader can also give a quick proof of existence on a suitable space of generalized functions using Minlos' theorem.

Note that since $\varphi$ is a random generalized function, also its derivatives are well defined random generalized functions: for $f\in C_c^\infty(\Omega)$, we have for example $\partial \varphi(f):=-\varphi(\partial f)$.

Since we are dealing with jointly Gaussian random variables, we can compute joint moments of random variables of the form $\varphi(f)$ (or $\partial \varphi(f)$ and $\bar\partial \varphi(f)$) by Wick's theorem (also known as Isserlis' theorem): for $f_1,...,f_p,g_1,...,g_q,h_1,...,h_r\in C_c^\infty(\Omega)$,
\begin{align}\label{eq:gffcorr}
&\gffcf{\prod_{j=1}^p \partial \varphi(f_j)\prod_{k=1}^q \bar\partial\varphi(g_k)\prod_{l=1}^r \varphi(h_l)}\\
&\quad=\sum_{\Pi \in P_{p+q+r}}\int_{\Omega^{p+q+r}}d^{2(p+q+r)}x \prod_{j=1}^p(-\partial f(x_j))\prod_{k=p+1}^{p+q} (-\bar \partial g(x_k))\prod_{l=p+q+1}^{p+q+r} h(x_l)\prod_{\{a,b\}\in \Pi}G_\Omega(x_a,x_b),\notag
\end{align} 
where $P_{p+q+r}$ denotes the set of pairings of $\{1,...,p+q+r\}$, that is sets of the form $\{\{a_i,b_i\}\}_{i=1}^n$, where $a_i\neq b_i$ for all $i$, $\{a_i,b_i\}\cap \{a_{i'},b_{i'}\}=\emptyset$ for $i\neq i'$, and $\bigcup_{i=1}^n \{a_i,b_i\}=\{1,...,p+q+r\}$. Note that $P_{p+q+r}$ is empty (and the corresponding moment is zero) if $p+q+r$ is odd, and if it is even, then we must have $n=\frac{p+q+r}{2}$.

More relevantly for our analogue of Theorem \ref{th:main} for the GFF, we can compute mixed moments of random variables of the form $\varphi(f)$ and $e^{i\varphi(f)}$. This can be done with a standard ``complete the square argument'', but for the convenience of the reader, we offer a proof of this in Appendix \ref{app:gauss}. Using Lemma \ref{le:girsa} (applied to the centered jointly Gaussian random variables $\varphi(g),\varphi(h),\varphi(f)$, the polynomial $P(x)=x_1x_2$, and $z=i$)  one finds that for $f,g,h\in C_c^\infty(\Omega)$
\begin{align}\label{eq:csq}
&\gffcf{\varphi(g)\varphi(h)e^{i\varphi(f)}}\\
&\quad =\gffcf{e^{i\varphi(f)}}\gffcf{(\varphi(g)+i\gffcf{\varphi(g)\varphi(f)})(\varphi(h)+i\gffcf{\varphi(h)\varphi(f)})}\notag \\
&\quad =\gffcf{e^{i\varphi(f)}} \bigg(\int_{\Omega\times \Omega} d^2 x d^2 y g(x)h(y)G_\Omega(x,y)\notag \\
&\qquad -\int_{\Omega\times \Omega} d^2 x_1 d^2 y_1 f(x_1)g(y_1)G_\Omega(x_1,y_1)\int_{\Omega\times \Omega} d^2 x_2 d^2 y_2 f(x_2)h(y_2)G_\Omega(x_2,y_2)\bigg).\notag
\end{align}

\subsection{An analogue of Theorem \ref{th:main} for the GFF}

In this section, we show the analogue of Theorem \ref{th:main} for the GFF. The main purpose of this is to emphasize that for the sine-Gordon model, the OPE is genuinely different from the GFF one. This also serves the purpose of illustrating the general philosophy of how we define the relevant correlation functions also for the sine-Gordon model later on (though the task is much more involved there). We do however point out that there are other approaches to OPEs for the GFF -- see \cite{BIVW,KM}.

To formulate the result, we first need to make sense of correlation functions of the form 
\begin{equation*}
\gffcf{\partial \varphi(x)\partial \varphi(y)\O}, \qquad \gffcf{\bar\partial \varphi(x)\bar\partial \varphi(y)\O}, \qquad \text{and} \qquad \gffcf{\bar\partial \varphi(x)\partial \varphi(y)\O},
\end{equation*}
with $\O$ an exponential functional in the field: $\O=e^{i\varphi(f)}$ with $f\in C_c^\infty(\Omega)$. Note that $\partial\varphi(x)$ and $\bar\partial \varphi(y)$ are no longer honest random variables -- $\partial \varphi$ is a random generalized function, so it does not have a pointwise meaning. We define these correlation functions through the following simple lemma. 
\begin{lemma}\label{le:gffpw}
For a given exponential functional in the field $\O=e^{i\varphi(f)}$ with $f\in C_c^\infty(\Omega)$, there exists a unique smooth function on $\{(x,y)\in\Omega\times \Omega: x\neq y\}$, which we write as $(x,y)\mapsto \gffcf{\partial\varphi(x)\partial \varphi(y)\O}$, such that for any $g,h\in C_c^\infty(\Omega)$ with disjoint supports 
\begin{align*}
\gffcf{\partial \varphi(g)\partial \varphi(h)\O}=\int_{\Omega\times \Omega}d^2 x d^2 yg(x)h(y)\gffcf{\partial\varphi(x)\partial \varphi(y)\O}.
\end{align*}
Similar claims are true for $\gffcf{\bar\partial \varphi(g)\bar\partial \varphi(h)\O}$ and $\gffcf{\bar\partial \varphi(g)\partial \varphi(h)\O}$.
\end{lemma}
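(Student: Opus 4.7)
The plan is to compute $\gffcf{\partial\varphi(g)\partial\varphi(h)\O}$ in closed form via the complete-the-square identity \eqref{eq:csq}, and then to read off the asserted pointwise kernel by transferring derivatives off the test functions via integration by parts. Since by definition $\partial\varphi(g) = -\varphi(\partial g)$ and $\partial g \in C_c^\infty(\Omega)$ whenever $g \in C_c^\infty(\Omega)$, applying \eqref{eq:csq} with $g,h$ replaced by $-\partial g, -\partial h$ gives
\begin{equation*}
\gffcf{\partial\varphi(g)\partial\varphi(h)e^{i\varphi(f)}}
= \gffcf{e^{i\varphi(f)}}\left(\int \partial g(x)\partial h(y)\, G_\Omega(x,y)\, d^2xd^2y \;-\; J(g)J(h)\right),
\end{equation*}
where $J(g) := -\int f(x)\partial g(y)\, G_\Omega(x,y)\,d^2xd^2y$.

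The next step is to move the derivatives off the test functions. Introduce the ``potential''
$F(z) := \int f(w)\, G_\Omega(w,z)\, d^2w$, which lies in $C^\infty(\Omega)$ by elliptic regularity applied to $-\Delta F = f$ with zero Dirichlet data (recall $f \in C_c^\infty(\Omega)$ and $\partial \Omega$ is smooth). An integration by parts then yields $J(g) = \int g(x)\,\partial F(x)\, d^2x$. Moreover, for $g,h$ with disjoint supports, $G_\Omega$ is smooth on $\mathrm{supp}(g)\times\mathrm{supp}(h)$, so two integrations by parts give
\begin{equation*}
\int \partial g(x)\partial h(y)\, G_\Omega(x,y)\, d^2xd^2y
= \int g(x)h(y)\,\partial_x\partial_y G_\Omega(x,y)\, d^2xd^2y.
\end{equation*}

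Combining the last two displays, I then \emph{define}
\begin{equation*}
\gffcf{\partial\varphi(x)\partial\varphi(y)\O}
:= \gffcf{e^{i\varphi(f)}}\bigl(\partial_x\partial_y G_\Omega(x,y) - \partial F(x)\partial F(y)\bigr),
\end{equation*}
which satisfies the required integral identity for all disjoint-support test function pairs. This function is smooth on $\{(x,y)\in\Omega^2: x\neq y\}$, since $G_\Omega$ is smooth off the diagonal and $F \in C^\infty(\Omega)$. Uniqueness is standard: if two smooth functions on $\{x\neq y\}$ produce the same integral against every disjoint-support pair $(g,h)$, then taking $g,h$ to be mollifiers concentrating at arbitrary distinct points $x_0, y_0 \in \Omega$ and using continuity of the difference forces pointwise equality. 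The two remaining identities, for $\gffcf{\bar\partial\varphi(g)\bar\partial\varphi(h)\O}$ and $\gffcf{\bar\partial\varphi(g)\partial\varphi(h)\O}$, follow by identical reasoning with $\partial$ replaced by $\bar\partial$ in the appropriate slots. There is no serious obstacle; the only nontrivial input is the global smoothness of $F$ on $\Omega$, which is a consequence of elliptic regularity for the Dirichlet Laplacian.
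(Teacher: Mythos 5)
Your proof is correct and follows essentially the same route as the paper's: apply the complete-the-square identity \eqref{eq:csq} with $g,h$ replaced by $-\partial g,-\partial h$, integrate by parts (using the disjoint supports for the $\partial_x\partial_y G_\Omega$ term), and identify the resulting smooth kernel. The only difference is cosmetic: you justify smoothness of $F=(-\Delta_\Omega)^{-1}f$ by elliptic regularity, whereas the paper iterates the identity $\int_\Omega d^2u\, f(u)\nabla_x\log|u-x|^{-1}=\int_\Omega d^2u\,\nabla f(u)\log|x-u|^{-1}$; both are valid.
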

Note that the assumption about $g$ and $h$ having disjoint supports is important here -- apart from the $\bar\partial \varphi(x)\partial \varphi(y)$-case, these functions will not be continuous on the diagonal $x=y$. 
\begin{proof}
We begin by noting that from \eqref{eq:csq} (replacing $g$ by $-\partial g$ and $h$ by $-\partial h$ and integrating by parts), we have at least formally 
\begin{align*}
\gffcf{\partial \varphi(g)\partial \varphi(h)\O}&=\gffcf{\O}\int_{\Omega\times \Omega}d^2 xd^2 y g(x)h(y)\bigg(\partial_x\partial_y G_\Omega(x,y)\\
&\qquad -\int_\Omega d^2 u f(u)\partial_x G_\Omega (u,x)\int_{\Omega}d^2 v f(v)\partial_y G_\Omega(v,y)\bigg).\notag
\end{align*}
This means that our candidate for the correlation function is 
\begin{align*}
\gffcf{\partial\varphi(x)\partial \varphi(y)\O}=\gffcf{\O}\bigg(\partial_x\partial_y G_\Omega(x,y)-\int_\Omega d^2 u f(u)\partial_x G_\Omega (u,x)\int_{\Omega}d^2 v f(v)\partial_y G_\Omega(v,y)\bigg).
\end{align*}
All we need to do to prove the claim is to justify the integration by parts and show that this is a smooth function on the domain in question. Integration by parts for $\partial_x\partial_y G_\Omega(x,y)$ is justified by the disjoint support assumption on the test functions. Also, this is a  smooth function on $\{(x,y)\in \Omega^2: x\neq y\}$.

For the second term, it is sufficient for us to argue that for $f\in C_c^\infty(\Omega)$, the function 
\begin{equation*}
x\mapsto \int_\Omega d^2 u f(u)\partial_x G_\Omega(u,x)
\end{equation*}
is smooth on $\Omega$. This follows readily from \eqref{eq:gomega}. In fact, since 
\begin{equation*}
\int_\Omega d^2 u f(u)\nabla_x \log |u-x|^{-1}=\int_\Omega d^2u  \nabla f(u) \log |x-u|^{-1},
\end{equation*}
we see by iterating this identity that 
\begin{align*}
x\mapsto \int_\Omega d^2 u f(u)\partial_x G_\Omega(u,x)
\end{align*}
is a smooth function on $\Omega$.

The claims for the correlation functions with one or both of the holomorphic derivatives $\partial$ replaced by antiholomorphic ones $\bar\partial$ are proven the same way.
\end{proof}
With this fact in hand, we turn to the analogue of Theorem \ref{th:main} for the GFF.
\begin{lemma}\label{le:gffope}
For $\O=e^{i\varphi(f)}$ with $f\in C_c^\infty(\Omega)$, the following limits exist
\begin{align}\label{eq:gffOPE1}
\lim_{x\to y}\left(\gffcf{\partial\varphi(x)\partial\varphi(y)\O}+\frac{1}{4\pi}\frac{1}{(x-y)^2}\gffcf{\O}\right)
\end{align}
\begin{align}\label{eq:gffOPE2}
\lim_{x\to y}\left(\gffcf{\bar\partial\varphi(x)\bar\partial\varphi(y)\O}+\frac{1}{4\pi}\frac{1}{(\bar x-\bar y)^2}\gffcf{\O}\right) 
\end{align}
\begin{align}\label{eq:gffOPE3}
\lim_{x\to y}\gffcf{\partial\varphi(x)\bar \partial\varphi(y)\O}.
\end{align}
\end{lemma}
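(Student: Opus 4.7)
The plan is to use the explicit formula produced in the proof of Lemma \ref{le:gffpw}, namely
\begin{equation*}
\gffcf{\partial\varphi(x)\partial\varphi(y)\O}=\gffcf{\O}\Bigl(\partial_x\partial_y G_\Omega(x,y)-I_\partial(x)I_\partial(y)\Bigr),
\end{equation*}
where $I_\partial(x):=\int_\Omega d^2u\,f(u)\partial_x G_\Omega(u,x)$, together with the analogous formulas with $\partial$ replaced by $\bar\partial$ at either factor. The idea is to split $G_\Omega$ according to \eqref{eq:gomega} and to compute the singular part of each mixed derivative explicitly; the counterterms in \eqref{eq:gffOPE1}--\eqref{eq:gffOPE3} will exactly cancel the diagonal singularity coming from the $\frac{1}{2\pi}\log|x-y|^{-1}$ piece, and everything else will be manifestly smooth on $\Omega\times\Omega$.

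For \eqref{eq:gffOPE1}, writing $\log|x-y|^{-1}=-\tfrac12\log(x-y)-\tfrac12\log(\bar x-\bar y)$ and treating the holomorphic/antiholomorphic coordinates as independent, one finds
\begin{equation*}
\partial_x\partial_y\bigl(\tfrac{1}{2\pi}\log|x-y|^{-1}\bigr)=-\frac{1}{4\pi}\frac{1}{(x-y)^2}
\end{equation*}
away from the diagonal, so $\partial_x\partial_y G_\Omega(x,y)+\frac{1}{4\pi(x-y)^2}=\partial_x\partial_y g_\Omega(x,y)$, which is smooth on $\Omega\times\Omega$ since $g_\Omega$ is. The limit in \eqref{eq:gffOPE1} is therefore
\begin{equation*}
\gffcf{\O}\bigl(\partial_x\partial_y g_\Omega(x,x)-I_\partial(x)^2\bigr),
\end{equation*}
which is finite because $I_\partial$ was shown to be smooth on $\Omega$ in the proof of Lemma \ref{le:gffpw}. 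The identity \eqref{eq:gffOPE2} follows by complex conjugation of the derivation, using $\bar\partial_x\bar\partial_y(\tfrac{1}{2\pi}\log|x-y|^{-1})=-\frac{1}{4\pi}\frac{1}{(\bar x-\bar y)^2}$ and the smoothness of $\bar\partial_x\bar\partial_y g_\Omega$.

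The mixed case \eqref{eq:gffOPE3} is even cleaner: since $\log|x-y|^{-1}$ splits as a sum of a function of $x-y$ and a function of $\bar x-\bar y$, one has $\partial_x\bar\partial_y\log|x-y|^{-1}=0$ pointwise for $x\neq y$, so already
\begin{equation*}
\partial_x\bar\partial_y G_\Omega(x,y)=\partial_x\bar\partial_y g_\Omega(x,y)
\end{equation*}
is smooth across the diagonal, and the second term $I_\partial(x)I_{\bar\partial}(y)$ is smooth in $(x,y)$ by the same Lemma \ref{le:gffpw} argument applied with a $\bar\partial$ in place of one $\partial$. Hence no counterterm is needed and the limit in \eqref{eq:gffOPE3} exists and equals $\gffcf{\O}\bigl(\partial_x\bar\partial_y g_\Omega(x,x)-I_\partial(x)I_{\bar\partial}(x)\bigr)$.

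There is no real obstacle here: the only potentially delicate point is justifying that one may identify the pointwise $\partial_x\partial_y$ of $\log|x-y|^{-1}$ (as a smooth function off the diagonal) with $\partial_x\partial_y G_\Omega(x,y)$ appearing in the formula from Lemma \ref{le:gffpw}, but this was already handled in that proof via the disjoint-support assumption and integration by parts. The main thing to be careful about is the sign and constant in the holomorphic derivative computation, so that the prefactor $-\frac{1}{4\pi}$ of the counterterm matches exactly.
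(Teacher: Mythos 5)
Your proposal is correct and follows essentially the same route as the paper: use the explicit kernel from Lemma \ref{le:gffpw}, split $G_\Omega$ via \eqref{eq:gomega}, and note that $\partial_x\partial_y \frac{1}{2\pi}\log|x-y|^{-1}=-\frac{1}{4\pi}\frac{1}{(x-y)^2}$ (resp.\ $\partial_x\bar\partial_y\frac{1}{2\pi}\log|x-y|^{-1}=0$ off the diagonal), so the counterterms cancel the singular part and the remainder is smooth. The only cosmetic quibble is that the limiting value should be written with the smooth terms evaluated at the fixed point $y$ (e.g.\ $\lim_{x\to y}\partial_x\partial_y g_\Omega(x,y)$ and $I_\partial(y)^2$) rather than at $x$.
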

Again, we emphasize here the difference with Theorem \ref{th:main} -- neither singular terms like $\frac{\bar x-\bar y}{x-y}$, $\log |x-y|^{-1}$, nor objects like $\:\cos(\sqrt{\beta}\varphi)\:$ are present for the GFF. 
\begin{proof}
We saw in the proof of Lemma \ref{le:gffpw} that 
\begin{equation*}
\gffcf{\partial\varphi(x)\partial \varphi(y)\O}=\gffcf{\O}\bigg(\partial_x\partial_y G_\Omega(x,y)-\int_\Omega d^2 u f(u)\partial_x G_\Omega (u,x)\int_{\Omega}d^2 v f(v)\partial_y G_\Omega(v,y)\bigg)
\end{equation*}
and that the terms involving $f$ are smooth functions on $\Omega$‚ so they have no singularities. Our task is thus to show that the limit
\begin{equation*}
\lim_{x\to y}\left(\partial_x\partial_yG_\Omega(x,y)+\frac{1}{4\pi}\frac{1}{(x-y)^2}\right)
\end{equation*}
exists. This follows from \eqref{eq:gomega} since 
\begin{equation*}
\partial_x\partial_y \frac{1}{2\pi}\log |x-y|^{-1}=-\frac{1}{4\pi}\frac{1}{(x-y)^2}.
\end{equation*}
The other claims are similar, though for the cross term one notes that 
\begin{equation*}
\partial_x\bar\partial_y \frac{1}{2\pi}\log |x-y|^{-1}=0
\end{equation*}
for $x\neq y$.
\end{proof}

\section{Wick ordered exponentials and analysis of their correlation functions}\label{sec:wick}

In this section, we make sense of \emph{Wick ordered exponentials}, namely objects of the form $\:e^{i\alpha \varphi}\:$ which are needed for our construction of the sine-Gordon model (we define $\:\cos(\sqrt{\beta}\varphi)\:$ as the real part of $\:e^{i\sqrt{\beta}\varphi}\:$). We also analyze correlation functions involving such objects that we need in analysis of the sine-Gordon correlation functions. Our construction of the Wick ordered exponentials $\:e^{i\alpha \varphi}\:$ follows \cite{JSW}, where similar objects were studied, and the analysis of their correlation functions is based on \emph{Onsager inequalities} and associated moment bounds studied in \cite{GP,JSW}, though we require more detailed analysis for our purposes.

\subsection{Wick ordered exponentials}

Non-linear operations, such as exponentiation, are not well defined for generalized functions, so we need to construct the Wick ordered exponentials $\:e^{i\alpha\varphi}\:$ through \emph{regularization} and \emph{renormalization}. 

For this purpose, let $\rho\in C_c^\infty(\R^2)$ be radially symmetric, non-negative, and $\int_{\R^2}d^2x\, \rho(x)=1$.  To simplify notation, let us also assume that $\mathrm{supp}(\rho)\subset B(0,1)$. Let us also define $\rho_\epsilon(x)=\epsilon^{-2}\rho(\frac{x}{\epsilon})$. It follows from \cite[Proposition 2.3]{JSW} that if we extend $\varphi$ as zero outiside of $\Omega$, then $\varphi\in \mathcal S'(\R^2)$ (the space of tempered distributions on $\R^2$). This implies that the convolution\footnote{Recall that for $\varphi\in \mathcal S'(\R^d)$ and $\rho\in \mathcal S(\R^d)$, the convolution $\rho\star\varphi\in \mathcal S'(\R^d)$ is defined by the condition $(\rho\star\varphi)(f)=\varphi(\widetilde \rho\star f)$, where $\widetilde \rho(x)=\rho(-x)$. } $\varphi_\epsilon:=\rho_\epsilon\star\varphi$ is well defined for each $\epsilon>0$, and by \cite[Theorem 2.3.20]{Grafakos}, it is in fact a $C^\infty$ function. Note that all of these (Gaussian) random functions are constructed on a common probability space with $\varphi$. Moreover, it follows for example from \cite[Chapter IX.1]{RS} that  for each $f\in C_c^\infty(\Omega)$ and $\varphi\in \mathcal S'(\R^2)$, 
\begin{equation*}
\lim_{\epsilon\to 0}\varphi_\epsilon(f):=\lim_{\epsilon\to 0}\int_\Omega d^2 x\varphi_\epsilon(x)f(x)=\varphi(f).
\end{equation*}
Thus we can view $\varphi_\epsilon$ as an approximation or regularization of $\varphi$.

We record now some standard estimates for the covariance of $\varphi_\epsilon$ that we will make use of shortly when discussing Wick ordered exponentials. 
\begin{lemma}\label{le:varphiecov}
We have the following estimates for the covariance of the process $\varphi_\epsilon$.
\begin{enumerate}
\item For $K\subset \Omega$ compact, we have for $x,y\in K$ distinct and $0< \epsilon,\delta<\min(\mathrm{dist}(K,\partial\Omega),\frac{|x-y|}{3})$ 
\begin{equation}\label{eq:gffcov1}
\gffcf{\varphi_\epsilon(x)\varphi_\delta(y)}=G_\Omega(x,y).
\end{equation}
\item For $K\subset \Omega$ compact, $x\in K$, and $0<\epsilon<\mathrm{dist}(K,\partial \Omega)$, we have
\begin{equation}\label{eq:gffcov2}
\gffcf{\varphi_\epsilon(x)^2}=\frac{1}{2\pi}\log \epsilon^{-1}+\int_{\R^2\times \R^2}d^2z d^2 w\rho(z)\rho(w)\frac{1}{2\pi}\log |z-w|^{-1} +g_\Omega(x,x),
\end{equation}
where $g_\Omega$ is as in \eqref{eq:gomega}.
\item For $K\subset \Omega$ compact, there exists a constant $c=c_{K,\rho}>0$ such that for  $x,y\in K$ and $\frac{|x-y|}{3}\leq \max(\epsilon,\delta)<\mathrm{dist}(K,\partial \Omega)$, we have
\begin{align}\label{eq:gffcov3}
\left|\gffcf{\varphi_\epsilon(x)\varphi_\delta(y)}- \frac{1}{2\pi}\log \min(\epsilon^{-1},\delta^{-1})\right|\leq c.
\end{align} 
\end{enumerate}
\end{lemma}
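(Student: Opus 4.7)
The common backbone of all three parts is the formula
\begin{equation*}
\gffcf{\varphi_\epsilon(x)\varphi_\delta(y)} = \int_{\R^2\times\R^2} \rho_\epsilon(x-u)\rho_\delta(y-v) G_\Omega(u,v)\,du\,dv,
\end{equation*}
combined with the decomposition $G_\Omega(u,v) = \frac{1}{2\pi}\log|u-v|^{-1} + g_\Omega(u,v)$ from \eqref{eq:gomega}. The two facts I would exploit throughout are (i) that $g_\Omega(\cdot,v)$ and $g_\Omega(u,\cdot)$ are harmonic in $\Omega$, and (ii) that the radial symmetry of $\rho$ (and hence $\rho_\epsilon$) implies that averaging a harmonic function against $\rho_\epsilon(x-\cdot)$ reproduces its value at $x$, by the mean value property, whenever the ball $B(x,\epsilon)$ lies in the domain of harmonicity. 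Both are standard consequences of $\rho$ being compactly supported, non-negative, radially symmetric, and of unit mass.

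For part (1), the constraint $\epsilon,\delta < |x-y|/3$ guarantees that the supports of $\rho_\epsilon(x-\cdot)$ and $\rho_\delta(y-\cdot)$ are disjoint balls inside $\Omega$. Hence $G_\Omega(\cdot,v)$ is harmonic on a neighborhood of $\mathrm{supp}\,\rho_\epsilon(x-\cdot)$ for each $v$ in $\mathrm{supp}\,\rho_\delta(y-\cdot)$; applying the mean value property in $u$ first and then in $v$ collapses the double integral to $G_\Omega(x,y)$. For part (2), I would split the covariance as a $g_\Omega$-term and a log-term. The $g_\Omega$-term reduces to $g_\Omega(x,x)$ by two applications of the mean value property, which is legitimate because $B(x,\epsilon) \subset \Omega$. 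For the log-term, the change of variables $u = \epsilon z$, $v = \epsilon w$ yields
\begin{equation*}
\frac{1}{2\pi}\int\int \rho(z)\rho(w)\log|\epsilon(z-w)|^{-1}\,dz\,dw = \frac{1}{2\pi}\log\epsilon^{-1} + \int\int \rho(z)\rho(w)\frac{1}{2\pi}\log|z-w|^{-1}\,dz\,dw,
\end{equation*}
since $\int \rho = 1$, giving exactly \eqref{eq:gffcov2}.

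For part (3), without loss of generality $\epsilon \geq \delta$, so $\max(\epsilon,\delta) = \epsilon$ and $\min(\epsilon^{-1},\delta^{-1}) = \epsilon^{-1}$. Split again into a $g_\Omega$-term, which is bounded by $\|g_\Omega\|_{L^\infty(K' \times K')}$ for $K' \subset \Omega$ a compact neighborhood of $K$ (since $g_\Omega$ is continuous up to the diagonal), and a log-term. For the log-term, change variables $u = x + \epsilon z$, $v = y + \delta w$ and set $r = (x-y)/\epsilon \in \overline{B(0,3)}$, $\tau = \delta/\epsilon \in (0,1]$, giving
\begin{equation*}
\frac{1}{2\pi}\log\epsilon^{-1} + \frac{1}{2\pi}\int\int \rho(z)\rho(w)\log|r + z - \tau w|^{-1}\,dz\,dw.
\end{equation*}
The remaining integrand has a logarithmic (hence locally integrable) singularity in the product variable, and $\rho$ is bounded with support in $B(0,1)$; a compactness argument on the joint parameter space $\overline{B(0,3)} \times [0,1]$ then gives a uniform bound on the integral. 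Subtracting $\frac{1}{2\pi}\log\epsilon^{-1}$ yields \eqref{eq:gffcov3}.

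The steps are all standard, and the only mild obstacle I anticipate is verifying the uniform bound on $\int\int \rho(z)\rho(w)\log|r + z - \tau w|^{-1}\,dz\,dw$ as $\tau \downarrow 0$: the concern is that one mollification is effectively collapsing, so one has to check that the convolution $(\rho \star \log|\cdot|^{-1})(-r + \tau w)$ remains bounded uniformly in $w \in \mathrm{supp}\,\rho$ and $\tau \in [0,1]$. This follows because $\rho \star \log|\cdot|^{-1}$ is continuous on $\R^2$ (a classical property of convolution of an $L^\infty_{\mathrm{comp}}$ function with an $L^1_{\mathrm{loc}}$ function with mild decay), hence bounded on the compact set $\overline{B(0,4)}$.
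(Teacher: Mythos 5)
Your argument follows essentially the same route as the paper's: write the covariance as the double mollification of $G_\Omega$, split via \eqref{eq:gomega} into the logarithmic part and the harmonic remainder $g_\Omega$, use the mean value property together with the radial symmetry of $\rho$ for the harmonic pieces, and rescale the logarithm to extract the $\frac{1}{2\pi}\log\epsilon^{-1}$ term; your final uniform bound via continuity of $\rho\star\log|\cdot|^{-1}$ on $\overline{B(0,4)}$ is exactly the paper's estimate $\sup_{|u|\le 4}\int\rho(b)\,|\log|u-b||\,d^2b<\infty$. One small imprecision in part (3): you bound the $g_\Omega$-contribution by $\|g_\Omega\|_{L^\infty(K'\times K')}$ for a compact neighborhood $K'$ of $K$, but since $\max(\epsilon,\delta)$ is only required to be $<\mathrm{dist}(K,\partial\Omega)$, the mollifier supports are not contained in any fixed $K'\Subset\Omega$, and $g_\Omega$ is not bounded on all of $\Omega\times\Omega$ (e.g.\ in the disc it tends to $-\infty$ near the boundary diagonal), so the constant would not be uniform as written. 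The fix is the mean value argument you already use in part (2): by harmonicity of $g_\Omega$ in each variable on $\Omega$, this term equals $g_\Omega(x,y)$ exactly, which is bounded on $K\times K$ — this is what the paper does.
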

The results here are well known, but as we are not aware of a reference with exactly these statements proven, and the proof is very simple, we present a proof here for the convenience of the reader. 
\begin{proof}
\begin{enumerate}
\item We start by noting that e.g. by \cite[the proof of Theorem 2.3.20]{Grafakos}, for $x,y\in \Omega$
\begin{equation*}
(\varphi_\epsilon(x),\varphi_\delta(y))=(\varphi(\tau^x(\rho_\epsilon)),\varphi(\tau^y (\rho_\delta))),
\end{equation*}
where $(\tau^x(\rho_\epsilon))(z)=\rho_\epsilon(z-x)$. Thus by \eqref{eq:gffcov}, we have for $x,y\in \Omega$
\begin{align*}
\gffcf{\varphi_\epsilon(x)\varphi_\delta(y)}&=\int_{\Omega\times \Omega}d^2 z\, d^2 w\, \rho_\epsilon(z-x)\rho_\delta(w-y)G_\Omega(z,w).
\end{align*}
Now let $K\subset \Omega$ be compact as in the claim and let $x,y\in K$. Since $\mathrm{supp}(\rho)\subset B(0,1)$, we see that $\rho_\epsilon(z-x)=0$ if $z\notin \Omega$ and $\epsilon<\mathrm{dist}(K,\partial \Omega)$. Thus for $x,y\in K$ and $\epsilon,\delta<\mathrm{dist}(K,\partial \Omega)$, we can write 
\begin{align}\label{eq:phiecov1}
\gffcf{\varphi_\epsilon(x)\varphi_\delta(y)}&=\int_{\R^2\times \R^2}d^2 z\, d^2 w\, \rho_\epsilon(z-x)\rho_\delta(w-y)G_\Omega(z,w)\\
&=\int_{\R^2\times \R^2} d^2 a\, d^2 b\, \rho(a)\rho(b)\left(\frac{1}{2\pi}\log |x+\epsilon a-y-\delta b|^{-1}+g_\Omega(x+\epsilon a,y+\delta b)\right)\notag \\
&=\int_{B(0,1)\times B(0,1)} d^2 a\, d^2 b\, \rho(a)\rho(b)\left(\frac{1}{2\pi}\log |x+\epsilon a-y-\delta b|^{-1}+g_\Omega(x+\epsilon a,y+\delta b)\right),\notag 
\end{align}
where in the last step we recalled that $\mathrm{supp}(\rho)\subset B(0,1)$. If we now consider $\epsilon,\delta<\min(\mathrm{dist}(K,\partial\Omega),\frac{|x-y|}{3})$ with $x,y$ distinct, we see that the function 
\begin{equation*}
a\mapsto \frac{1}{2\pi}\log |x+\epsilon a-y-\delta b|^{-1}+g_\Omega(x+\epsilon a,y+\delta b)
\end{equation*}
is harmonic in $B(0,1)$ (and similarly for $b$). Thus by rotational invariance of $\rho$ and the mean value property of harmonic functions, we see that for $\epsilon,\delta<\min(\mathrm{dist}(K,\partial\Omega),\frac{|x-y|}{3})$ with $x,y$ distinct,
\begin{equation*}
\gffcf{\varphi_\epsilon(x)\varphi_\delta(y)}=G_\Omega(x,y),
\end{equation*}
which was the first claim.
\item For the variance estimate, we use \eqref{eq:phiecov1} to write for $\epsilon<\mathrm{dist}(K,\partial\Omega)$
\begin{align*}
\gffcf{\varphi_\epsilon(x)^2}=\int_{B(0,1)\times B(0,1)}d^2a\, d^2 b\, \rho(a)\rho(b)\left(\frac{1}{2\pi}\log \epsilon^{-1}+\frac{1}{2\pi}\log |a-b|^{-1}+g_\Omega(x+\epsilon a,x+\epsilon b)\right).
\end{align*}
Again by the mean value property and rotational invariance of $\rho$, the last integral can be evaluated, and we find (using the fact that $\int \rho=1$) that for $\epsilon<\mathrm{dist}(K,\partial\Omega)$,
\begin{align*}
\gffcf{\varphi_\epsilon(x)^2}=\frac{1}{2\pi}\log \epsilon^{-1}+\int_{B(0,1)\times B(0,1)}d^2 a\, d^2 b\, \rho(a)\rho(b)\frac{1}{2\pi}\log |a-b|^{-1}+g_\Omega(x,x),
\end{align*} 
which was the claim. 
\item For notational simplicity, let us assume that $\delta\geq \epsilon$. From \eqref{eq:phiecov1}, we see (once again with the mean value property) that for $x,y\in K\subset \Omega$ compact and $\max(\epsilon,\delta)<\mathrm{dist}(K,\partial \Omega)$, 
\begin{align*}
\gffcf{\varphi_\epsilon(x)\varphi_\delta(y)}=\frac{1}{2\pi}\int_{\Omega\times \Omega}d^2 z\, d^2 w\, \rho_\epsilon(z-x)\rho_\delta(w-y)\log |z-w|^{-1}+g_\Omega(x,y).
\end{align*}
We then write $w=y+\delta b$, and find that 
\begin{align*}
\gffcf{\varphi_\epsilon(x)\varphi_\delta(y)}&=\frac{1}{2\pi}\int_{\Omega\times B(0,1)}d^2 z\, d^2 b\, \rho_\epsilon(z-x)\rho(b)\log |z-y-\delta b|^{-1}+g_\Omega(x,y)\\
&=\frac{1}{2\pi}\log \delta^{-1}+\frac{1}{2\pi}\int_{\Omega\times B(0,1)}d^2 z\, d^2b\, \rho_\epsilon(z-x)\rho(b)\log \left|\frac{z-y}{\delta}-b\right|^{-1}+g_\Omega(x,y). 
\end{align*}
With the change of variables $z=y+\delta a$, we write this as 
\begin{align*}
\gffcf{\varphi_\epsilon(x)\varphi_\delta(y)}&=\frac{1}{2\pi}\int_{\Omega\times B(0,1)}d^2 z\, d^2 b\, \rho_\epsilon(z-x)\rho(b)\log |z-y-\delta b|^{-1}+g_\Omega(x,y)\\
&=\frac{1}{2\pi}\log \delta^{-1}+\frac{1}{2\pi}\int_{\R^2\times B(0,1)}d^2 a\, d^2b\, \rho_{\epsilon/\delta}\left(a-\frac{x-y}{\delta}\right)\rho(b)\log \left|a-b\right|^{-1}\\
&\qquad +g_\Omega(x,y). 
\end{align*}
Note that under our assumptions, $\frac{|x-y|}{\delta}\leq 3$ and $\frac{\epsilon}{\delta}\leq 1$, so only $|a|\leq 4$ contributes to the integral. Thus 
\begin{align*}
&\left|\frac{1}{2\pi}\int_{\Omega\times B(0,1)}d^2 a\, d^2b\, \rho_{\epsilon/\delta}\left(a-\frac{x-y}{\delta}\right)\rho(b)\log \left|a-b\right|^{-1}\right|\\
&\leq \frac{1}{2\pi}\int_{B(0,4)}d^2a\,  \rho_{\epsilon/\delta}\left(a-\frac{x-y}{\delta}\right)\sup_{|u|\leq 4}\int_{B(0,1)}d^2 b \rho(b)|\log |u-b||\\
&=\frac{1}{2\pi}\sup_{|u|\leq 4}\int_{B(0,1)}d^2 b \rho(b)|\log |u-b||\\
&<\infty.
\end{align*}
This yields the claim.
\end{enumerate}
\end{proof}

The purpose of constructing $\varphi_\epsilon$ is to construct non-linear objects from it via renormalization. For Wick orderered exponentials, the approach is as follows: for $\alpha\in \R$ and $x\in \Omega$, let 
\begin{equation}\label{eq:wickdef}
\:e^{i\alpha\varphi_\epsilon(x)}\:=e^{\frac{\alpha^2}{2}c_\rho}\epsilon^{-\frac{\alpha^2}{4\pi}}e^{i\alpha\varphi_\epsilon(x)},
\end{equation}
where 
\begin{equation*}
c_\rho=\frac{1}{2\pi}\int_{B(0,1)\times B(0,1)}d^2a\, d^2 b\, \rho(a)\rho(b)\log |a-b|^{-1}.
\end{equation*}
Let us explain the reason for choosing this particular $\epsilon$-dependent quantity as the prefactor. It follows from \eqref{eq:gffcov2}  that for a fixed $x\in \Omega$, $\gffcf{\varphi_\epsilon(x)^2}=\frac{1}{2\pi}\log \epsilon^{-1}+c_\rho+g_\Omega(x,x)+o(1)$ as $\epsilon\to 0$. Since for a centered Gaussian random variable $V$ we have $\E(e^{iV})=e^{-\frac{1}{2}\E(V^2)}$, this means that $\gffcf{\:e^{i\alpha\varphi_\epsilon(x)}\:}=(1+o(1))e^{-\frac{\alpha^2}{2}g_\Omega(x,x)}$ as $\epsilon\to 0$. Thus $\:e^{i\alpha\varphi_\epsilon(x)}\:$ has at least a chance of converging as $\epsilon\to 0$. We also include the $c_\rho$-term in the definition of the Wick-ordered exponential to get correlation functions that are independent of $\rho$.

We mention here that one could normalize the Wick ordered exponentials in slightly different ways, e.g. by looking at $e^{\frac{\alpha^2}{2}\gffcf{\varphi_\epsilon(x)^2}}e^{i\alpha \varphi_\epsilon(x)}$. This is natural since this quantity has expectation one. This was the choice made in \cite{JSW} (where generalizations of these objects are called \emph{imaginary multiplicative chaos distributions}). We choose to go with the definition \eqref{eq:wickdef} since in \cite{BIVW} it was noticed that it is a convenient definition for studying OPEs (though we do not study OPEs of the Wick ordered exponentials in this article).

It turns out that convergence of $\:e^{i\alpha\varphi_\epsilon}\:$ is a slightly delicate question: it does not happen pointwise, but in the sense of distributions, and also it does not happen for all values of $\alpha$. The following lemma, which is an important building block in our construction of the sine-Gordon model, summarizes the relevant facts. 
\begin{lemma}\label{le:wickconv}
For $f\in C_c^\infty(\Omega)$ and $\alpha\in(-\sqrt{4\pi},\sqrt{4\pi})$, the random variables
\begin{equation*}
\:e^{i\alpha\varphi_\epsilon}\:(f)=\int_{\Omega}d^2 x\, \:e^{i\alpha\varphi_\epsilon(x)}\: f(x)
\end{equation*}
converge in probability as $\epsilon\to 0$. We write $\:e^{i\alpha\varphi}\:(f)$ for the limiting random variable.
\end{lemma}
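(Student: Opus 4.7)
The plan is to upgrade the claim to $L^2$-convergence (which implies convergence in probability) by a Cauchy argument based on direct second-moment computations, using Gaussianity together with the covariance asymptotics of Lemma \ref{le:varphiecov}.

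First I would expand the $L^2$ norm $\E|\:e^{i\alpha\varphi_\epsilon}\:(f) - \:e^{i\alpha\varphi_\delta}\:(f)|^2$ as three integrals of the form $\int f(x)\overline{f(y)} K_{\epsilon,\delta}(x,y)\,d^2x\,d^2y$, where
\[
K_{\epsilon,\delta}(x,y) := \E\bigl[\:e^{i\alpha\varphi_\epsilon(x)}\:\overline{\:e^{i\alpha\varphi_\delta(y)}\:}\bigr].
\]
Since $\varphi_\epsilon(x) - \varphi_\delta(y)$ is a centered Gaussian, this equals
\[
e^{\alpha^2 c_\rho}(\epsilon\delta)^{-\frac{\alpha^2}{4\pi}} \exp\Bigl(-\tfrac{\alpha^2}{2}\bigl(\gffcf{\varphi_\epsilon(x)^2} + \gffcf{\varphi_\delta(y)^2}\bigr) + \alpha^2 \gffcf{\varphi_\epsilon(x)\varphi_\delta(y)}\Bigr).
\]
Substituting the variance formula \eqref{eq:gffcov2} makes the factors of $c_\rho$ cancel and the divergent $(\epsilon\delta)^{-\alpha^2/(4\pi)}$ cancel against $e^{-\frac{\alpha^2}{2}\cdot\frac{1}{2\pi}\log(\epsilon^{-1}\delta^{-1})}$, leaving
\[
K_{\epsilon,\delta}(x,y) = e^{-\frac{\alpha^2}{2}(g_\Omega(x,x)+g_\Omega(y,y))}\exp\bigl(\alpha^2 \gffcf{\varphi_\epsilon(x)\varphi_\delta(y)}\bigr).
\]

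Next I would split the analysis into the two regimes of Lemma \ref{le:varphiecov}. In the long-range regime $|x-y| \geq 3\max(\epsilon,\delta)$, part (1) of that lemma gives $\gffcf{\varphi_\epsilon(x)\varphi_\delta(y)} = G_\Omega(x,y)$, so by \eqref{eq:gomega} the integrand converges pointwise to
\[
K(x,y) := |x-y|^{-\frac{\alpha^2}{2\pi}} e^{\alpha^2 g_\Omega(x,y) - \frac{\alpha^2}{2}(g_\Omega(x,x)+g_\Omega(y,y))}.
\]
In the short-range regime $|x-y| < 3\max(\epsilon,\delta)$, part (3) yields (after absorbing the bounded error into a constant, and using $\min(\epsilon,\delta) \geq |x-y|/3$ only in the symmetric subcase; in general, bounding $\min(\epsilon^{-1},\delta^{-1}) \leq (|x-y|/3)^{-1}$ via $\max(\epsilon,\delta) \geq |x-y|/3$ together with $\min \leq \max$) a uniform bound $|K_{\epsilon,\delta}(x,y)| \leq C|x-y|^{-\alpha^2/(2\pi)}$ on $\mathrm{supp}(f)\times \mathrm{supp}(f)$. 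Crucially, the dimension-two integrability threshold $\alpha^2/(2\pi) < 2$ is exactly the assumption $\alpha^2 < 4\pi$.

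With this uniform, integrable domination in hand, dominated convergence applied to each of the three terms $I_\epsilon = \E|\:e^{i\alpha\varphi_\epsilon}\:(f)|^2$, $I_\delta$, and the cross-term $I_{\epsilon,\delta}$ shows that all three converge to the same limit $\int f(x)\overline{f(y)} K(x,y)\,d^2x\,d^2y$. Therefore $\E|\:e^{i\alpha\varphi_\epsilon}\:(f) - \:e^{i\alpha\varphi_\delta}\:(f)|^2 \to 0$ as $\epsilon,\delta \to 0$, establishing the Cauchy property in $L^2(\P)$ and hence the claimed convergence in probability. The main obstacle is the short-range covariance estimate: one must show that the ill-controlled regime $|x-y| \lesssim \max(\epsilon,\delta)$ does not spoil the uniform $|x-y|^{-\alpha^2/(2\pi)}$ domination, and it is precisely here that the threshold $\beta < 4\pi$ (equivalently $\alpha = \sqrt\beta$ below $\sqrt{4\pi}$) enters in an essential way.
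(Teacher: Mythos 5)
Your proposal is correct and follows essentially the same route as the paper: a Cauchy argument in $L^2(\P)$, cancellation of the divergent prefactors via \eqref{eq:gffcov2}, and a split of the double integral into the regimes $|x-y|\geq 3\max(\epsilon,\delta)$ and $|x-y|<3\max(\epsilon,\delta)$ using parts (1) and (3) of Lemma \ref{le:varphiecov}. The only cosmetic difference is that you package the near-diagonal contribution as a uniform integrable dominating function $C|x-y|^{-\alpha^2/(2\pi)}$ and invoke dominated convergence globally, whereas the paper bounds that region directly by $O(\epsilon^{2-\alpha^2/(2\pi)})$; both hinge on the same threshold $\alpha^2<4\pi$.
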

A stronger version of this result was proven in \cite[Proposition 3.1]{JSW}, where it was proven that for\footnote{The normalization is slightly different in \cite{JSW}: the covariance of the field is assumed to have a singularity of the form $\log |x-y|^{-1}$ instead of $\frac{1}{2\pi}\log|x-y|^{-1}$, so the assumption on $\alpha$ in \cite[Proposition 3.1]{JSW} (which is denoted by $\gamma$ there) looks different from ours, but is indeed equivalent if one simply rescales the field $\varphi_\epsilon$ by $\sqrt{2\pi}$.} $\alpha\in (-\sqrt{4\pi},\sqrt{4\pi})$, the functions $x\mapsto e^{\frac{\alpha^2}{2}\gffcf{\varphi_\epsilon(x)^2}}e^{i\alpha\varphi_\epsilon(x)}$ converge in probability in a suitable Sobolev space of generalized functions. Lemma \ref{le:wickconv} would follow quite readily from this result combined with \eqref{eq:gffcov2}, but given that this is an important result in our construction of the sine-Gordon model, we give a proof for the sake of completeness. 
\begin{proof}
The idea of the proof is to show that for $f\in C_c^\infty(\Omega)$ (for notational simplicity, let us take $f$ to be real valued) and $\alpha\in (-\sqrt{4\pi},\sqrt{4\pi})$
\begin{equation}\label{eq:epsdel}
\lim_{\epsilon,\delta\to 0}\gffcf{|\:e^{i\alpha \varphi_\epsilon}\:(f)-\:e^{i\alpha \varphi_\delta}\:(f)|^2}=0.
\end{equation}
Convergence in $L^2(\P)$ then follows from completeness of $L^2(\P)$, and the claim follows from the fact that convergence in $L^2$ implies convergence in probability.

Let us thus try to show \eqref{eq:epsdel}. Since we are dealing with Gaussian processes, we have 
\begin{align*}
&\gffcf{|\:e^{i\alpha \varphi_\epsilon}\:(f)-\:e^{i\alpha \varphi_\delta}\:(f)|^2}\\
&=e^{\alpha^2 c_\rho}\int_{\Omega\times \Omega}d^2x\, d^2 y\, f(x)f(y) \bigg[\epsilon^{-\frac{\alpha^2}{2\pi}}e^{-\frac{\alpha^2}{2}\gffcf{(\varphi_\epsilon(x)-\varphi_\epsilon(y))^2}}-2\epsilon^{-\frac{\alpha^2}{4\pi}}\delta^{-\frac{\alpha^2}{4\pi}}e^{-\frac{\alpha^2}{2}\gffcf{(\varphi_\epsilon(x)-\varphi_\delta(y))^2}}\\
&\qquad \qquad \qquad \qquad +\delta^{-\frac{\alpha^2}{2\pi}}e^{-\frac{\alpha^2}{2}\gffcf{(\varphi_\delta(x)-\varphi_\delta(y))^2}}\bigg].
\end{align*}
Note that since $f\in C_c^\infty(\Omega)$, we can apply Lemma \ref{le:varphiecov} with $K=\mathrm{supp}(f)$. First from \eqref{eq:gffcov2}, we find for $0<\epsilon,\delta<\mathrm{dist}(\mathrm{supp}(f),\partial\Omega)$,
\begin{align*}
\epsilon^{-\frac{\alpha^2}{4\pi}}\delta^{-\frac{\alpha^2}{4\pi}}e^{-\frac{\alpha^2}{2}\gffcf{(\varphi_\epsilon(x)-\varphi_\delta(y))^2}}&=\epsilon^{-\frac{\alpha^2}{4\pi}} e^{-\frac{\alpha^2}{2}\gffcf{\varphi_\epsilon(x)^2}}\delta^{-\frac{\alpha^2}{4\pi}} e^{-\frac{\alpha^2}{2}\gffcf{\varphi_\delta(y)^2}} e^{\alpha^2\gffcf{\varphi_\epsilon(x)\varphi_\delta(y)}}\\
&=e^{-\alpha^2 c_\rho-\frac{\alpha^2}{2}(g_\Omega(x,x)+g_\Omega(y,y))} e^{\alpha^2\gffcf{\varphi_\epsilon(x)\varphi_\delta(y)}}.
\end{align*}
We can of course use this estimate in the case $\epsilon=\delta$ as well, so we find for $\epsilon,\delta<\mathrm{dist}(\mathrm{supp}(f),\partial \Omega)$ 
\begin{align}\label{eq:imc2mom}
&\gffcf{|\:e^{i\alpha\varphi_\epsilon}\:(f)-\:e^{i\alpha\varphi_\delta}\:(f)|^2}\\
&\quad =\int_{\Omega\times \Omega}d^2 x\, d^2y\, f(x)f(y)e^{-\frac{\alpha^2}{2}(g_\Omega(x,x)+g_\Omega(y,y))}\notag \\
&\qquad \qquad \times \left[e^{\alpha^2 \gffcf{\varphi_\epsilon(x)\varphi_\epsilon(y)}}-2e^{\alpha^2 \gffcf{\varphi_\epsilon(x)\varphi_\delta(y)}}+e^{\alpha^2 \gffcf{\varphi_\delta(x)\varphi_\delta(y)}}\right].\notag
\end{align}
Let us study the $\epsilon,\delta\to 0$ limits of the three integrals appearing here. First of all, we have for $\epsilon<\mathrm{dist}(\mathrm{supp}(f),\partial \Omega)$ (by \eqref{eq:gffcov1} and \eqref{eq:gffcov3})  
\begin{align*}
&\int_{\Omega\times \Omega}d^2x\, d^2y\, f(x)f(y)e^{-\frac{\alpha^2}{2}(g_\Omega(x,x)+g_\Omega(y,y))}e^{\alpha^2\gffcf{\varphi_\epsilon(x)\varphi_\epsilon(y)}}\\
&=\int_{|x-y|\geq 3\epsilon}d^2x\, d^2y\, f(x)f(y)e^{-\frac{\alpha^2}{2}(g_\Omega(x,x)+g_\Omega(y,y))}e^{\alpha^2\gffcf{\varphi_\epsilon(x)\varphi_\epsilon(y)}}\\
&\quad +\int_{|x-y|< 3\epsilon}d^2x\, d^2y\, f(x)f(y)e^{-\frac{\alpha^2}{2}(g_\Omega(x,x)+g_\Omega(y,y))}e^{\alpha^2\gffcf{\varphi_\epsilon(x)\varphi_\epsilon(y)}}\\
&=\int_{|x-y|\geq 3\epsilon}d^2x\, d^2y\, f(x)f(y)e^{-\frac{\alpha^2}{2}(g_\Omega(x,x)+g_\Omega(y,y))}e^{\alpha^2G_\Omega(x,y)}+O(\epsilon^{-\frac{\alpha^2}{2\pi}}\epsilon^2).
\end{align*}  
Since $\alpha^2< 4\pi$, we see from \eqref{eq:gomega} (say by dominated convergence) that this last integral has a limit as $\epsilon\to 0$ and the error term tends to zero as $\epsilon\to 0$. Thus 
\begin{align*}
&\lim_{\epsilon\to 0}\int_{\Omega\times \Omega}d^2x\, d^2y\, f(x)f(y)e^{-\frac{\alpha^2}{2}(g_\Omega(x,x)+g_\Omega(y,y))}e^{\alpha^2\gffcf{\varphi_\epsilon(x)\varphi_\epsilon(y)}}\\
&\quad =\int_{\Omega\times \Omega}d^2x\, d^2y\, f(x)f(y)e^{-\frac{\alpha^2}{2}(g_\Omega(x,x)+g_\Omega(y,y))}e^{\alpha^2G_\Omega(x,y)}.
\end{align*}
Of course we also have the same result with $\epsilon$ replaced by $\delta$, so it remains to estimate the cross term in \eqref{eq:imc2mom}. For this, let us for notational simplicity assume that $\delta\geq \epsilon$. The argument is essentially identical and we find that as $\delta\to 0$
\begin{align*}
&\int_{\Omega\times \Omega}d^2x\, d^2y\, f(x)f(y)e^{-\frac{\alpha^2}{2}(g_\Omega(x,x)+g_\Omega(y,y))}e^{\alpha^2\gffcf{\varphi_\epsilon(x)\varphi_\delta(y)}}\\
&=\int_{|x-y|\geq 3\delta}d^2x\, d^2y\, f(x)f(y)e^{-\frac{\alpha^2}{2}(g_\Omega(x,x)+g_\Omega(y,y))}e^{\alpha^2\gffcf{\varphi_\epsilon(x)\varphi_\delta(y)}}\\
&\quad +\int_{|x-y|< 3\delta}d^2x\, d^2y\, f(x)f(y)e^{-\frac{\alpha^2}{2}(g_\Omega(x,x)+g_\Omega(y,y))}e^{\alpha^2\gffcf{\varphi_\epsilon(x)\varphi_\delta(y)}}\\
&=\int_{|x-y|\geq 3\delta}d^2x\, d^2y\, f(x)f(y)e^{-\frac{\alpha^2}{2}(g_\Omega(x,x)+g_\Omega(y,y))}e^{\alpha^2G_\Omega(x,y)}+O(\delta^{-\frac{\alpha^2}{2\pi}}\delta^2)\\
&=\int_{\Omega\times \Omega}d^2x\, d^2y\, f(x)f(y)e^{-\frac{\alpha^2}{2}(g_\Omega(x,x)+g_\Omega(y,y))}e^{\alpha^2G_\Omega(x,y)}+o(1).
\end{align*}  
Plugging these estimates into \eqref{eq:imc2mom}, we conclude that \eqref{eq:epsdel} holds, and as argued at the beginning of the proof, this implies the claim.
\end{proof}

\subsection{Onsager inequalities} For the sine-Gordon model we will need to make sense of quantities like $\gffcf{\partial \varphi(x)\partial\varphi(y)\O e^{\mu \Re(\:e^{i\sqrt{\beta}\varphi}\:(\psi))}}$. The way we will do this is by proving that they are analytic in $\mu$ with Taylor coefficients given by  $\gffcf{\partial \varphi(x)\partial\varphi(y)\O[\Re(\:e^{i\sqrt{\beta}\varphi}\:(\psi))]^k}$. We will also need to make sense of the normalization constant (or partition function) of the model, which is $\gffcf{e^{\mu \Re(\:e^{i\sqrt{\beta}\varphi}\:(\psi))}}$. This also turns out to be analytic in $\mu$ with Taylor coefficients $\gffcf{[\Re(\:e^{i\sqrt{\beta}\varphi}\:(\psi))]^k}$. A formal Gaussian computation suggests that one should have for example
\begin{align*}
&\gffcf{[\Re(\:e^{i\sqrt{\beta}\varphi}\:(\psi))]^k}\\
&\quad =2^{-k}\sum_{\sigma_1,...,\sigma_k\in \{-1,1\}}\int_{\Omega^k} d^{2k}x \prod_{j=1}^k\left( \psi(x_j) e^{-\frac{\beta}{2}g_\Omega(x_j,x_j)} \right)e^{-\beta \sum_{1\leq l<m\leq k}\sigma_l\sigma_m G_\Omega(x_l,x_m)}.
\end{align*}

To estimate such integrals and justify this approach to the sine-Gordon model, we will make use of so-called \emph{Onsager} or \emph{electrostatic inequalities} which bound the sum appearing in the exponential above. A similar approach was taken already in \cite{GP,JSW}. We offer a proof here since it is simple enough and we give a slightly sharper inequality in the regularized case. While we do not make use of it, we generalize the results of \cite{GP,JSW} slightly and prove the inequality in the case that $\sigma_j\sqrt{\beta}$ is replaced by an arbitrary $\alpha_j\in (-\sqrt{4\pi},\sqrt{4\pi})$ since the proof is identical. The following lemma describes these inequalities.
\begin{lemma}\label{le:ons}
Let $K\subset \Omega$ be compact, $\alpha_1,...,\alpha_n\in (-\sqrt{4\pi},\sqrt{4\pi})$, and $x_1,...,x_n\in K$. 
\begin{enumerate}
\item There exists a constant $c=c_{K,\rho}$ such that for $0<\epsilon<\mathrm{dist}(K,\partial \Omega)$
\begin{equation*}
-\sum_{1\leq j<k\leq n}\alpha_j\alpha_k\gffcf{\varphi_\epsilon(x_j)\varphi_\epsilon(x_k)}\leq \frac{1}{4\pi}\sum_{j=1}^n \alpha_j^2 \log \frac{1}{\min_{k\neq j}|x_k-x_j|}+cn.
\end{equation*} 
\item There exists a constant $c=c_{K}$ such that 
\begin{equation*}
-\sum_{1\leq j<k\leq n}\alpha_j\alpha_kG_\Omega(x_j,x_k)\leq \frac{1}{4\pi}\sum_{j=1}^n \alpha_j^2\log \frac{1}{\min_{k\neq j}|x_k-x_j|}+cn.
\end{equation*}
\end{enumerate}
\end{lemma}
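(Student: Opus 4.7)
The plan is a charge-smearing argument in the spirit of the classical Onsager electrostatic inequality: replace each point charge $\alpha_j\delta_{x_j}$ by the uniform probability measure $\nu_j$ on a circle $\partial B(x_j,r_j)$ of radius proportional to the minimum-neighbor distance, and exploit positive semi-definiteness of the underlying kernel ($G_\Omega$ for (2), $K_\epsilon(y,y'):=\gffcf{\varphi_\epsilon(y)\varphi_\epsilon(y')}$ for (1)) together with the mean value property for harmonic functions.

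For part (2), set $r_j=\tfrac{1}{3}\min_{k\ne j}|x_k-x_j|$ so that the closed balls $\bar B(x_j,r_j)$ are pairwise disjoint. Non-negativity of the quadratic form $\mu\mapsto \iint G_\Omega\,d\mu\,d\mu$ on signed measures (valid because $G_\Omega=(-\Delta_\Omega)^{-1}$ is the kernel of a positive operator) applied to $\nu:=\sum_j\alpha_j\nu_j$ gives
\[
0\le \sum_j\alpha_j^2\iint G_\Omega(y,y')\nu_j(dy)\nu_j(dy') + 2\sum_{j<k}\alpha_j\alpha_k\iint G_\Omega(y,y')\nu_j(dy)\nu_k(dy').
\]
The cross-terms reduce to $G_\Omega(x_j,x_k)$ by applying the mean value property twice: $G_\Omega(\cdot,y')$ is harmonic on a neighborhood of $\bar B(x_j,r_j)$ for each $y'\in\bar B(x_k,r_k)$. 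For the self-term, split $G_\Omega=\tfrac{1}{2\pi}\log|\cdot-\cdot|^{-1}+g_\Omega$: a direct calculation (integrating $\log|y-y'|^{-1}$ with $y,y'$ on the same circle) gives $\tfrac{1}{2\pi}\log(1/r_j)$, and $g_\Omega$ contributes $O(1)$ on the compact $K\times K$. Rearranging and using $\alpha_j^2<4\pi$ to convert per-particle constants into a $cn$ term, together with $\log(1/r_j)=\log 3+\log(1/\min_{k\ne j}|x_k-x_j|)$, yields (2).

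For part (1), the same smearing is applied to the regularized field: $Y_j:=\int\varphi_\epsilon\,d\nu_j$ is a centered Gaussian random variable (well defined since $\varphi_\epsilon$ is smooth), and $\gffcf{(\sum_j\alpha_jY_j)^2}\ge 0$ gives $-\sum_{j<k}\alpha_j\alpha_k\gffcf{Y_jY_k}\le\tfrac{1}{2}\sum_j\alpha_j^2\gffcf{Y_j^2}$. The self-term is controlled by the uniform upper bound $K_\epsilon(y,y')\le\tfrac{1}{2\pi}\log|y-y'|^{-1}+C$ on $K\times K$ off the diagonal, which follows by separately treating the three regimes $|y-y'|\ge 3\epsilon$, $\epsilon\le|y-y'|<3\epsilon$, and $|y-y'|<\epsilon$ via Lemma~\ref{le:varphiecov}; integrating against $\nu_j\otimes\nu_j$ then produces $\gffcf{Y_j^2}\le\tfrac{1}{2\pi}\log(1/r_j)+O(1)$, uniformly in $\epsilon$. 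For the cross-covariance, the representation $K_\epsilon=\rho_\epsilon\star G_\Omega\star\rho_\epsilon$ and switching of integration orders expresses $\gffcf{Y_jY_k}$ as $\iint\tau_j(u-x_j)\tau_k(v-x_k)G_\Omega(u,v)\,du\,dv$, where $\tau_j:=\nu_j\star\rho_\epsilon$ is a radial probability density; when $|x_j-x_k|>r_j+r_k+2\epsilon$, the double mean value property then yields $\gffcf{Y_jY_k}=G_\Omega(x_j,x_k)=K_\epsilon(x_j,x_k)$, the last equality by Lemma~\ref{le:varphiecov}(1).

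The main obstacle is handling pairs of particles that are close relative to $\epsilon$, where the identification $\gffcf{Y_jY_k}=K_\epsilon(x_j,x_k)$ fails. My plan is to inflate the smearing radius to $s_j:=\max(r_j,C\epsilon)$ for a suitable constant $C$, ensuring the separation $|x_j-x_k|>s_j+s_k+2\epsilon$ in all cases except when both $s_j$ and $s_k$ saturate at $C\epsilon$. The variance bound still reads $\tfrac{1}{2\pi}\log(1/s_j)+O(1)\le\tfrac{1}{2\pi}\log(1/\min_{k\ne j}|x_k-x_j|)+O(1)$, because $s_j=C\epsilon$ is only forced when $j$ has a neighbor within $O(\epsilon)$. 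For the remaining genuinely overlapping pairs, one uses that both $K_\epsilon(x_j,x_k)$ and $\gffcf{Y_jY_k}$ are then dominated by $\tfrac{1}{2\pi}\log(1/\epsilon)+O(1)$, while $\log(1/\min_{k\ne j}|x_k-x_j|)\ge\log(1/\epsilon)-O(1)$ in this regime, so the target right-hand side already absorbs the discrepancy. Making this regime-by-regime accounting precise so that the residual error remains $O(n)$ rather than $O(n^2)$ is the most delicate step, and parallels the moment-estimate analyses of \cite{GP,JSW}.
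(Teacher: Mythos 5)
Your part (2) is a correct and essentially classical Onsager smearing argument (uniform measures on circles, positive definiteness of $G_\Omega$, mean value property for the cross terms, explicit self-energy of a circle); the only small repair needed is to cap $r_j$ by a quantity like $\mathrm{dist}(K,\partial\Omega)$ so that the balls stay inside $\Omega$, which only costs a further $c_K n$. The paper instead obtains (2) as the $\epsilon\to 0$ limit of (1), but your direct route is fine.

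Part (1), however, has a genuine gap exactly where you flag it, and the accounting you sketch cannot be made to work as stated. Your plan is to bound the discrepancy $E_{jk}:=\gffcf{Y_jY_k}-\gffcf{\varphi_\epsilon(x_j)\varphi_\epsilon(x_k)}$ for ``genuinely overlapping'' pairs by noting that both quantities are $\tfrac{1}{2\pi}\log(1/\epsilon)+O(1)$, i.e.\ by an $O(1)$ per-pair bound. But nothing prevents all $n$ points from lying in a single ball of radius $\epsilon/100$ (with nearest-neighbour spacings of order, say, $\epsilon n^{-1/2}$), in which case \emph{every} pair is overlapping, the $E_{jk}$ are generically nonzero constants of order one (the self-energy constant of a circle of radius $C\epsilon$ does not match the constant $c_\rho+g_\Omega$ in $\gffcf{\varphi_\epsilon(x)\varphi_\epsilon(y)}$ for $|x-y|\ll\epsilon$), and the signed quadratic form $\sum_{j<k}\alpha_j\alpha_k E_{jk}$ admits no better a priori bound than $O(n^2)$. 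The available slack, namely $\tfrac{1}{4\pi}\sum_j\alpha_j^2\log(1/m_j)-\tfrac12\sum_j\alpha_j^2\gffcf{Y_j^2}+O(n)$ with $m_j=\min_{k\neq j}|x_k-x_j|$, is only $O\bigl(\sum_j\alpha_j^2\log(\epsilon/m_j)\bigr)+O(n)=O(n\log n)$ for such configurations; moreover for clustered dipole configurations the claimed inequality is tight up to $O(n)$, so there is genuinely no room for an uncontrolled $n^2$ error. Closing this would require exhibiting cancellations in $\sum_{j<k}\alpha_j\alpha_kE_{jk}$ that your write-up does not provide. The paper sidesteps the issue entirely by a different choice of smearing: it sets $Z_j:=\varphi_{\max(\epsilon,r_j)}(x_j)$ with $r_j=\min\bigl(\tfrac14\min_{k\neq j}|x_j-x_k|,\mathrm{dist}(K,\partial\Omega)\bigr)$, i.e.\ it inflates the \emph{mollification scale} rather than adding a circle on top of $\rho_\epsilon$. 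Then for a close pair ($|x_j-x_k|<3\epsilon$) one automatically has $r_j,r_k<\epsilon$, so $Z_j=\varphi_\epsilon(x_j)$ and $Z_k=\varphi_\epsilon(x_k)$ and the cross-covariance agrees with $\gffcf{\varphi_\epsilon(x_j)\varphi_\epsilon(x_k)}$ \emph{exactly}, while for a far pair the mean value property (Lemma~\ref{le:varphiecov}(1)) makes both covariances equal to $G_\Omega(x_j,x_k)$. Hence $\E(Z_jZ_k)=\gffcf{\varphi_\epsilon(x_j)\varphi_\epsilon(x_k)}$ for \emph{all} $j\neq k$, the error term never appears, and only the diagonal variances $\E(Z_j^2)\leq\tfrac{1}{2\pi}\log r_j^{-1}+c$ contribute, giving the $cn$ directly. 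I would recommend adopting this device in place of the circle smearing for part (1).
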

We mention here that the restriction $\alpha_1,...,\alpha_n\in (-\sqrt{4\pi},\sqrt{4\pi})$ is not essential -- without this restriction, the constants $c$ would simply need to depend on $\max_i |\alpha_i|$. However, for the corresponding exponential to be integrable, this restriction is important.  

We also mention that the main importance here is that in item (1), the constant $c$ does not depend on $n$ or $\epsilon$. This will imply that all moments of $\:e^{i\alpha\varphi_\epsilon}\:(f)$ are bounded in $\epsilon$ and grow at a controllable rate. Item (2) also implies that the moments of $\:e^{i\alpha\varphi}\:(f)$ grow at a controllable rate.  This is ultimately what justifies analyticity of the sine-Gordon correlation function as a function of $\mu$ in our approach.

Finally, before going into the proof, we mention that as in \cite[Proposition 3.9]{JSW}, one could at least in item (2) drop the assumption that $x_1,...,x_n\in K$ (and simply assume that $x_1,...,x_n\in \Omega$). As we do not need this case and the proof is cosmetically more involved, we do not state the inequality in this setting.
\begin{proof}
\begin{enumerate}
\item Let us introduce 
\begin{equation*}
r_j=\min\left(\frac{1}{4}\min_{k\neq j}|x_j-x_k|, \mathrm{dist}(K,\partial \Omega)\right)
\end{equation*}
and define the random variables 
\begin{equation*}
Z_j=\varphi_{\max(\epsilon,r_j)}(x_j).
\end{equation*}
These are of course centered jointly Gaussian random variables, and we have 
\begin{equation*}
\E(Z_jZ_k)=\gffcf{\varphi_{\max(\epsilon,r_j)}(x_j)\varphi_{\max(\epsilon,r_k)}(x_k)}.
\end{equation*}
We note that for $j\neq k$, we always have $|x_j-x_k|\geq 3r_j$ and $|x_j-x_k|\geq 3r_k$. If we also have $|x_j-x_k|\geq 3\epsilon$ and since we are assuming that $\epsilon<\mathrm{dist}(K,\partial \Omega)$, then by \eqref{eq:gffcov1}
\begin{equation*}
\E(Z_jZ_k)=\gffcf{\varphi_{\max(\epsilon,r_j)}(x_j)\varphi_{\max(\epsilon,r_k)}(x_k)}=G_\Omega(x_j,x_k)=\gffcf{\varphi_\epsilon(x_j)\varphi_\epsilon(x_k)}.
\end{equation*}
If on the other hand we have $|x_j-x_k|<3\epsilon$, then also $r_j,r_k<\epsilon$, and we also have 
\begin{align*}
\E(Z_jZ_k)=\gffcf{\varphi_{\max(\epsilon,r_j)}(x_j)\varphi_{\max(\epsilon,r_k)}(x_k)}=\gffcf{\varphi_\epsilon(x_j)\varphi_\epsilon(x_k)}.
\end{align*}
We thus have 
\begin{align*}
0\leq \E\left(\left(\sum_{l=1}^n\alpha_lZ_l\right)^2\right)=\sum_{l=1}^n \alpha_l^2 \E(Z_l^2)+2\sum_{1\leq j<k\leq n}\alpha_j\alpha_k\gffcf{\varphi_\epsilon(x_j)\varphi_\epsilon(x_k)},
\end{align*}
or 
\begin{align*}
-\sum_{1\leq j<k\leq n}\alpha_j\alpha_k\gffcf{\varphi_\epsilon(x_j)\varphi_\epsilon(x_k)}\leq \frac{1}{2}\sum_{l=1}^n \alpha_l^2 \E(Z_l^2).
\end{align*}
For $\E(Z_l^2)$, we note from \eqref{eq:gffcov3} that 
\begin{align*}
\E(Z_l^2)=\gffcf{\varphi_{\max(\epsilon,r_l)}(x_l)^2}\leq \frac{1}{2\pi}\log\min(\epsilon^{-1},r_l^{-1})+c\leq \frac{1}{2\pi}\log r_l^{-1}+c.
\end{align*}
Note that for each $K\subset \Omega$ compact‚ we can find a $c_K$ such that for $x_1,...,x_n\in K$
\begin{equation*}
\log r_l^{-1}\leq \log \frac{1}{\min_{j\neq l}|x_l-x_j|}+c_K.
\end{equation*}
We conclude that 
\begin{align*}
-\sum_{1\leq j <k\leq n}\alpha_j\alpha_k\gffcf{\varphi_\epsilon(x_j)\varphi_\epsilon(x_k)}\leq \frac{1}{2}\sum_{l=1}^n \alpha_l^2 \E(Z_l^2)\leq \frac{1}{4\pi}\sum_{l=1}^n \alpha_l^2 \log \frac{1}{\min_{m: m\neq  l}|x_l-x_m|}+n C
\end{align*}
for a suitable $C$ depending only on $K,\rho$.
\item This case follows from the first one by taking $\epsilon\to 0$ and using e.g. \eqref{eq:gffcov1}.
\end{enumerate}
\end{proof}

\subsection{Statements of the moment bounds}

In this section, we record the main technical estimates we need in this article. These are various bounds and regularity estimates for correlation functions involving $\:e^{\pm i \sqrt{\beta}\varphi}\:$. We present the proofs, which rely on a suitable graphical enumeration argument originally due to \cite{GP}, later in Section \ref{sec:mombdproof}.

Our first estimate is a bound on moments of $\:e^{i\alpha \varphi_\epsilon}\:(\psi)$ that is uniform in $\epsilon$, as well as a related bound on moments of $\:e^{i\alpha\varphi}\:(f)$. In addition, these estimates control the growth rate of these moments. This result is not new -- a variant of it is proven in \cite{GP}, and a more general result is given by \cite[Theorem 3.12]{JSW}. We provide a proof for it in Section \ref{sec:mombdproof} as proofs for further estimates we state shortly build on this proof. As the quantity $\max_i |\alpha_i|$ appears repeatedly in what follows, we introduce the following notation for it: for $\alpha_1,...,\alpha_n\in \R$, define 
\begin{equation}\label{eq:alphanorm}
\|\alpha\|:=\max_{1\leq i\leq n}|\alpha_i|.
\end{equation}

\begin{proposition}\label{pr:mombd1}
For $f_1,...,f_n\in C_c^\infty(\Omega)$ and $\alpha_1,...,\alpha_n\in(-\sqrt{4\pi},\sqrt{4\pi})$, there exists a constant $C$ depending only on $\Omega$, $\bigcup_{i=1}^n \mathrm{supp}(f_i)$, $\max_i \|f_i\|_{L^\infty(\Omega)}$, and $\|\alpha\|$ such that for $n\geq 1$, we have 
\begin{align}\label{eq:gpest}
\sup_{\epsilon>0}\gffcf{\left| \prod_{j=1}^n \:e^{ i \alpha_j\varphi_\epsilon}\:(f_j)\right|}\leq C^n n^{\frac{\|\alpha\|^2}{8\pi}n}.
\end{align} 
Moreover, we have 
\begin{align}\label{eq:icmom}
\gffcf{\prod_{j=1}^n \:e^{i\alpha_j\varphi}\:(f_j)}=\int_{\Omega^n}d^{2n}x \prod_{j=1}^n \left(f_j(x_j) e^{-\frac{\alpha_j^2}{2}g_\Omega(x_j,x_j)}\right) e^{-\sum_{1\leq k<l\leq n}\alpha_k\alpha_l G_\Omega(x_k,x_l)}.
\end{align}
\end{proposition}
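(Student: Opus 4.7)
The plan is to derive the bound (1) from a second-moment computation combined with Cauchy--Schwarz, and then use (1) to pass to the $\epsilon\to 0$ limit in the explicit integral representation obtained along the way, yielding (2).

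For (1), I would first bound the first absolute moment by the square root of the second moment and then compute the second moment explicitly. Writing the modulus squared as a product with its complex conjugate, one gets a $2n$-fold integral in which the charges are $\alpha_1,\ldots,\alpha_n,-\alpha_1,\ldots,-\alpha_n$. Applying Fubini and the Gaussian moment-generating function together with the exact formula for $\gffcf{\varphi_\epsilon(z)^2}$ from Lemma~\ref{le:varphiecov}(2), one finds that the $\epsilon$-dependent Wick prefactors $e^{\gamma^2c_\rho/2}\epsilon^{-\gamma^2/(4\pi)}$ are designed precisely to cancel the diagonal divergence, leaving
\[
\gffcf{\left|\prod_j \:e^{i\alpha_j\varphi_\epsilon}\:(f_j)\right|^2} = \int_{\Omega^{2n}}\prod_l g_l(z_l)\,e^{-\gamma_l^2 g_\Omega(z_l,z_l)/2}\exp\left(-\sum_{l<m}\gamma_l\gamma_m\gffcf{\varphi_\epsilon(z_l)\varphi_\epsilon(z_m)}\right)dz,
\]
with $g_l$'s drawn from the $f_j$'s and their complex conjugates, and $\gamma_l\in\{\pm\alpha_j\}$. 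The Onsager inequality of Lemma~\ref{le:ons}(1) then bounds the exponential uniformly in $\epsilon$ by $C^{2n}\prod_l\min_{k\neq l}|z_l-z_k|^{-\|\alpha\|^2/(4\pi)}$.

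The main obstacle is to bound the resulting Coulomb-gas-type integral
\[
I_N:=\int_{K^N}\prod_{l=1}^N\min_{k\neq l}|z_l-z_k|^{-a}\,dz, \qquad a=\|\alpha\|^2/(4\pi)<1,\ K=\bigcup_j\mathrm{supp}(f_j),
\]
by $C^N(N!)^{a/2}$, which, by Stirling, is $C^N N^{Na/2}$, and after setting $N=2n$ and taking a square root yields the claimed $C^n n^{\|\alpha\|^2 n/(8\pi)}$. This bound follows from the graphical enumeration originally due to \cite{GP}: for each configuration, select a nearest-neighbor map $\sigma\colon\{1,\ldots,N\}\to\{1,\ldots,N\}$ with $\sigma(l)\neq l$ so that $\min_{k\neq l}|z_l-z_k|=|z_l-z_{\sigma(l)}|$ on the corresponding cell of a partition of $K^N$, and decompose the integral accordingly. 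A planar-geometry input (kissing-number-type bound) controls the in-degree of $\sigma$ by an absolute constant, limiting the number of admissible combinatorial types. Each factor $|z_l-z_{\sigma(l)}|^{-a}$ is locally integrable in $\R^2$ because $a<1<2$, and organizing the integrations along the functional-graph structure of $\sigma$ produces the claimed factorial-type bound. This combinatorial--geometric bookkeeping is the technically delicate part of the proof.

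For (2), the same Gaussian computation without the absolute value gives
\[
\gffcf{\prod_{j=1}^n\:e^{i\alpha_j\varphi_\epsilon}\:(f_j)} = \int_{\Omega^n}\prod_j f_j(x_j)e^{-\alpha_j^2 g_\Omega(x_j,x_j)/2}\exp\left(-\sum_{j<k}\alpha_j\alpha_k\gffcf{\varphi_\epsilon(x_j)\varphi_\epsilon(x_k)}\right)dx.
\]
By Lemma~\ref{le:varphiecov}(1) the inner covariance equals $G_\Omega(x_j,x_k)$ once $\epsilon$ is smaller than one third the minimal separation, so the integrand converges pointwise a.e.\ to the target of \eqref{eq:icmom}, while Onsager together with the integral bound above supplies an $\epsilon$-uniform integrable dominator; dominated convergence then identifies the $\epsilon\to 0$ limit of the left-hand side with the right-hand side of \eqref{eq:icmom}. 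On the other hand, Lemma~\ref{le:wickconv} gives $L^2$-convergence $\:e^{i\alpha_j\varphi_\epsilon}\:(f_j)\to\:e^{i\alpha_j\varphi}\:(f_j)$, hence convergence in probability of the $n$-fold product, and the uniform $L^2$-bound on the product (which is precisely (1) applied with $2n$ factors, via Cauchy--Schwarz) yields uniform integrability and thereby convergence of expectations. Matching the two identifications of the limit yields \eqref{eq:icmom}.
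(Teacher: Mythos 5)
Your overall architecture coincides with the paper's: Cauchy--Schwarz to reduce to the second moment, the exact Gaussian computation with the Wick prefactors cancelling the diagonal divergence, the Onsager inequality of Lemma~\ref{le:ons}, and for \eqref{eq:icmom} the combination of uniform integrability (Vitali) with dominated convergence under the integral. Those parts are fine.

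The gap is in the central estimate $I_N\leq C^N(N!)^{a/2}$, which you assert but justify by a mechanism that does not produce it. You argue that a kissing-number bound on the in-degree of the nearest-neighbour map limits the number of admissible combinatorial types, and that local integrability of each factor $|z_l-z_{\sigma(l)}|^{-a}$ then closes the argument. But the number of admissible nearest-neighbour graphs is \emph{not} reduced to $C^N$ by the in-degree bound: it is of order $C^N(N-k)!\sim C^N N!$ (this is exactly Lemma~\ref{le:nnnum}; already permutations have in-degree one and there are $N!$ of them). If each cell of the decomposition only contributed a factor $C^N$ from local integrability over a fixed bounded set, summing over cells would give $C^N N!$, which is far too large -- you need $(N!)^{a/2}$ with $a/2<1/2$. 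The missing ingredient is the geometric packing constraint: the balls centred at the $z_l$ with radii (half) the nearest-neighbour distances are pairwise disjoint and contained in a fixed bounded region, so $\sum_l |z_l-z_{\sigma(l)}|^2\leq C$ uniformly. After the tree-structured change of variables this turns the product of $N$ one-variable integrals into a single constrained (Dirichlet-type) integral whose value is $C^N/\Gamma\bigl(1+N(1-\tfrac{a}{2})+O(N)\bigr)$ (Lemma~\ref{le:intest}); it is only the cancellation between this reciprocal Gamma factor and the $(N-k)!$ from the graph count that yields $(N!)^{a/2}$. A secondary point: after the change of variables the two-loop variables carry exponent $-2a=-\|\alpha\|^2/(2\pi)$, so the integrability requirement is $2a<2$, i.e.\ $\|\alpha\|^2<4\pi$; your condition ``$a<1<2$'' conflates the condition you actually need ($a<1$) with the irrelevant one ($a<2$).
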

\begin{remark}\label{re:vitali}
It will feature in our proof as well, but as we refer to this type of argument elsewhere as well, we emphasize here that \eqref{eq:gpest} upgrades the convergence in $L^2(\P)$ provided by Lemma \ref{le:wickconv} to convergence in $L^p(\P)$ for any $p\in[1,\infty)$. More precisely, \eqref{eq:gpest} implies that as a function of $\epsilon$, $\:e^{i\alpha\varphi_\epsilon}\:(f)$ is bounded in $L^p(\P)$ for any $p\in[1,\infty)$. In particular (with an application of Markov's inequality), $|\:e^{i\alpha\varphi_\epsilon}\:(f)|^p$ is uniformly integrable for any $p\in [1,\infty)$. Combining this fact with Lemma \ref{le:wickconv}, we can apply Vitali's convergence theorem to deduce that $\:e^{i\alpha\varphi_\epsilon}\:(f)\to \:e^{i\alpha\varphi}\:(f)$ in $L^p(\P)$ for any $p\in[1,\infty)$. 
\end{remark}

For our next two results, we need to estimate integrals involving $e^{-\sum_{1\leq j<k\leq n}\alpha_j\alpha_k G_\Omega(x_j,x_k)}$, but now with one or two of the variables not being integrated over. The following two results are the fundamental tools we use in the analysis of the sine-Gordon OPEs. We will need the notion of the Lipschitz norm of a Lipschitz continuous function. For $K\subset \Omega$ compact and $f:K\to \C$ Lipschitz continuous, we write 
\begin{equation*}
\|f\|_{\mathrm{Lip}(K)}:=\sup_{x\in K}|f(x)|+\sup_{\substack{x,y\in K:\\ x\neq y}}\frac{|f(x)-f(y)|}{|x-y|}.
\end{equation*}
\begin{proposition}\label{pr:mombd2}
For each $n\geq 2$, $\alpha_1,...,\alpha_n\in (-\sqrt{4\pi},\sqrt{4\pi})$, and $f_2,...,f_n\in C_c^\infty(\Omega)$ (possibly complex valued), let $G_n=G_n^{(\alpha,f)}:\Omega \to \C$, 
\begin{equation*}
G_n(x_1)=\int_{\Omega^{n-1}}d^2 x_2\cdots d^2 x_n \prod_{j=2}^n (f_j(x_j)) e^{-\sum_{1\leq k<l\leq n}\alpha_k \alpha_l G_\Omega(x_k,x_l)}.
\end{equation*}
Then for each $K\subset \Omega$ compact, $G_n$ is Lipschitz continuous on $K$ and there exists a constant $C$ depending only on $K$, $\Omega$, $\|\alpha\|$, $\bigcup_{i=2}^n \mathrm{supp}(f_i)$, and $\max_i \|f_i\|_{\mathrm{Lip}(\Omega)}$ such that 
\begin{equation*}
\|G_n\|_{\mathrm{Lip}(K)}\leq C^n n^{\frac{\|\alpha\|^2}{8\pi}n}.
\end{equation*}
\end{proposition}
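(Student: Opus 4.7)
The plan is to bound $\|G_n\|_{\mathrm{Lip}(K)}$ by separately estimating $\|G_n\|_{L^\infty(K)}$ and the complex derivatives $\|\partial_{x_1} G_n\|_{L^\infty(K')}$, $\|\bar\partial_{x_1} G_n\|_{L^\infty(K')}$ on a slightly enlarged compact $K' \subset \Omega$ containing the short line segments between nearby points of $K$; the Lipschitz estimate then follows from the fundamental theorem of calculus (covering $K$ by convex pieces in $K'$ if necessary). The supremum bound $\|G_n\|_{L^\infty(K')} \leq C^n n^{\|\alpha\|^2 n/(8\pi)}$ comes immediately from rerunning the proof of Proposition \ref{pr:mombd1} with $x_1 \in K'$ frozen: Lemma \ref{le:ons}(2) dominates the integrand pointwise by
\begin{align*}
e^{cn}\prod_{j=1}^n\Big(\min_{l\neq j}|x_j - x_l|\Big)^{-\alpha_j^2/(4\pi)},
\end{align*}
after which the nearest-neighbor graphical enumeration of \cite{GP} bounds the integral over $x_2, \ldots, x_n$ by the desired expression.

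For the derivative, the same Onsager-type domination supplies an integrable majorant, justifying differentiation under the integral sign:
\begin{align*}
\partial_{x_1} G_n(x_1) = -\alpha_1 \sum_{k=2}^n \alpha_k \int_{\Omega^{n-1}}\Big(\prod_{j=2}^n f_j(x_j)\Big)\, \partial_{x_1} G_\Omega(x_1, x_k)\, e^{-\sum_{j < l} \alpha_j \alpha_l G_\Omega(x_j, x_l)}\, d^2 x_2 \cdots d^2 x_n,
\end{align*}
with an analogous formula for $\bar\partial_{x_1}$. From the decomposition \eqref{eq:gomega} and smoothness of $g_\Omega$, one has $|\partial_{x_1} G_\Omega(x_1, x_k)| \leq C\bigl(1 + |x_1 - x_k|^{-1}\bigr)$ uniformly in $x_1 \in K'$ and $x_k \in \bigcup_{i \geq 2}\mathrm{supp}(f_i)$, with $C$ depending only on $K$, $\Omega$, and the supports. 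The bounded part of $\partial_{x_1} G_\Omega$ contributes, after summing over $k$, at most $O(n\|\alpha\|^2)\cdot \|G_n\|_{L^\infty(K')}$, which is already under control.

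The new ingredient is the singular piece. For each $k$, applying Lemma \ref{le:ons}(2) once more, one must bound
\begin{align*}
\int_{\Omega^{n-1}}\prod_{j \geq 2}|f_j(x_j)|\cdot \frac{1}{|x_1-x_k|}\prod_{j=1}^n\Big(\min_{l\neq j}|x_j - x_l|\Big)^{-\alpha_j^2/(4\pi)}\, d^2x_2 \cdots d^2x_n.
\end{align*}
The dangerous regime is where $x_k$ approaches $x_1$ \emph{and} $x_1$ realizes the nearest-neighbor minimum for $x_k$: there the integrand behaves like $|x_1 - x_k|^{-1 - \alpha_k^2/(4\pi)}$, which is integrable in two dimensions precisely because $\alpha_k^2/(4\pi) < 1$. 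I would then run the graphical enumeration of \cite{GP} with the edge emanating from $x_k$ to its nearest neighbor $x_1$ carrying the enhanced weight $\alpha_k^2/(4\pi) + 1$ while all other edges carry their usual weights; the tree-iterated one-variable integrations proceed as in Proposition \ref{pr:mombd1}, producing a bound of order $C^n n^{\|\alpha\|^2 n/(8\pi)}$, and the sum over $k=2,\ldots,n$ contributes only an extra $n$ absorbed into $C^n$.

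The main obstacle is verifying that this modified enumeration yields only an $n$-independent local factor from the enhanced vertex and does not propagate additional growth through the tree structure. Concretely, one needs to arrange the iterative ``leaf-peeling'' in the proof of Proposition \ref{pr:mombd1} so that $x_k$, the root of the enhanced edge, is integrated out last (after all its descendants have been peeled off); the enhanced integration is then performed locally near $x_1$ and contributes only a constant depending on $\|\alpha\|$. The dependence on $\max_j \|f_j\|_{\mathrm{Lip}(\Omega)}$ is automatic, since the proof in fact uses only $\max_j \|f_j\|_{L^\infty(\Omega)}$, which is bounded by the Lipschitz norm.
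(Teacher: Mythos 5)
Your $L^\infty$ bound follows the paper's route (modulo the bookkeeping forced by the fact that $x_1$ is not integrated over, which requires reorienting the tree containing the vertex $1$ before the change of variables — a manageable point). The genuine gap is in the Lipschitz part. Differentiating under the integral sign hits the factor $e^{-\alpha_1\alpha_k G_\Omega(x_1,x_k)}$ and produces, near the diagonal, $|\partial_{x_1}G_\Omega(x_1,x_k)|\,e^{-\alpha_1\alpha_k G_\Omega(x_1,x_k)}\sim |x_1-x_k|^{-1+\frac{\alpha_1\alpha_k}{2\pi}}$. When $\alpha_1\alpha_k<0$ and $|\alpha_1\alpha_k|>2\pi$ (e.g.\ $\alpha_1=-\alpha_k=\sqrt{\beta}$ with $\beta\in(2\pi,4\pi)$, which is exactly the sine--Gordon application), this exponent lies below $-2$, so the differentiated integrand is not locally integrable in $\R^2$: there is no integrable majorant and differentiation under the integral sign is not justified. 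Your Onsager accounting misses this because you charge only the edge from $x_k$ with the weight $\frac{\alpha_k^2}{4\pi}+1$; in the configuration where $x_1$ and $x_k$ are \emph{mutual} nearest neighbours (a two-loop $\{1,k\}$), the $j=1$ factor of the Onsager product contributes an additional $|x_1-x_k|^{-\alpha_1^2/(4\pi)}$, so the local singularity is $|x_1-x_k|^{-1-(\alpha_1^2+\alpha_k^2)/(4\pi)}$, with exponent approaching $-3$. No reordering of the leaf-peeling can repair a divergent integral. A further symptom is your closing remark that only $\max_j\|f_j\|_{L^\infty}$ is needed: a Lipschitz bound depending only on sup-norms would extend by approximation to merely bounded densities, for which already $G_2$ (with $\alpha_2=-\alpha_1$ and $\alpha_1^2$ close to $4\pi$) is only H\"older of exponent $2-\frac{\alpha_1^2}{2\pi}$, not Lipschitz — so the constant must genuinely involve $\max_j\|f_j\|_{\mathrm{Lip}}$.

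The paper avoids all of this by translating $x_j\mapsto x_j+x_1$ for $j\ge 2$, after which the entire singular kernel $\prod_{j\ge 2}|x_j|^{-\alpha_1\alpha_j/(2\pi)}\prod_{2\le l<m}|x_l-x_m|^{-\alpha_l\alpha_m/(2\pi)}$ is independent of $x_1$; the $x_1$-dependence sits only in $f_j(x_j+x_1)$ and $g_\Omega(\cdot+x_1,\cdot+x_1)$, which are Lipschitz resp.\ smooth. The mean value theorem applied to this regular prefactor (this is precisely where $\max_j\|f_j\|_{\mathrm{Lip}}$ enters, at the cost of a harmless factor $n^2$) reduces the Lipschitz estimate to the same singular integral already controlled in the sup-norm step. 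You would need to incorporate this translation trick, or some equivalent device, for your argument to go through on the full range $\beta\in(0,4\pi)$.
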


Finally, we need an estimate for a similar integral where two variables are not integrated over.  
\begin{proposition}\label{pr:mombd3}
For each $n\geq 3$, $\alpha_1,...,\alpha_n\in (-\sqrt{4\pi},\sqrt{4\pi})$, and $f_3,...,f_n\in C_c^\infty(\Omega)$ (possibly complex valued), let $H_n:\{(x_1,x_2)\in \Omega^2: x_1\neq x_2\}\to \C$, 
\begin{equation*}
H_n(x_1,x_2)=\int_{\Omega^{n-2}}d^2 x_3\cdots d^2 x_n \prod_{j=3}^n (f_j(x_j)) e^{-\sum_{1\leq k<l\leq n}\alpha_k \alpha_l G_\Omega(x_k,x_l)}.
\end{equation*}
Then for each $K\subset \Omega$, there exists a constant $C$ depending only on $K$, $\Omega$, $\|\alpha\|$, $\bigcup_{i=3}^n \mathrm{supp}(f_i)$, and $\max_i \|f_i\|_{L^\infty(\Omega)}$ such that 
\begin{equation*}
|H_n(x_1,x_2)|\leq |x_1-x_2|^{-\frac{\|\alpha\|^2}{2\pi}} C^n n^{\frac{\|\alpha\|^2}{8\pi}n}.
\end{equation*}
for $x_1,x_2\in K$ with $x_1\neq x_2$.
\end{proposition}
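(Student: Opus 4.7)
The plan is to follow the Onsager-inequality and graphical-enumeration strategy underlying the proofs of Propositions \ref{pr:mombd1} and \ref{pr:mombd2}, modified to accommodate the two fixed points $x_1, x_2$. The singular factor $|x_1-x_2|^{-\|\alpha\|^2/(2\pi)}$ will arise from configurations in which $x_1$ and $x_2$ are each other's nearest neighbors in $\{x_1,\ldots,x_n\}$. First I would apply Lemma \ref{le:ons}.(2) to $(x_1,\ldots,x_n)$ on a compact $K' \subset \Omega$ containing $K \cup \bigcup_{j\geq 3}\mathrm{supp}(f_j)$, together with the elementary bound $b^{-\alpha_j^2/(4\pi)} \leq C_{\Omega,\alpha}\, b^{-a}$ for $b \in (0,\mathrm{diam}(\Omega)]$ and $a := \|\alpha\|^2/(4\pi) < 1$, giving
\begin{align*}
e^{-\sum_{k<l}\alpha_k\alpha_l G_\Omega(x_k,x_l)} \leq C^n \prod_{j=1}^n \bigl(\min_{k\neq j}|x_k-x_j|\bigr)^{-a}.
\end{align*}

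Next I would isolate the two factors corresponding to $j = 1$ and $j = 2$ by a case analysis over which point realizes the minimum. For $j=1$ either $\min_{k\neq 1}|x_k-x_1| = |x_1-x_2|$ (directly contributing the target factor $|x_1-x_2|^{-a}$), or the minimum is realized by some $x_{k_1}$ with $k_1\geq 3$, in which case $|x_{k_1}-x_1| \leq |x_1-x_2|$. Doing the same for $j=2$ partitions the integration domain into four cases; the ``aligned--aligned'' case directly yields the target $|x_1-x_2|^{-2a}$, whereas in each of the remaining cases the missing power of $|x_1-x_2|^{-1}$ can be supplied at the cost of a constant using $|x_1-x_2|\leq \mathrm{diam}(\Omega)$, with the extra factors $|x_{k_i}-x_i|^{-a}$ on the integration variables being locally integrable (since $a<1<2$). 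After this step, the factor $|x_1-x_2|^{-2a}$ is extracted and what remains is an integral of essentially the same type as that appearing in the proof of Proposition \ref{pr:mombd1}, possibly multiplied by one or two additional bounded singularities of the form $|x_{k_i}-x_i|^{-a}$ concentrated on integration variables.

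The residual integral is then handled by applying the graphical enumeration argument of \cite{GP}, as in the proof of Proposition \ref{pr:mombd1}: one rewrites $(\min_{k\neq j}|x_k-x_j|)^{-a} = \max_{k\neq j}|x_k-x_j|^{-a}$, organizes the resulting bound by the functional graph of nearest-neighbor assignments $j \mapsto \sigma(j)$, and uses the integrability of $|x-y|^{-a}$ on $\Omega$ (valid since $a < 2$) to integrate variable-by-variable, yielding the growth factor $C^n n^{\|\alpha\|^2 n/(8\pi)}$. The main obstacle I expect will be the bookkeeping in the isolation step: showing that the ``mixed'' and ``off-diagonal'' cases of the case analysis for $j \in \{1,2\}$ can all be absorbed cleanly into $|x_1-x_2|^{-2a}$ times an integral to which the \cite{GP}-type enumeration applies without degrading the growth exponent $\|\alpha\|^2 n/(8\pi)$.
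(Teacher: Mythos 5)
Your overall strategy --- Onsager inequality, then a nearest-neighbour graph decomposition, with the singular prefactor traced to configurations where $x_1$ and $x_2$ are (near-)mutual nearest neighbours --- is the same as the paper's, and your observation that the prefactor $|x_1-x_2|^{-\|\alpha\|^2/2\pi}$ can be inserted for free in the non-aligned cases (since $|x_1-x_2|\leq \mathrm{diam}(\Omega)$) is correct. The gap is in the claim that, after isolating the $j=1,2$ factors, ``what remains is an integral of essentially the same type as that appearing in the proof of Proposition \ref{pr:mombd1}.'' It is not: writing $a=\|\alpha\|^2/4\pi$, the fixed points $x_1,x_2$ still enter the factors $(\min_{k\neq j}|x_k-x_j|)^{-a}$ for every $j\geq 3$, because an integration variable may have $x_1$ or $x_2$ as its nearest neighbour. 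The residual integral therefore retains the full two-marked-point structure of the original: one must still run the forest decomposition with $1$ and $2$ as distinguished vertices, reorient the trees containing them so that the change of variables remains a bijection (as in Proposition \ref{pr:mombd2}), and only then invoke Lemma \ref{le:intest}. Note also that a literal variable-by-variable integration over each fixed graph costs $C^n$ per graph, but by Lemma \ref{le:nnnum} there are of order $C^n n!$ graphs, which overshoots the claimed growth $n^{\|\alpha\|^2 n/8\pi}$; the correct rate only emerges from the disjoint-ball constraint and the $1/\Gamma$ factor of Lemma \ref{le:intest} cancelling against the forest count, so this machinery cannot be bypassed in the residual step.

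The assertion that the extra factors $|x_{k_i}-x_i|^{-a}$ are harmless ``since $a<2$'' also hides the hardest configuration. If $k_1=k_2=p$ and $x_p$ forms a two-loop with $x_2$ (so that the nearest neighbour of each of $x_1$, $x_2$ is $x_p$ and the nearest neighbour of $x_p$ is $x_2$), the integration variable $x_p$ accumulates the power $|x_2-x_p|^{-2a}|x_1-x_p|^{-a}$; estimating $|x_1-x_p|^{-a}$ by $|x_2-x_p|^{-a}$ or integrating directly leads to exponents up to $3a=3\|\alpha\|^2/4\pi$ on a single two-dimensional difference, which exceeds $2$ once $\|\alpha\|^2>8\pi/3$, so local integrability fails. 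The paper's proof treats exactly this in case iv)c via the triangle inequality $|x_1-x_2|\leq 2|x_1-x_{F(1)}|$, and the adjacent configurations in cases iv)a, iv)b, iv)d via monotonicity of nearest-neighbour distances along tree paths and an elementary product-to-sum splitting. Your proposal identifies the right singular mechanism but leaves this case analysis --- which is the core of the proof --- asserted rather than carried out.
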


Before going into the proofs, we record some of the basic notions and estimates we will need in the proofs. 

\subsection{Key notions and estimates needed for the moment bounds}\label{sec:tools}

The first step in our proofs will typically be to make use of Onsager's inequality, Lemma \ref{le:ons}, and deal with integrals involving 
\begin{equation*}
\prod_{j=1}^n \left(\min_{k\neq j}|x_k-x_j|\right)^{-\frac{\alpha_j^2}{4\pi}}.
\end{equation*} 
Given $\Omega$‚ we can always find some $R>0$ such that $\Omega\subset B(0,R)$, and in the relevant integrals, we will typically make use of this bound.

The minima here make these integrals hard to estimate directly, and the way we deal with this issue is to decompose the integration region in a way that keeps track of the structure of the relative distances of the points. It turns out that this decomposition is most convenient to describe in terms of suitable graphs associated with \emph{nearest neighbor maps}. We now define these maps and review the basic facts about them that we need. 

Given a point configuration $x=(x_1,...,x_{n})\in B(0,R)^{n}$, we define a function $F_x:\{1,...,n\}\to \{1,...,n\}$ by declaring that $F_x(i)=j$ if $x_j$ is the closest point to $x_i$. If there are several points $x_{j_1},...,x_{j_l}$ at the minimal distance from $x_i$, we define $F_x(i)$ to be the smallest of these indices.

\subsubsection{Nearest neighbor maps and their associated graphs}

In our proofs, we will decompose the integration region according to sets where different nearest neighbor mappings are realized, so we will need to sum over the different possibilities. For keeping track of the combinatorics, it is convenient to identify the nearest neighbor mapping with a certain directed graph. This directed graph has as its vertex set $\{1,...,n\}$ and it has edges from $i$ to $F_x(i)$. 

Not all directed graphs are possible since the nearest neighbor structure imposes constraints on the graph. To illustrate this, suppose that such a graph contains edges $(i \to j), (j \to k)$, $i \neq k$. This means that $x_k$ is either strictly closer to $x_j$ than $x_i$, or they are equally close and $k < i$. Now, if the graph contained a loop $(i_1 \to i_2), \ldots, (i_\ell \to i_1)$ of $\ell \geq 3$ edges then either there is a pair of consecutive edges corresponding to the ``strictly closer'' case in the loop, or $i_1 > i_3 > i_5 > \ldots$ and $i_2> i_4 > \ldots$. Chaining comparisons around the loop, one would obtain either that $x_{i_2}$ is strictly closer to $x_{i_1}$ than $x_{i_2}$, or $i_1 > i_1$, both a contradiction. On the other hand, by definition, every vertex has out degree one. It follows that one-directional edges of the graph of $F_x$ form trees rooted at, and directed toward, the vertices of two-loops. See for example Figure \ref{fig:forest}.  We call such graphs labelled directed two-loop rooted forests, and write $\mathcal F_2^{(n)}$ for their set.

Conversely, given such a graph $F\in \mathcal F_2^{(n)}$, there can be at most one nearest neighbor map $F_x$ corresponding to $F$. In other words,  the mapping from nearest-neighbour maps $F_x$ to labelled directed two-loop-rooted forests is an injection. Since we are looking for an upper bound, it will be fine to upper bound combinatorial factors by summing over all of $\mathcal F_2^{(n)}$. When we want to emphasize the number of connected components in the graph, we write $\mathcal F_{2,k}^{(n)}$ for the subset of $\mathcal F_2^{(n)}$ consisting of graphs with $k$ connected components (equivalently, $k$ two-loops). Note that always $k\leq n/2$. For an illustration of such a graph, see Figure \ref{fig:forest}.

\begin{figure}
\begin{center}
\begin{tikzpicture}

    \draw (-1,0) circle (0.2);
    \node[label={$1$}] at (-1,0) {};
    \draw (1,0) circle (0.2);
    \node[label={$2$}] at (1,0) {};
    \draw (-3,0) circle (0.2);
    \node[label={$3$}] at (-3,0) {};
    \draw (-5,1) circle (0.2);
    \node[label={$4$}] at (-5,1) {};
    \draw (-5,-1) circle (0.2);
    \node[label={$5$}] at (-5,-1) {};
    \draw (3,0) circle (0.2);
    \node[label={$6$}] at (3,0) {};
    \draw (5,0) circle (0.2);
    \node[label={$7$}] at (5,0) {};    

   \draw (-1,-3) circle (0.2);
    \node[label={$8$}] at (-1,-3) {};
   \draw (1,-3) circle (0.2);
    \node[label={$9$}] at (1,-3) {}; 
    \draw (-3,-3) circle (0.2);
    \node[label={$10$}] at (-3,-3) {};
    \draw (3,-2) circle (0.2);
    \node[label={$11$}] at (3,-2) {};
    \draw (3,-4) circle (0.2);
    \node[label={$12$}] at (3,-4) {};
    \draw (5,-2) circle (0.2);
    \node[label={$13$}] at (5,-2) {};

    \draw[-{Latex[length=3mm,width=3mm]},very thick] (-0.85,0.12) to[out=30, in=150] (0.85,0.12);
    \draw[-{Latex[length=3mm,width=3mm]},very thick] (0.85,-0.12) to[out=210, in=330] (-0.85,-0.12);
    \draw[-{Latex[length=3mm,width=3mm]},very thick] (-2.8,0)--(-1.2,0);
    \draw[-{Latex[length=3mm,width=3mm]},very thick] (-4.82,0.95)--(-3.2,0);
    \draw[-{Latex[length=3mm,width=3mm]},very thick] (-4.82,-0.95)--(-3.2,0);
    \draw[-{Latex[length=3mm,width=3mm]},very thick] (2.8,0)--(1.2,0);
    \draw[-{Latex[length=3mm,width=3mm]},very thick] (4.8,0)--(3.2,0);
    
    \draw[-{Latex[length=3mm,width=3mm]},very thick] (-0.85,-2.88) to[out=30, in=150] (0.85,-2.88);
    \draw[-{Latex[length=3mm,width=3mm]},very thick] (0.85,-3.12) to[out=210, in=330] (-0.85,-3.12);
    \draw[-{Latex[length=3mm,width=3mm]},very thick] (-2.8,-3)--(-1.2,-3);

    \draw[-{Latex[length=3mm,width=3mm]},very thick] (2.82,-2.05)--(1.2,-3);
    \draw[-{Latex[length=3mm,width=3mm]},very thick] (2.82,-3.95)--(1.2,-3);
    \draw[-{Latex[length=3mm,width=3mm]},very thick] (4.82,-2)--(3.18,-2);
\end{tikzpicture} 
\end{center}
\caption{An illustration of an element of $\mathcal F_{2,2}^{(13)}$.}
\label{fig:forest}
\end{figure}
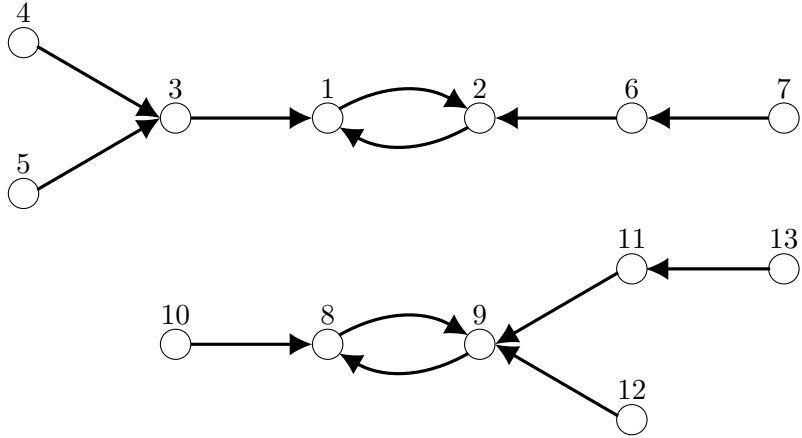

One of the things that makes this graphical decomposition of the integration region tractable is that we can estimate the number of such graphs. The precise fact we need, that has been proven e.g. in \cite[Lemma 3.10]{JSW} and \cite{GP} is recorded in the following lemma. For completeness, we provide a proof in Appendix \ref{app:tools}.
\begin{lemma}\label{le:nnnum}
For $k\leq n/2$, there exists an absolute constant $C>0$ such that 
\begin{equation*}
|\mathcal F_{2,k}^{(n)}|\leq C^n (n-k)!.
\end{equation*}
\end{lemma}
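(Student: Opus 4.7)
The plan is to count $\mathcal F_{2,k}^{(n)}$ exactly via its structural decomposition and then reduce to a Stirling estimate.

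First I would set up the count. Each element of $\mathcal F_{2,k}^{(n)}$ is determined by three independent pieces of data: (i) the $2k$-element subset of vertices lying on the $k$ two-loops; (ii) a perfect matching of these $2k$ vertices into $k$ two-loops; (iii) a rooted labeled forest on all $n$ vertices whose root set is precisely those $2k$ cyclic vertices and whose edges are directed toward the roots. Steps (i) and (ii) contribute $\binom{n}{2k}(2k-1)!! = n!/((n-2k)!\,2^k k!)$. For step (iii), I would invoke the classical generalized Cayley formula: the number of rooted labeled forests on $\{1,\dots,n\}$ with a specified root set of size $m$ and edges directed toward the roots equals $m\,n^{n-m-1}$ (for $n>m$). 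Applied with $m = 2k$, this yields
\[
|\mathcal F_{2,k}^{(n)}| = \frac{n!}{(n-2k)!\, 2^k k!} \cdot 2k \cdot n^{n-2k-1}
\]
when $n>2k$; the boundary case $n=2k$ gives simply $(2k-1)!!$.

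Next I would bound each factor crudely: $\binom{n}{2k} \leq 2^n$; $(2k-1)!! \leq (2k)^k \leq n^k$; and $2k \leq n$. Collecting these gives
\[
|\mathcal F_{2,k}^{(n)}| \leq 2^n \cdot n^k \cdot n \cdot n^{n-2k-1} = 2^n\, n^{n-k},
\]
and the same shape of bound covers the boundary case $n=2k$ directly, since $(2k-1)!! \leq (2k)^k = n^{n-k}$.

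The final step is to absorb $n^{n-k}$ into $(n-k)!$ via Stirling's inequality $(n-k)! \geq ((n-k)/e)^{n-k}$:
\[
\frac{n^{n-k}}{(n-k)!} \leq \left(\frac{e\,n}{n-k}\right)^{n-k} \leq (2e)^{n-k} \leq (2e)^n,
\]
where the middle step uses the hypothesis $k \leq n/2$, which forces $n/(n-k) \leq 2$. Combining everything yields $|\mathcal F_{2,k}^{(n)}| \leq (4e)^n (n-k)!$, establishing the lemma with, for instance, $C = 4e$. The only non-elementary ingredient is the generalized Cayley formula in step (iii); the rest is elementary bookkeeping and a single Stirling estimate, so there is no real obstacle beyond quoting the right combinatorial identity.
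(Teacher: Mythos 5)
Your proof is correct, and it arrives at exactly the same exact formula
\begin{equation*}
|\mathcal F_{2,k}^{(n)}|=\frac{n!}{(n-2k)!\,2^k k!}\cdot 2k\,n^{n-2k-1}
\end{equation*}
as Proposition \ref{thm:LDGoTLRTs} in the paper, followed by the same kind of Stirling reduction. The only genuine difference is how the forest count is obtained: the paper decomposes an element of $\mathcal F_{2,k}^{(n)}$ into $2k$ labelled rooted trees paired by the two-loops and then computes the number of rooted labelled trees via the generating-function identity $T(x)=xe^{T(x)}$ and the Lagrange--B\"urmann inversion theorem, whereas you separate out the choice of the $2k$ cyclic vertices and their matching and then quote the generalized Cayley formula $m\,n^{n-m-1}$ for rooted forests with a prescribed root set of size $m$. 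The two are equivalent in content (the Cayley forest formula is itself typically proved by exactly the generating-function or Pr\"ufer-type arguments the paper carries out), so your route is shorter at the cost of importing a classical identity as a black box, while the paper's appendix makes the argument self-contained. Your final estimate, bounding $|\mathcal F_{2,k}^{(n)}|\leq 2^n n^{n-k}$ and comparing with $(n-k)!\geq ((n-k)/e)^{n-k}$ using $n/(n-k)\leq 2$, is slightly cleaner than the paper's version (which keeps the $(n-2k)!$ in the denominator and bounds $-s\log(s/n)\leq Cn$), and both handle the boundary case $n=2k$ correctly. No gaps.
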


\subsubsection{Estimating an integral}

After performing the decomposition according to nearest neighbor maps, we will encounter integrals of the form 
\begin{equation*}
\int_{U(R)}d^{2(l+m)}u \prod_{j=1}^l |u_j|^{-\frac{\alpha}{2\pi}}\prod_{k=l+1}^{l+m}|u_k|^{-\frac{\alpha}{4\pi}},
\end{equation*}
where 
\begin{equation}\label{eq:UR}
U(R)=\left\{(u_1,...,u_{l+m})\in B(0,2R)^{l+m}: \sum_{j=1}^{l+m}|u_j|^2< (2R)^2\right\}.
\end{equation}
The following lemma records the basic estimate we will need for these integrals. A similar estimate already appears in \cite{GP,JSW}, however, since we state and use the estimate in a slightly different form, we present a proof in Appendix \ref{app:tools}.
\begin{lemma}\label{le:intest}
For $\alpha\in(0,4\pi)$ and $R>0$, there exist a constant $C=C_{\alpha,R}>0$ such that 
\begin{equation*}
\int_{U(R)}d^{2(l+m)}u \prod_{j=1}^l |u_j|^{-\frac{\alpha}{2\pi}}\prod_{k=l+1}^{l+m}|u_k|^{-\frac{\alpha}{4\pi}}\leq C^{l+m}\frac{1}{\Gamma(1+(l+m)(1-\frac{\alpha}{8\pi})-l\frac{\alpha}{8\pi})}.
\end{equation*}
\end{lemma}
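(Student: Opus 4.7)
The plan is to reduce this multi-dimensional integral to a classical Dirichlet-type integral via polar coordinates and the substitution $t_j=|u_j|^2$, then apply the beta/Dirichlet formula
\[
\int_{\substack{s_1,\ldots,s_N>0\\ s_1+\cdots+s_N<1}} \prod_{j=1}^N s_j^{a_j-1}\, ds_1\cdots ds_N = \frac{\prod_{j=1}^N \Gamma(a_j)}{\Gamma\bigl(1+\sum_{j=1}^N a_j\bigr)}.
\]
Note that the constraint $\sum_{j=1}^{l+m}|u_j|^2<(2R)^2$ already forces each $|u_j|<2R$, so the ball-wise restriction in the definition of $U(R)$ is redundant and the integration region is simply the Euclidean ball of radius $2R$ in $\R^{2(l+m)}$.

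First I would pass to polar coordinates $u_j=(r_j\cos\theta_j,r_j\sin\theta_j)$ for each $j$. The integrand is radial in each $u_j$, so the angular integrations simply contribute a factor of $(2\pi)^{l+m}$, leaving
\[
\int_{\substack{r_1,\ldots,r_{l+m}>0\\ r_1^2+\cdots+r_{l+m}^2<(2R)^2}} \prod_{j=1}^l r_j^{\,1-\alpha/(2\pi)} \prod_{k=l+1}^{l+m} r_k^{\,1-\alpha/(4\pi)}\, dr_1\cdots dr_{l+m}.
\]
Next I would substitute $t_j=r_j^2$, so that $r_j\,dr_j=\tfrac12 dt_j$ and the constraint becomes $\sum t_j<(2R)^2=4R^2$. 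This turns the integral into
\[
\tfrac{1}{2^{l+m}}\int_{\substack{t_j>0\\ \sum t_j<4R^2}} \prod_{j=1}^l t_j^{\,-\alpha/(4\pi)}\prod_{k=l+1}^{l+m} t_k^{\,-\alpha/(8\pi)}\, dt_1\cdots dt_{l+m}.
\]
A final rescaling $t_j=4R^2 s_j$ normalizes the simplex to $\sum s_j<1$ and pulls out an overall factor $(4R^2)^{l(1-\alpha/(4\pi))+m(1-\alpha/(8\pi))}$.

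Now I would apply the Dirichlet formula with $a_j=1-\alpha/(4\pi)$ for the first $l$ variables and $a_k=1-\alpha/(8\pi)$ for the remaining $m$. Since $\alpha\in(0,4\pi)$ we have $a_j\in(0,1)$ and $a_k\in(1/2,1)$, so each $\Gamma(a_j)$ is a finite positive constant. One readily checks that
\[
\sum_{j} a_j = l\Bigl(1-\tfrac{\alpha}{4\pi}\Bigr)+m\Bigl(1-\tfrac{\alpha}{8\pi}\Bigr)=(l+m)\Bigl(1-\tfrac{\alpha}{8\pi}\Bigr)-l\tfrac{\alpha}{8\pi},
\]
which matches the gamma argument appearing in the statement. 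Collecting the combinatorial prefactors $(2\pi)^{l+m}2^{-(l+m)}$, the scaling factor $(4R^2)^{\sum a_j}$ bounded by $(1+4R^2)^{l+m}$, and the powers $\Gamma(1-\alpha/(4\pi))^l\,\Gamma(1-\alpha/(8\pi))^m$, all of these factors are of the form $C_0^{l+m}$ for a constant $C_0=C_0(\alpha,R)$, yielding the claimed bound.

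I do not expect any genuine obstacle here: since $\alpha<4\pi$, every gamma function in the numerator of the Dirichlet formula is finite, and every constant pulled out of the integral factorizes over the coordinates and hence contributes only exponentially in $l+m$. The entire argument is just a careful book-keeping of the Dirichlet integral combined with two elementary substitutions.
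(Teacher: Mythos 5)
Your proof is correct and follows essentially the same route as the paper: polar coordinates, the substitution $t_j=r_j^2$, and the Dirichlet integral formula, with the only (cosmetic) difference being that the paper rescales by $R$ at the outset while you do so at the end. The exponent bookkeeping, the identification $l(1-\tfrac{\alpha}{4\pi})+m(1-\tfrac{\alpha}{8\pi})=(l+m)(1-\tfrac{\alpha}{8\pi})-l\tfrac{\alpha}{8\pi}$, and the finiteness of the numerator Gamma factors for $\alpha<4\pi$ all check out.
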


We now turn to the proofs of the moment bounds we need.

\subsection{Proofs of the moment bounds}\label{sec:mombdproof}

As suggested above, the proof of Proposition \ref{pr:mombd1} is a minor modification of one due to \cite{GP}, but it serves as a building block for the other two proofs so we carry it out carefully. 

\begin{proof}[Proof of Proposition \ref{pr:mombd1}]
Despite having introduced the main tools, the proof  is still slightly lengthy so we split it into steps.

\medskip

\underline{Step 1: Applying the Onsager inequality}.  We begin by noting that since for any fixed $\epsilon>0$, $|\:e^{i\alpha\varphi_\epsilon}\:(f)| \leq \epsilon^{-\frac{\alpha^2}{4\pi}} e^{\frac{\alpha^2}{2}c_\rho}\int_\Omega d^2x |f(x)|$ (deterministically), it is sufficient to prove the claim for  $0<\epsilon<\mathrm{dist}(\bigcup_j\mathrm{supp}(f_j),\partial \Omega)$. Moreover, since for any random variable $X$, $\E(|X|)\leq \sqrt{\E(|X|^{2})}$,  we have
\begin{align*}
\gffcf{\left| \prod_{j=1}^n \:e^{ i \alpha_j\varphi_\epsilon}\:(f_j)\right|}\leq \left(\gffcf{\prod_{j=1}^n \:e^{ i \alpha_j\varphi_\epsilon}\:(f_j)\prod_{j=1}^n \:e^{- i \alpha_j\varphi_\epsilon}\:(\overline{f_j})}\right)^{1/2}
\end{align*}

Fubini combined with \eqref{eq:gffcov2} and the fact that for centered jointly Gaussian random variables $V_1,...,V_n$, we have  $\E[\prod_{j=1}^n e^{iV_j}]=e^{-\frac{1}{2}\sum_{j,k=1}^n \E(V_jV_k)}$, shows that 
\begin{align}\label{eq:2nmom}
&\gffcf{\prod_{j=1}^n \:e^{ i \alpha_j\varphi_\epsilon}\:(f_j)\prod_{j=1}^n \:e^{- i \alpha_j\varphi_\epsilon}\:(\overline{f_j})}\\
\notag &\quad =\epsilon^{-\sum_{j=1}^n\frac{\alpha_j^2}{2\pi}} e^{\sum_{j=1}^n\alpha_j^2c_\rho }\int_{\Omega^{2n}}d^{2n}x d^{2n}y \prod_{j=1}^n \left[f_j(x_j)\overline{f_j(y_j)} e^{-\frac{\alpha_j^2}{2}\gffcf{\varphi_\epsilon(x_j)^2}-\frac{\alpha_j^2}{2}\gffcf{\varphi_\epsilon(y_j)^2}}\right]\\
\notag &\qquad \times e^{- \sum_{1\leq k<l\leq n}\alpha_k\alpha_l(\gffcf{\varphi_\epsilon(x_k)\varphi_\epsilon(x_l)}+\gffcf{\varphi_\epsilon(y_k)\varphi_\epsilon(y_l)})} e^{\sum_{k,l=1}^n\alpha_k\alpha_l \gffcf{\varphi_\epsilon(x_k)\varphi_\epsilon(y_l)}}\\
&\notag\quad =\int_{\Omega^{2n}}d^{2n}x d^{2n}y \prod_{j=1}^n \left[f_j(x_j)\overline{f_j(y_j)} e^{-\frac{\alpha_j^2}{2}g_\Omega(x_j,x_j)-\frac{\alpha_j^2}{2}g_\Omega(y_j,y_j)}\right]\\
\notag &\qquad \times e^{- \sum_{1\leq k<l\leq n}\alpha_k \alpha_l(\gffcf{\varphi_\epsilon(x_k)\varphi_\epsilon(x_l)}+\gffcf{\varphi_\epsilon(y_k)\varphi_\epsilon(y_l)})} e^{\sum_{k,l=1}^n \alpha_k\alpha_l \gffcf{\varphi_\epsilon(x_k)\varphi_\epsilon(y_l)}}.
\end{align}
Since the $f_j$ are compactly supported, we see that $\max_j \sup_{x\in \cup_k \mathrm{supp}(f_k)}e^{-\frac{\alpha_j^2}{2}g_\Omega(x,x)}$ is bounded. Let $R>0$ be such that $\Omega\subset B(0,R)$. Then applying Lemma \ref{le:ons} (item (1)), renaming $y_1,...,y_n$ to $x_{n+1},...,x_{2n}$, and defining $\alpha_{n+j}=-\alpha_j$, we find that for some constant $C$ depending only on $\beta$, $\|\alpha\|$, $\cup_j\mathrm{supp}(f_j)$ and $\max_i \|f_i\|_{L^\infty(\Omega)}$, we have 
\begin{align*}
\gffcf{\prod_{j=1}^n \:e^{ i \alpha_j\varphi_\epsilon}\:(f_j)\prod_{j=1}^n \:e^{- i \alpha_j\varphi_\epsilon}\:(\overline{f_j})}\leq C^{2n} \int_{B(0,R)^{2n}}d^{4n}x e^{\sum_{k=1}^{2n} \frac{\|\alpha\|^2}{4\pi}\log \frac{1}{\min_{j:j\neq k}|x_j-x_k|}}.
\end{align*}
By rescaling the integration variables, we find for some other constant $\widetilde C$ (still depending only on the allowed quantities)
\begin{align*}
\gffcf{\prod_{j=1}^n \:e^{ i \alpha_j\varphi_\epsilon}\:(f_j)\prod_{j=1}^n \:e^{- i \alpha_j\varphi_\epsilon}\:(\overline{f_j})}\leq \widetilde C^{2n} \int_{B(0,1)^{2n}}d^{4n}x e^{\sum_{k=1}^{2n} \frac{\|\alpha\|^2}{4\pi}\log \frac{1}{\frac{1}{2}\min_{j:j\neq k}|x_j-x_k|}}.
\end{align*}
Introducing the factor $\frac{1}{2}$ inside the logarithm, at the price of changing $\widetilde C$ slightly, will turn out to be convenient later on.

\medskip

\underline{Step 2: Using nearest neighbor maps for a change of variables}. We will now start using the tools from Section \ref{sec:tools}. Given a graph $F\in \mathcal F_2^{(2n)}$, we define a set $U_F\subset B(0,1)^{2n}$ by the condition $U_F=\{(x_{1},...,x_{2n})\in B(0,1)^{2n}: F_x=F\}$. We thus find 
\begin{align*}
\gffcf{\prod_{j=1}^n \:e^{ i \alpha_j\varphi_\epsilon}\:(f_j)\prod_{j=1}^n \:e^{- i \alpha_j\varphi_\epsilon}\:(\overline{f_j})}&\leq \widetilde C^{2n}\sum_{F\in \mathcal F_2^{(2n)}}\int_{U_F} d^{4n}x \prod_{j=1}^{2n}\left(\frac{1}{2}|x_j-x_{F(j)}|\right)^{- \frac{\|\alpha\|^2}{4\pi}}\\
&=\widetilde C^{2n}\sum_{k=1}^{n}\sum_{F\in \mathcal F_{2,k}^{(2n)}}\int_{U_F} d^{4n}x \prod_{j=1}^{2n}\left(\frac{1}{2}|x_j-x_{F(j)}|\right)^{-\frac{\|\alpha\|^2}{4\pi}}\notag
\end{align*}

To estimate the integral, we perform a change of variables that depends on the structure of the graph $F$. By renaming the integration variables, we can assume that $\{1,2\}, \{3,4\},...,\{2k-1,2k\}$ are the vertices of the two-loops. We define for $1\leq j\leq k$, $u_{2j-1}=\frac{1}{2}(x_{2j-1}-x_{2j})$ and $u_{2j}=\frac{1}{2}x_{2j}$. For the remaining vertices ($i>2k$), we define $u_i=\frac{1}{2}(x_i-x_{F(i)})$. Note that this linear change of variables is a bijection, since we can solve each of the variables $x_i$ in terms of the $u$-variables recursively by ``climbing up a tree''.

If we write $\widetilde U_F$ for the image of $U_F$ under this change of variables, and note that the absolute value of the Jacobian determinant of the change of variables is $4^{2n}$ (the linear change of variables is upper triangular in the basis corresponding to the order of the tree), we find that 
\begin{align}\label{eq:comp}
\int_{U_F} d^{4n}x \prod_{j=1}^{2n}\left(\frac{1}{2}|x_j-x_{F(j)}|\right)^{- \frac{\|\alpha\|^2}{4\pi}}=4^{2n}\int_{\widetilde U_F}d^{4n}u \prod_{j=1}^k |u_{2j-1}|^{-\frac{\|\alpha\|^2}{2\pi}} \prod_{j=2k+1}^{2n}|u_j|^{-\frac{\|\alpha\|^2}{4\pi}}.
\end{align}

\medskip

\underline{Step 3: Modifying the integration region}. Apart from the integration region, and number of integration variables, this is of the form of Lemma \ref{le:intest}. We now show that we can enlarge the integration region to be of suitable form. For this purpose, note first of all that $|u_j|<1$ for all $j$ since $x_j\in B(0,1)$ for all $j$. Moreover, we point out that the balls $B_j=\{y\in \R^2: |y-x_j|< |u_j|\}$, with $j=1,3,...,2k-1,2k+1,2k+2,...,2n$ are disjoint. If they were not disjoint, there would be a point $y$ such that $|y-x_j|<|u_j|$ and $|y-x_{j'}|<|u_{j'}|$ for some $j\neq j'$. This would mean that $|x_j-x_{j'}|< 2\max(|u_j|,|u_{j'}|)$. Thus for one of the points $x_j,x_{j'}$, the other point would be closer than its nearest neighbor, contradicting the definition of the nearest neighbor, so we see that the balls are disjoint.

 Note that $B_j\subset B(0,2)$ for all $j$ (since $|x_j|<1$ and $|u_j|<1$). As the balls are disjoint, this means the sum of the areas of the balls $B_j$ must be less than the area of $B(0,2)$, or in other words, 
\begin{equation*}
\sum_{j=1}^k |u_{2j-1}|^2+\sum_{j=2k+1}^{2n}|u_j|^2\leq 4.
\end{equation*}
We conclude that 
\begin{equation*}
\widetilde U_F\subset \left\{u\in (\R^2)^{2n}: \sum_{j=1}^k |u_{2j-1}|^2+\sum_{j=2k+1}^{2n}|u_j|^2\leq 4, |u_2|,|u_4|,...,|u_{2k}|\leq 1\right\}.
\end{equation*}

Integrating out the variables $u_2,u_4,...,u_{2k}$, we recover an integral of the form studied in Lemma \ref{le:intest} (recall \eqref{eq:UR} for the notation $U(1)$):
\begin{align*}
&\gffcf{\prod_{j=1}^n \:e^{ i \alpha_j\varphi_\epsilon}\:(f_j)\prod_{j=1}^n \:e^{- i \alpha_j\varphi_\epsilon}\:(\overline{f_j})}\\
&\quad \leq C^{2n} \sum_{k=1}^n \sum_{F\in \mathcal F_{2,k}^{(2n)}}\int_{U(1)}d^{2(2n-k)}v \prod_{j=1}^k |v_j|^{-\frac{\|\alpha\|^2}{2\pi}}\prod_{j=k+1}^{2n-k}|v_j|^{- \frac{\|\alpha\|^2}{4\pi}},
\end{align*}
where once again, $C$ is a constant depending only on the allowed parameters.
\medskip

\underline{Step 4: Concluding the proof of \eqref{eq:gpest}}. Applying Lemma \ref{le:nnnum} and Lemma \ref{le:intest}, we conclude that there exists a $C$ depending only on the allowed parameters such that uniformly in $\epsilon$
\begin{align*}
\gffcf{\prod_{j=1}^n \:e^{ i \alpha_j\varphi_\epsilon}\:(f_j)\prod_{j=1}^n \:e^{- i \alpha_j\varphi_\epsilon}\:(\overline{f_j})}&\leq C^{2n} \sum_{k=1}^n \frac{(2n-k)!}{\Gamma(1+(2n-k)(1-\frac{\|\alpha\|^2}{8\pi})-k \frac{\|\alpha\|^2}{8\pi})}\\
&=C^{2n} \sum_{k=1}^n \frac{(2n-k)!}{\Gamma(1+2n(1-\frac{\|\alpha\|^2}{8\pi})-k)}.
\end{align*}
By Stirling's approximation for the Gamma function, uniformly in $k\in \{1,...,n\}$, 
\begin{align*}
\log \frac{(2n-k)!}{\Gamma(1+2n(1-\frac{\|\alpha\|^2}{8\pi})-k)}&=(2n-k)\log (2n-k)-(2n-k)\\
&\quad \notag -\left(1+2n\left(1-\frac{\|\alpha\|^2}{8\pi}\right)-k\right)\log \left(1+2n\left(1-\frac{\|\alpha\|^2}{8\pi}\right)-k\right)\\
&\quad \notag +1+2n\left(1-\frac{\|\alpha\|^2}{8\pi}\right)-k+O(\log n)\\
& \notag=(2n-k)\log n-\left(2n\left(1-\frac{\|\alpha\|^2}{8\pi}\right)-k\right)\log n+O(n)\\
& \notag=n \frac{\|\alpha\|^2}{4\pi}\log n+O(n), 
\end{align*}
so by noting that the sum produces a factor of $n$, which we can get rid of by modifying $C$, we conclude that 
\begin{align*}
\gffcf{\prod_{j=1}^n \:e^{ i \alpha_j\varphi_\epsilon}\:(f_j)\prod_{j=1}^n \:e^{- i \alpha_j\varphi_\epsilon}\:(\overline{f_j})}\leq C^{2n} n^{\frac{\|\alpha\|^2}{4\pi}n}
\end{align*}
and 
\begin{align*}
\gffcf{\left|\prod_{j=1}^n \:e^{ i \alpha_j\varphi_\epsilon}\:(f_j)\right|}\leq C^{n} n^{\frac{\|\alpha\|^2}{8\pi}n}
\end{align*}
for some constant $C$ depending only on the allowed parameters. This concludes the proof of \eqref{eq:gpest}.

\medskip

\underline{Completing the proof  of \eqref{eq:icmom}.} Recall from Remark \ref{re:vitali} that \eqref{eq:gpest} implies that $\:e^{i\alpha_j\varphi_\epsilon}\:(f_j)\to \:e^{i\alpha_j\varphi}\:(f_j)$ in $L^p(\P)$ for each $p\in [1,\infty)$. In particular, for each $j$, $\:e^{i\alpha_j\varphi}\:(f_j)\in L^p(\P)$ for all $p\in [1,\infty)$. By a repeated application of Cauchy-Schwarz, this implies that the expectation in \eqref{eq:icmom} is finite and 
\begin{equation*}
\gffcf{\prod_{j=1}^n \:e^{i\alpha_j\varphi}\:(f_j)}=\lim_{\epsilon\to 0}\gffcf{\prod_{j=1}^n \:e^{i\alpha_j\varphi_\epsilon}\:(f_j)}.
\end{equation*}
With a similar argument as in \eqref{eq:2nmom}, we find that for small enough $\epsilon$
\begin{align*}
\gffcf{\prod_{j=1}^n \:e^{i\alpha_j\varphi_\epsilon}\:(f_j)}&=\int_{\Omega^n}d^{2n}x \prod_{j=1}^n \left(f_j(x_j) e^{-\frac{\alpha_j^2}{2}g_\Omega(x_j,x_j)}\right) e^{-\sum_{1\leq k<l\leq n}\alpha_k\alpha_l \gffcf{\varphi_\epsilon(x_k)\varphi_\epsilon(x_l)}}.
\end{align*}
We see that Lemma \ref{le:varphiecov} implies that the covariances $\gffcf{\varphi_\epsilon(x_k)\varphi_\epsilon(x_l)}$ converge to $G_\Omega(x_k,x_l)$ as $\epsilon\to 0$, which means that if we could take the $\epsilon\to 0$ limit inside of the integral, we would have proven \eqref{eq:icmom}. This we can do with the dominated convergence theorem and Lemma \ref{le:ons}: we have 
\begin{align*}
e^{-\sum_{1\leq k<l\leq n}\alpha_k\alpha_l \gffcf{\varphi_\epsilon(x_k)\varphi_\epsilon(x_l)}}\leq e^{cn} e^{\sum_{k=1}^n \frac{\|\alpha\|^2}{4\pi}\log \frac{1}{\min_{l: l\neq k}|x_k-x_l|}}.
\end{align*}
Repeating the proof of \eqref{eq:gpest}, we see this to be integrable. This concludes the proof of \eqref{eq:icmom}.
\end{proof}

\begin{proof}[Proof of Proposition \ref{pr:mombd2}]
Let us begin with bounding $\|G_n\|_{L^\infty(K)}$. As mentioned, the proof builds on that of Proposition \ref{pr:mombd1} so we split the proof here into the same steps. 

\medskip

\underline{Step 1: Applying the Onsager inequality}. Letting $R>0$ be such that $\Omega\subset B(0,R)$, we use Lemma \ref{le:ons}, (item (2)), to find that for some constant $C$ depending only on $\|\alpha\|$, $\bigcup_{i=2}^n \mathrm{supp}(f_i)$, and $\max_i\|f_i\|_{L^\infty(\Omega)}$ 
\begin{align*}
|G_n(x_1)|\leq C^n \int_{B(0,R)^{n-1}}d^2x_2\cdots d^2 x_n e^{\sum_{k=1}^n \frac{\|\alpha\|^2}{4\pi}\log \frac{1}{\frac{1}{2}\min_{j\neq k}|x_j-x_k|}}.
\end{align*}
Note that carrying out the scaling is not so convenient now since we still have dependence on $x_1$.

\medskip

\underline{Step 2: Using nearest neighbor maps for a change of variables}.  The idea here is similar to the proof of Proposition \ref{pr:mombd1}, but we need to be more careful, since we are not integrating over $x_1$. Given a graph $F\in \mathcal F_2^{(n)}$ and a point $x_1\in K$, we define $U_F(x_1)=\{(x_2,...,x_n)\in B(0,R)^{n-1}: F_{(x_1,...,x_n)}=F\}$.

For the change of variables, we need to consider two possibilities: the vertex $1$ (corresponding to $x_1$ which is not integrated over) can be either (i) a vertex of a two-loop or (ii) a non-root vertex of a tree. Let us consider the case (i) first. Again, we can assume that the other vertices of the two-loops are $x_2,...,x_{2k}$ (with say $x_2$ connected to $x_1$). We define then 
\begin{align*}
u_{2j-1}=\frac{1}{2}(x_{2j-1}-x_{2j}) \qquad \text{for} \qquad 1\leq j\leq k
\end{align*}
\begin{align*}
u_{2j}=\frac{1}{2}x_{2j} \qquad \text{for} \qquad 2\leq j\leq k
\end{align*}
and
\begin{align*}
u_j=\frac{1}{2}(x_j-x_{F(j)}) \qquad \text{for} \qquad j>2k.
\end{align*}
Note that $u_2$ is not defined. Again, all of the variables $x_j$ can be recovered from $x_1$ and the variables $u_i$ by ``climbing up the tree'', so this change of variables is a linear bijection.

\medskip

\begin{figure}
\begin{center}
\begin{tikzpicture}

    \draw (-1,0) circle (0.2);
    \node[label={$3$}] at (-1,0) {};
    \draw (1,0) circle (0.2);
    \node[label={$2$}] at (1,0) {};
    \draw (-3,0) circle (0.2);
    \node[label={$4$}] at (-3,0) {};
    \draw (-5,1) circle (0.2);
    \node[label={$5$}] at (-5,1) {};
    \draw (-5,-1) circle (0.2);
    \node[label={$6$}] at (-5,-1) {};
    \draw (3,0) circle (0.2);
    \node[label={$1$}] at (3,0) {};
    \draw (5,0) circle (0.2);
    \node[label={$7$}] at (5,0) {};

    \draw (-1,-3) circle (0.2);
    \node[label={$3$}] at (-1,-3) {};
    \draw (1,-3) circle (0.2);
    \node[label={$2$}] at (1,-3) {};
    \draw (-3,-3) circle (0.2);
    \node[label={$4$}] at (-3,-3) {};
    \draw (-5,-2) circle (0.2);
    \node[label={$5$}] at (-5,-2) {};
    \draw (-5,-4) circle (0.2);
    \node[label={$6$}] at (-5,-4) {};
    \draw (3,-3) circle (0.2);
    \node[label={$1$}] at (3,-3) {};
    \draw (5,-3) circle (0.2);
    \node[label={$7$}] at (5,-3) {};

    \draw[-{Latex[length=3mm,width=3mm]},very thick] (-0.85,0.12) to[out=30, in=150] (0.85,0.12);
    \draw[-{Latex[length=3mm,width=3mm]},very thick] (0.85,-0.12) to[out=210, in=330] (-0.85,-0.12);
    \draw[-{Latex[length=3mm,width=3mm]},very thick] (-2.8,0)--(-1.2,0);
    \draw[-{Latex[length=3mm,width=3mm]},very thick] (-4.82,0.95)--(-3.2,0);
    \draw[-{Latex[length=3mm,width=3mm]},very thick] (-4.82,-0.95)--(-3.2,0);
    \draw[-{Latex[length=3mm,width=3mm]},very thick] (2.8,0)--(1.2,0);
    \draw[-{Latex[length=3mm,width=3mm]},very thick] (4.8,0)--(3.2,0);

    \draw[-{Latex[length=3mm,width=3mm]},very thick] (-0.85,-2.88) to[out=30, in=150] (0.85,-2.88);
    \draw[-{Latex[length=3mm,width=3mm]},very thick] (-0.85,-3.12) to[out=330, in=210] (0.85,-3.12);
    \draw[-{Latex[length=3mm,width=3mm]},very thick] (-2.8,-3)--(-1.2,-3);
    \draw[-{Latex[length=3mm,width=3mm]},very thick] (-4.82,-2.05)--(-3.2,-3);
    \draw[-{Latex[length=3mm,width=3mm]},very thick] (-4.82,-3.95)--(-3.2,-3);
    \draw[-{Latex[length=3mm,width=3mm]},very thick] (1.2,-3)--(2.8,-3);
    \draw[-{Latex[length=3mm,width=3mm]},very thick] (4.8,-3)--(3.2,-3);

\end{tikzpicture} 
\end{center}
\caption{An illustration of reorienting the graph in case ii): the original orientation is above, and the reoriented graph is below.}
\label{fig:treemod}
\end{figure}
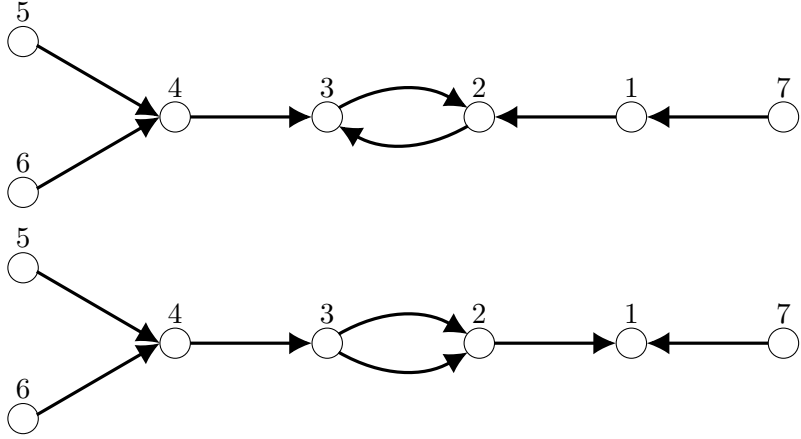

In case (ii), for the components not containing the vertex $1$, we proceed as in Step 2 of the proof of Proposition \ref{pr:mombd1}, but for the component containing the vertex $1$, we need to be more careful. We define a new orientation on this component by requiring that all edges flow toward the vertex $1$. See Figure \ref{fig:treemod} for an illustration of this reorienting. Let us write $\widetilde F$ for the function that takes the vertex $i$ to its neighbor following this flow. We define the new integration variables now as follows: $u_j=\frac{1}{2}(x_j-x_{\widetilde F(j)})$ for $j\neq 1$. 

All of the $x_j$ can now be recovered from the $u_i$ and $x_1$ by exploring the tree starting from the vertex $1$, so again, this is a linear bijection.

\medskip

\underline{The remaining steps}. The remaining steps are essentially the same as in proof of Proposition \ref{pr:mombd1}. The scaling by $R$ now just happens at a later stage, but the upshot is that 
\begin{equation*}
|G_n(x_1)|\leq C^n \sum_{1\leq k\leq n/2}\frac{(n-k)!}{\Gamma(1+n(1- \frac{\|\alpha\|^2}{8\pi})-k)}.
\end{equation*}
This can be estimated as in the previous case with Stirling, and one finds the bound we are after for $\|G_n\|_{L^\infty(K)}$.  

\medskip

For the Lipschitz-part, let $R>0$ be so large that $\Omega\subset B(0,R)$ and let us shift each of the integration variable $x_2,...,x_n$ by $x_1$ so that using \eqref{eq:gomega} we have  
\begin{align*}
G_n(x_1)&=\int_{B(0,2R)^{n-1}}d^2x_2\cdots d^2 x_n \prod_{j=2}^n \left(f_j(x_j+x_1)e^{-\alpha_1\alpha_j g_\Omega(x_1,x_j+x_1)}\right) \prod_{k=2}^n |x_k|^{-\frac{\alpha_1 \alpha_k}{2\pi}} \\
&\notag \qquad \times \prod_{2\leq l<m\leq n}\left(|x_l-x_m|^{-\frac{\alpha_l\alpha_m}{2\pi}} e^{-\alpha_l \alpha_m g_\Omega(x_l+x_1,x_m+x_1)} \right)\\
&\notag =: \int_{B(0,2R)^{n-1}}d^2x_2\cdots d^2 x_n U(x_1;x_2,...,x_n)\prod_{j=2}^n |x_j|^{-\frac{\alpha_1\alpha_j}{2\pi}} \prod_{2\leq l<m\leq n}|x_l-x_m|^{-\frac{\alpha_l\alpha_m}{2\pi}}.
\end{align*}
Let us now pick compact and connected $K'\subset \Omega$ such that $x_1,y_1\in K'$, and let $\gamma$ be a piecewise smooth curve within $K'$ from $x_1$ to $y_1$ for which the length of $\gamma$ is bounded by a constant times $|x_1-y_1|$ (for $|x_1-y_1|$ sufficiently small, we could take this just to be a line segment with an appropriate choice of $K'$, and for larger $|x_1-y_1|$ the existence of this follows by compactness). For such a curve, we have 
\begin{align*}
|U(x_1;x_2,...,x_n)-U(y_1;x_2,...,x_n)|\leq |\nabla_{x_1} U(\xi;x_2,...,x_n)| |x_1-y_1|,
\end{align*}
for some $\xi$ on $\gamma$. By the smoothness of $f_j$ and $g_\Omega$, we see that we can bound 
\begin{equation*}
 |\nabla_{x_1} U(\xi;x_2,...,x_n)|\leq C n^2 e^{-\sum_{j=2}^n \alpha_1\alpha_j g_\Omega(\xi,x_j+\xi)} e^{-\sum_{2\leq j<k\leq n}\alpha_j \alpha_k g_\Omega(x_j+\xi,x_k+\xi)}
\end{equation*}
for some constant $C$ which depends only on $\max_i \|\nabla f_i\|_{L^\infty(\Omega)}$, $\max_i\|f_i\|_{L^\infty(\Omega)}$, $\bigcup_{i=2}^n \mathrm{supp}(f_i)$, and $K$. This means that we have 
\begin{align*}
|G_n(x_1)-G_n(y_1)|&\leq C n^2 |x_1-y_1|\int_{B(0,2R)^{n-1}}d^2x_2\cdots d^2 x_n e^{-\sum_{j=2}^n \alpha_1\alpha_j G_\Omega(\xi,x_j+\xi)}\\
&\notag \qquad \qquad \qquad \times e^{-\sum_{2\leq k<l\leq n}\alpha_k\alpha_l G_\Omega(x_k+\xi,x_l+\xi)}.
\end{align*}
Even though $\xi$ depends on $x_1,y_1,x_2,...,x_n$, we can make use of Lemma \ref{le:ons} (item (2)), and bound this by  
\begin{align*}
|G_n(x_1)-G_n(y_1)|\leq C n^2|x_1-y_1|\int_{B(0,2R)^{n-1}}d^2 u_2\cdots d^2 u_n e^{\sum_{j=1}^n \frac{\|\alpha\|^2}{4\pi}\log \frac{1}{\frac{1}{2}\min_{k:k\neq j}|u_k-u_j|}},
\end{align*}
with the interpretation $u_1=0$. This is essentially the same integral we encountered when bounding $|G_n(x_1)|$, but simply with $x_1=0$. This means that from the bound for $\|G_n\|_{L^\infty(K)}$, we have 
\begin{align*}
|G_n(x_1)-G_n(y_1)|\leq C^n  n^2 n^{\frac{\|\alpha\|^2}{8\pi}n}|x_1-y_1|
\end{align*}
and adjusting the value of $C$, we find the required estimate for $\|G_n\|_{\mathrm{Lip}(K)}$.
\end{proof}

\begin{proof}[Proof of Proposition \ref{pr:mombd3}] $ $

\medskip

\underline{Preliminary considerations}: Again, the initial steps up to the change of variables are similar to the ones we have already encountered and we find
\begin{align*}
|H_n(x_1,x_2)|\leq C^n \sum_{1\leq k\leq n/2}\sum_{F\in \mathcal F_{2,k}^{(n)}} \int_{U_F(x_1,x_2)}d^2x_3\cdots d^2 x_n \prod_{k=1}^n \left(\frac{1}{2}|x_k-x_{F(k)}|\right)^{-\frac{\|\alpha\|^2}{4\pi}},
\end{align*}
where $U_F(x_1,x_2)=\{(x_3,...,x_n)\subset B(0,R)^{n-2}: F_{(x_1,x_2,...,x_n)}=F\}$, $R>0$ is such that $\Omega\subset B(0,R)$, and $C$ depends only on the allowed parameters.

\medskip

\underline{The change of variables and remainder of the analysis}: Here we have several cases we need to consider:
\begin{enumerate}
\item[i)]  $F(1)=2$ and $F(2)=1$ (i.e. $1$ and $2$ are the vertices of one two-loop).
\item[ii)] Either $F(1)=2$ and $F(2)\neq 1$ or vice versa (i.e. neighboring vertices, at least one is not a vertex of the two-loop).
\item[iii)] $1$ and $2$ are in different components.
\item[iv)] $F(1)\neq 2$, $F(2)\neq 1$, but $1$ and $2$ are in the same component.
\end{enumerate}
Let us now go through the change of variables and its consequences in each case -- we again need to be a bit more careful than in Proposition \ref{pr:mombd1} in most of the cases. In fact, in some cases a simple change of variables is not sufficient, but we need to get around this with some estimation. 
\begin{enumerate}
\item[i)] In this case, which we illustrate in Figure \ref{fig:case1}, we make the same change of variables as in Proposition \ref{pr:mombd1}, but ignoring the $1,2$ vertices. Again, climbing up the trees shows that this is a linear bijection, and the relevant integrals\footnote{Note however that in comparison with say \eqref{eq:comp}, there is one less of the integrals with exponent $-\|\alpha\|^2/2\pi$ since we do not integrate over the loop corresponding to $(1,2)$.} can be estimated as before. Carrying out the same arguments as before (applying Lemma \ref{le:nnnum} and Lemma \ref{le:intest}), we can thus upper bound the contribution of this case by 
\begin{equation*}
|x_1-x_2|^{-\frac{\|\alpha\|^2}{2\pi}}C^n \sum_{1\leq k\leq n/2}\frac{(n-k)!}{\Gamma(\frac{\|\alpha\|^2}{4\pi}+n(1-\frac{\|\alpha\|^2}{8\pi})-k)}
\end{equation*}
This sum can be bounded as before (using Stirling), and we arrive at a bound precisely of the form stated in the proposition. 

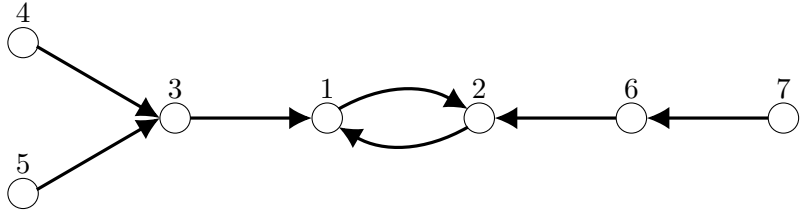
\begin{figure}
\begin{center}
\begin{tikzpicture}

    \draw (-1,0) circle (0.2);
    \node[label={$1$}] at (-1,0) {};
    \draw (1,0) circle (0.2);
    \node[label={$2$}] at (1,0) {};
    \draw (-3,0) circle (0.2);
    \node[label={$3$}] at (-3,0) {};
    \draw (-5,1) circle (0.2);
    \node[label={$4$}] at (-5,1) {};
    \draw (-5,-1) circle (0.2);
    \node[label={$5$}] at (-5,-1) {};
    \draw (3,0) circle (0.2);
    \node[label={$6$}] at (3,0) {};
    \draw (5,0) circle (0.2);
    \node[label={$7$}] at (5,0) {};

    \draw[-{Latex[length=3mm,width=3mm]},very thick] (-0.85,0.12) to[out=30, in=150] (0.85,0.12);
    \draw[-{Latex[length=3mm,width=3mm]},very thick] (0.85,-0.12) to[out=210, in=330] (-0.85,-0.12);
    \draw[-{Latex[length=3mm,width=3mm]},very thick] (-2.8,0)--(-1.2,0);
    \draw[-{Latex[length=3mm,width=3mm]},very thick] (-4.82,0.95)--(-3.2,0);
    \draw[-{Latex[length=3mm,width=3mm]},very thick] (-4.82,-0.95)--(-3.2,0);
    \draw[-{Latex[length=3mm,width=3mm]},very thick] (2.8,0)--(1.2,0);
    \draw[-{Latex[length=3mm,width=3mm]},very thick] (4.8,0)--(3.2,0);

\end{tikzpicture} 
\end{center}
\caption{A possible graph component arising in case i).}
\label{fig:case1}
\end{figure}

\item[ii)] Consider the case $F(1)=2$ and $F(2)\neq 1$ (the other case is treated similarly). This situation is illustrated in Figure \ref{fig:treemod}. For the components that do not contain the vertices $1,2$, we perform the same change of variables as in Step 3 of the proof of Proposition \ref{pr:mombd1}. As in the proof of Proposition \ref{pr:mombd2}, we define a new orientation on the component containing $1,2$ by requiring that all edges flow toward the edge $\{1,2\}$ (we can keep the orientation on this edge as it is or reverse it as in Figure \ref{fig:treemod}). We again call $\widetilde F$ the function that takes the vertex $i$ to its neighbor along this flow. We define the new integration variables on this component by $u_j=\frac{1}{2}(x_j-x_{\widetilde F(j)})$ for $j\neq 1,2$. Again, exploring the component starting from this edge shows that this is a linear bijection. The rest of the estimates can be carried out as in the proof of Proposition \ref{pr:mombd1}, and we arrive at a bound of the form 
\begin{equation*}
|x_1-x_2|^{-\frac{\|\alpha\|^2}{4\pi}}C^n\sum_{1\leq k\leq n/2}\frac{(n-k)!}{\Gamma( \frac{\|\alpha\|^2}{8\pi}+n(1-\frac{\|\alpha\|^2}{8\pi})-k)},
\end{equation*}
where $C$ depends only on the allowed parameters. We can bound this by the quantity in the claim of this proposition (by possibly modifying $C$).
\item[iii)] In this case, the components not containing $1$ or $2$ can be treated as in Step 3 of the proof of Proposition \ref{pr:mombd1} while the components containing $1$ or $2$ can be treated as in the proof of Proposition \ref{pr:mombd2}. We arrive at a bound of the form 
\begin{equation*}
C^n\sum_{1\leq k\leq n/2}\frac{(n-k)!}{\Gamma(1+n(1- \frac{\|\alpha\|^2}{8\pi})-k)}
\end{equation*}
(with the allowed dependence on the parameters in $C$), and this can be bounded by the quantity in the claim of the proposition (by possibly adjusting $C$). 
\item[iv)] This case is more involved and we cannot get away with simple changes of variables, but have to do some estimation. Moreover, we need to split this case into several subcases. Note that since $F(1)\neq F(2)$ and $F(2)\neq F(1)$, at most one of the vertices is a vertex of the two-loop of the component. Let us assume for simplicity that the vertex $1$ is not a vertex of the two-loop. 
\begin{enumerate}
\item[a.] The vertex $2$ is not a vertex of the two-loop, but the path to the two-loop from the vertex $1$ goes through vertex $2$ (or vice versa -- the second case is treated the same way, so we focus on the first one). This case is illustrated in Figure \ref{fig:case4a}.

\begin{figure}
\begin{center}
\begin{tikzpicture}

    \draw (-1,0) circle (0.2);
    \node[label={$6$}] at (-1,0) {};
    \draw (1,0) circle (0.2);
    \node[label={$7$}] at (1,0) {};
    \draw (-3,0) circle (0.2);
    \node[label={$3$}] at (-3,0) {};
    \draw (-5,1) circle (0.2);
    \node[label={$4$}] at (-5,1) {};
    \draw (-5,-1) circle (0.2);
    \node[label={$5$}] at (-5,-1) {};
    \draw (3,0) circle (0.2);
    \node[label={$2$}] at (3,0) {};
    \draw (5,0) circle (0.2);
    \node[label={$8$}] at (5,0) {};    
    \draw (7,0) circle (0.2);
    \node[label={$1$}] at (7,0) {};

    \draw[-{Latex[length=3mm,width=3mm]},very thick] (-0.85,0.12) to[out=30, in=150] (0.85,0.12);
    \draw[-{Latex[length=3mm,width=3mm]},very thick] (0.85,-0.12) to[out=210, in=330] (-0.85,-0.12);
    \draw[-{Latex[length=3mm,width=3mm]},very thick] (-2.8,0)--(-1.2,0);
    \draw[-{Latex[length=3mm,width=3mm]},very thick] (-4.82,0.95)--(-3.2,0);
    \draw[-{Latex[length=3mm,width=3mm]},very thick] (-4.82,-0.95)--(-3.2,0);
    \draw[-{Latex[length=3mm,width=3mm]},very thick] (2.8,0)--(1.2,0);
    \draw[-{Latex[length=3mm,width=3mm]},very thick] (4.8,0)--(3.2,0);
    \draw[-{Latex[length=3mm,width=3mm]},very thick] (6.8,0)--(5.2,0);
    
\end{tikzpicture} 
\end{center}
\caption{A possible graph component arising in case iv) a.}
\label{fig:case4a}
\end{figure}
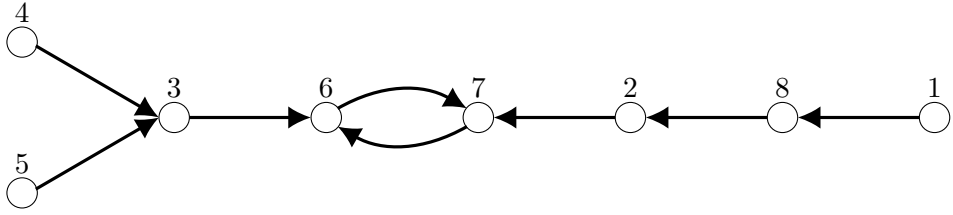

\medskip

Note that by definition, the distances between neighboring points grow along a (reversed) path from the two-loop to any tree vertex. This means that in our case, $|x_2-x_{F(2)}|\leq |x_1-x_{F(1)}|$. This implies that 
\begin{equation*}
\prod_{i=1}^n |x_i-x_{F(i)}|^{-\frac{\|\alpha\|^2}{4\pi}} \leq |x_2-x_{F(2)}|^{-\frac{\|\alpha\|^2}{2\pi}}\prod_{i=3}^n |x_i-x_{F(i)}|^{-\frac{\|\alpha\|^2}{4\pi}}. 
\end{equation*}
The effect of removing the $i=1$-term from the product is that it disconnects the component containing the vertices $1$ and $2$. These two components can now be treated as in the proof of Proposition \ref{pr:mombd2} (reorienting the graph), and the rest of the argument goes through analogously to the proof of Proposition \ref{pr:mombd1}. The upshot is that we get a bound of the form 
\begin{equation*}
C^n n^{\frac{\|\alpha\|^2}{8\pi}n}.
\end{equation*}

\item [b.] Vertex $2$ is a vertex of the two-loop and $F(1)\neq F(2)$. 

\medskip

\begin{figure}
\begin{center}
\begin{tikzpicture}

    \draw (-1,0) circle (0.2);
    \node[label={$3$}] at (-1,0) {};
    \draw (1,0) circle (0.2);
    \node[label={$2$}] at (1,0) {};
    \draw (-3,0) circle (0.2);
    \node[label={$4$}] at (-3,0) {};
    \draw (-5,1) circle (0.2);
    \node[label={$5$}] at (-5,1) {};
    \draw (-5,-1) circle (0.2);
    \node[label={$6$}] at (-5,-1) {};
    \draw (3,0) circle (0.2);
    \node[label={$7$}] at (3,0) {};
    \draw (5,0) circle (0.2);
    \node[label={$1$}] at (5,0) {};

    \draw[-{Latex[length=3mm,width=3mm]},very thick] (-0.85,0.12) to[out=30, in=150] (0.85,0.12);
    \draw[-{Latex[length=3mm,width=3mm]},very thick] (0.85,-0.12) to[out=210, in=330] (-0.85,-0.12);
    \draw[-{Latex[length=3mm,width=3mm]},very thick] (-2.8,0)--(-1.2,0);
    \draw[-{Latex[length=3mm,width=3mm]},very thick] (-4.82,0.95)--(-3.2,0);
    \draw[-{Latex[length=3mm,width=3mm]},very thick] (-4.82,-0.95)--(-3.2,0);
    \draw[-{Latex[length=3mm,width=3mm]},very thick] (2.8,0)--(1.2,0);
    \draw[-{Latex[length=3mm,width=3mm]},very thick] (4.8,0)--(3.2,0);

\end{tikzpicture} 
\end{center}
\caption{An illustration of the graph in case iv) b.}
\label{fig:case4b}
\end{figure}
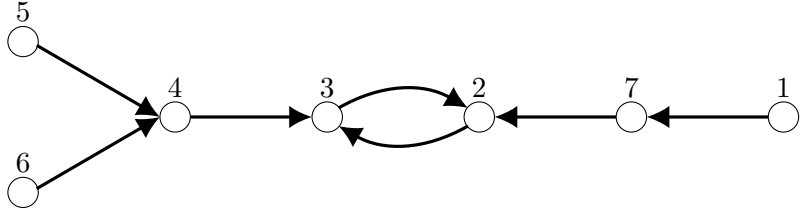

In this case, we can proceed in a similar manner as case a: we write 
\begin{equation*}
|x_1-x_{F(1)}|^{-\frac{\|\alpha\|^2}{4\pi}}\leq |x_{F(1)}-x_{F(F(1))}|^{-\frac{\|\alpha\|^2}{4\pi}}
\end{equation*}
and carry on as in case a. This produces the same bound. 
\item [c.] Vertex $2$ is a vertex of the two-loop and $F(1)=F(2)$. See Figure \ref{fig:case4c}.

\medskip

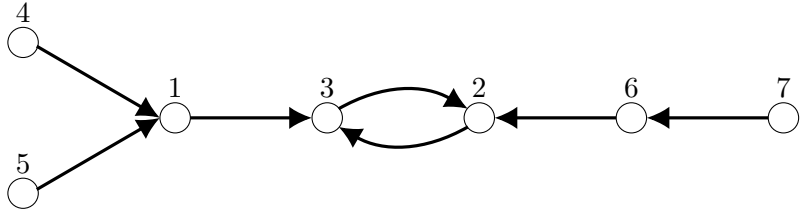
\begin{figure}
\begin{center}
\begin{tikzpicture}

    \draw (-1,0) circle (0.2);
    \node[label={$3$}] at (-1,0) {};
    \draw (1,0) circle (0.2);
    \node[label={$2$}] at (1,0) {};
    \draw (-3,0) circle (0.2);
    \node[label={$1$}] at (-3,0) {};
    \draw (-5,1) circle (0.2);
    \node[label={$4$}] at (-5,1) {};
    \draw (-5,-1) circle (0.2);
    \node[label={$5$}] at (-5,-1) {};
    \draw (3,0) circle (0.2);
    \node[label={$6$}] at (3,0) {};
    \draw (5,0) circle (0.2);
    \node[label={$7$}] at (5,0) {};

    \draw[-{Latex[length=3mm,width=3mm]},very thick] (-0.85,0.12) to[out=30, in=150] (0.85,0.12);
    \draw[-{Latex[length=3mm,width=3mm]},very thick] (0.85,-0.12) to[out=210, in=330] (-0.85,-0.12);
    \draw[-{Latex[length=3mm,width=3mm]},very thick] (-2.8,0)--(-1.2,0);
    \draw[-{Latex[length=3mm,width=3mm]},very thick] (-4.82,0.95)--(-3.2,0);
    \draw[-{Latex[length=3mm,width=3mm]},very thick] (-4.82,-0.95)--(-3.2,0);
    \draw[-{Latex[length=3mm,width=3mm]},very thick] (2.8,0)--(1.2,0);
    \draw[-{Latex[length=3mm,width=3mm]},very thick] (4.8,0)--(3.2,0);

\end{tikzpicture} 
\end{center}
\caption{An illustration of the graph in case iv) c.}
\label{fig:case4c}
\end{figure}

 Now we cannot use the same trick, since this would produce a term of the form $|x_{2}-x_{F(2)}|^{-\frac{3\|\alpha\|^2}{4\pi}}$ which may not be integrable. Now using the fact that $F(1)=F(2)$ (or in other words, $2=F(F(1))$), we can write 
\begin{equation*}
|x_1-x_2|=|x_1-x_{F(F(1))}|\leq |x_1-x_{F(1)}|+|x_{F(1)}-x_{F(F(1))}|\leq 2|x_1-x_{F(1)}|
\end{equation*}
implying that 
\begin{equation*}
|x_1-x_{F(1)}|^{-\frac{\|\alpha\|^2}{4\pi}}\leq \left(\frac{|x_1-x_2|}{2}\right)^{-\frac{\|\alpha\|^2}{4\pi}}.
\end{equation*}
Using this estimate in the product $\prod_i|x_i-x_{F(i)}|^{-\frac{\|\alpha\|^2}{4\pi}}$ allows us to again split the component containing $1$ and $2$ into two disjoint graphs, where we can use the approach from the proof of Proposition \ref{pr:mombd2}. This produces a bound of the form 
\begin{equation*}
C^n n^{\frac{\|\alpha\|^2}{8\pi}} |x_1-x_2|^{-\frac{\|\alpha\|^2}{4\pi}}.
\end{equation*}
\item [d.] Vertex $2$ is not a vertex of the two-loop, and neither vertex lies on the path from the two-loop to the other vertex. See Figure \ref{fig:case4d}.

\medskip

\begin{figure}
\begin{center}
\begin{tikzpicture}

    \draw (-1,0) circle (0.2);
    \node[label={$3$}] at (-1,0) {};
    \draw (1,0) circle (0.2);
    \node[label={$5$}] at (1,0) {};
    \draw (-3,0) circle (0.2);
    \node[label={$4$}] at (-3,0) {};
    \draw (-5,1) circle (0.2);
    \node[label={$1$}] at (-5,1) {};
    \draw (-5,-1) circle (0.2);
    \node[label={$2$}] at (-5,-1) {};
    \draw (3,0) circle (0.2);
    \node[label={$6$}] at (3,0) {};
    \draw (5,0) circle (0.2);
    \node[label={$7$}] at (5,0) {};

    \draw[-{Latex[length=3mm,width=3mm]},very thick] (-0.85,0.12) to[out=30, in=150] (0.85,0.12);
    \draw[-{Latex[length=3mm,width=3mm]},very thick] (0.85,-0.12) to[out=210, in=330] (-0.85,-0.12);
    \draw[-{Latex[length=3mm,width=3mm]},very thick] (-2.8,0)--(-1.2,0);
    \draw[-{Latex[length=3mm,width=3mm]},very thick] (-4.82,0.95)--(-3.2,0);
    \draw[-{Latex[length=3mm,width=3mm]},very thick] (-4.82,-0.95)--(-3.2,0);
    \draw[-{Latex[length=3mm,width=3mm]},very thick] (2.8,0)--(1.2,0);
    \draw[-{Latex[length=3mm,width=3mm]},very thick] (4.8,0)--(3.2,0);

\end{tikzpicture} 
\end{center}
\caption{An illustration of the graph in case iv) d.}
\label{fig:case4d}
\end{figure}
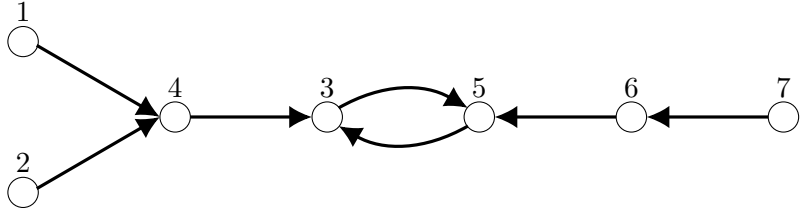

We now use the inequality 
\begin{equation*}
|x_1-x_{F(1)}|^{-\frac{\|\alpha\|^2}{4\pi}}|x_2-x_{F(2)}|^{-\frac{\|\alpha\|^2}{4\pi}}\leq |x_1-x_{F(1)}|^{-\frac{\|\alpha\|^2}{2\pi}}+|x_2-x_{F(2)}|^{-\frac{\|\alpha\|^2}{2\pi}}.
\end{equation*}
This allows disconnecting the component containing $1$ and $2$ as before, and the same argument amounts in 
\begin{align*}
C^n n^{\frac{\max_i|\alpha_i|^2}{8\pi}n}.
\end{align*}
\end{enumerate}
In total, all of these cases provide an estimate that fits in the bound claimed in the proposition. 
\end{enumerate}
\end{proof}

\section{The sine-Gordon model and its correlation functions}\label{sec:sG}

In this section, we define the sine-Gordon model and its correlation functions that we analyze for the OPEs. 

\subsection{The sine-Gordon model}

Recall from Section \ref{sec:intro} that we would like to define our model as a probability measure (on a space of generalized functions) which is absolute continuous with respect to the law of the GFF: for $\beta\in(0,4\pi)$, $\mu\in \R$ and $\psi\in C_c^\infty(\Omega)$ (real valued),
\begin{equation}\label{eq:sgedef4}
\P_{\sg(\beta,\mu,\psi)}(d\varphi)=\frac{1}{Z_{\sg(\beta,\mu,\psi)}}e^{\mu \int_{\Omega}d^2 x \psi(x)\:\cos(\sqrt{\beta}\varphi(x))\:}\P_{\gff}(d\varphi),
\end{equation}
where 
\begin{equation*}
Z_{\sg(\beta,\mu,\psi)}=\gffcf{e^{\mu \int_{\Omega}d^2 x \psi(x)\:\cos(\sqrt{\beta}\varphi(x))\:}}.
\end{equation*}

Having made sense of the Wick ordered exponentials, we see that the natural interpretation for a rigorous definition of the cosine term in \eqref{eq:sgedef4} (for real valued $\psi$) is  
\begin{equation*}
\int_{\Omega}d^2x \psi(x)\:\cos(\sqrt{\beta}\varphi(x))\:=\Re(\:e^{i\sqrt{\beta}\varphi}\:(\psi)).
\end{equation*}
This suggests trying to define the measure \eqref{eq:sgedef4}  rigorously as
\begin{equation*}
\P_{\sg(\beta,\mu,\psi)}(d\varphi)=\frac{1}{Z_{\sg(\beta,\mu,\psi)}}e^{\mu \Re(\:e^{i\sqrt{\beta}\varphi}\:(\psi))}\P_{\gff}(d\varphi),
\end{equation*}
where 
\begin{equation*}
Z_{\sg(\beta,\mu,\psi)}=\gffcf{e^{\mu \Re(\:e^{i\sqrt{\beta}\varphi}\:(\psi))}}.
\end{equation*}
This definition requires of course that $Z_{\sg(\beta,\mu,\psi)}$ is finite (and non-zero). This is the content of the following lemma. 
\begin{lemma}\label{le:sgpf}
For each $\mu\in \R$, $\beta\in(0,4\pi)$ and $\psi\in C_c^\infty(\Omega)$, $Z_{\sg(\beta,\mu,\psi)}\in(0,\infty)$.
\end{lemma}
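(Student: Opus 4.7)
The plan is to prove strict positivity and finiteness of $Z_{\sg(\beta,\mu,\psi)}$ separately. Positivity is essentially immediate: by Lemma \ref{le:wickconv}, $Y := \:e^{i\sqrt\beta\varphi}\:(\psi)$ is an almost surely well-defined $\C$-valued random variable (defined as an $L^2(\P)$ limit). Consequently $e^{\mu\Re(Y)}$ is a strictly positive, almost surely finite random variable, so its $\gff$-expectation is in $(0,\infty]$, and in particular strictly positive. All that remains is the finiteness bound.

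For finiteness, I would use the crude domination $e^{\mu\Re(Y)} \leq e^{|\mu|\,|Y|}$ and expand the right-hand side in a power series. Because $\psi$ is real valued, the definition \eqref{eq:wickdef} shows that $\overline{Y} = \:e^{-i\sqrt\beta\varphi}\:(\psi)$, and hence
\begin{equation*}
|Y|^{2n} = Y^n \overline{Y}^n = \prod_{j=1}^n \:e^{i\sqrt\beta\varphi}\:(\psi) \prod_{k=1}^n \:e^{-i\sqrt\beta\varphi}\:(\psi).
\end{equation*}
Since $\sqrt\beta \in (0,\sqrt{4\pi})$, Proposition \ref{pr:mombd1} applies with $\|\alpha\| = \sqrt\beta$; combining the $\epsilon$-uniform bound \eqref{eq:gpest} with the Vitali-type argument from Remark \ref{re:vitali} (or simply Fatou's lemma on $|\prod\:e^{i\alpha_j\varphi_\epsilon}\:(\psi)|$) transfers the moment bound to the limiting objects, giving
\begin{equation*}
\gffcf{|Y|^{2n}} \leq C^{2n}(2n)^{\frac{\beta}{8\pi}(2n)}
\end{equation*}
for some constant $C$ depending only on $\beta$, $\psi$, and $\Omega$. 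Jensen's inequality $\gffcf{|Y|^n}^2 \leq \gffcf{|Y|^{2n}}$ then yields $\gffcf{|Y|^n} \leq \tilde C^n n^{\frac{\beta}{8\pi}n}$ after absorbing the factor $2^{\beta/(8\pi)}$ into the constant.

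By the monotone convergence theorem (every term being non-negative),
\begin{equation*}
\gffcf{e^{|\mu|\,|Y|}} = \sum_{n=0}^\infty \frac{|\mu|^n}{n!}\gffcf{|Y|^n} \leq \sum_{n=0}^\infty \frac{(|\mu|\tilde C)^n\, n^{\frac{\beta}{8\pi}n}}{n!}.
\end{equation*}
By Stirling, the $n$th term is bounded by $(e|\mu|\tilde C)^n\, n^{-n(1-\frac{\beta}{8\pi})}/\sqrt{2\pi n}$. The assumption $\beta<4\pi$ gives $1-\frac{\beta}{8\pi} > \frac{1}{2} > 0$, so the root test shows the series converges for every $\mu\in\R$. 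Hence $Z_{\sg(\beta,\mu,\psi)} \leq \gffcf{e^{|\mu|\,|Y|}} < \infty$.

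The argument is essentially routine once Proposition \ref{pr:mombd1} is in hand; the only point of care is the passage from the $\epsilon$-regularized moment bound to one for the limiting Wick exponential, which is the reason we work with $|Y|^{2n}$ (where the complex-conjugate structure gives a clean product of Wick-ordered exponentials with $\alpha_j = \pm \sqrt\beta$) rather than directly with $|Y|^n$.
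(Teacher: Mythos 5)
Your proposal is correct and follows essentially the same route as the paper: dominate $e^{\mu\Re(Y)}$ by $e^{|\mu|\,|Y|}$, series expand, control the moments $\gffcf{|Y|^n}$ via Proposition \ref{pr:mombd1} (transferred to the limiting Wick exponential by a Vitali/Fatou argument), and conclude convergence of the resulting series by Stirling. The only difference is that you spell out the positivity step and the passage through $|Y|^{2n}$ and Jensen more explicitly than the paper does; these are accurate elaborations rather than deviations.
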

\begin{proof}
Note that it is sufficient to prove that 
\begin{equation*}
\gffcf{e^{\mu |\Re(\:e^{i\sqrt{\beta}\varphi}\:(\psi))|}}<\infty
\end{equation*}
for each $\mu>0$. For this in turn, it is sufficient to prove that 
\begin{align*}
\gffcf{e^{\mu |\:e^{i\sqrt{\beta}\varphi}\:(\psi)|}}<\infty.
\end{align*}
Using Fubini and Proposition \ref{pr:mombd1},\footnote{To be precise, we use Vitali's convergence theorem to extend the bounds of Proposition \ref{pr:mombd1} to $\:e^{i\sqrt{\beta}\varphi}\:(\psi)$.} we see that 
\begin{align*}
\gffcf{e^{\mu |\:e^{i\sqrt{\beta}\varphi}\:(\psi)|}}\leq \sum_{n=0}^\infty \frac{\mu^n}{n!}C^n n^{\frac{\beta}{8\pi}n}, 
\end{align*}
for some $C$ depending only on $\beta$ and $\psi$. For example, by Stirling's approximation, this series converges for any $\mu>0$ so we are done.
\end{proof}
 
 We have thus constructed the sine-Gordon model for $\beta<4\pi$, and for example for $\O=e^{i\varphi(f)}$ with $f\in C_c^\infty(\Omega)$, 
 \begin{align*}
 \sgcf{\O}=\frac{\gffcf{e^{i\varphi(f)}e^{\mu \Re(\:e^{i\sqrt{\beta}\varphi}\:(\psi))}}}{\gffcf{e^{\mu \Re(\:e^{i\sqrt{\beta}\varphi}\:(\psi))}}}
 \end{align*}
 is well defined.
 
  Note that with a little bit of effort (using Lemma \ref{le:wickconv} and Proposition \ref{pr:mombd1} to justify Vitali's convergence theorem), one could also show that 
 \begin{equation*}
 \P_{\sg(\beta,\mu,\psi)}(d\varphi)=\lim_{\epsilon\to 0} \frac{1}{Z_{\sg(\beta,\mu,\psi)}(\epsilon)} e^{\Re(\:e^{i\sqrt{\beta}\varphi_\epsilon}\:(\psi))}\P_{\gff}(d\varphi),
 \end{equation*}
 where $Z_{\sg(\beta,\mu,\psi)}(\epsilon)=\gffcf{e^{\Re(\:e^{i\sqrt{\beta}\varphi_\epsilon}\:(\psi))}}$,
 as suggested in Section \ref{sec:intro}, but as we have already constructed the model, we will not elaborate on this further. 
 
 \subsection{The relevant smeared correlation functions for the sine-Gordon model}
 
 To study the OPEs for Theorem \ref{th:main}, we introduce smeared correlation functions and find integral kernels for them.

The correlation functions we need are as follows: for $f,g,h,\eta\in C_c^\infty(\Omega)$, we consider 
\begin{equation*}
\sgcf{\partial \varphi(g)\partial\varphi(h)\O}:=\frac{\gffcf{\partial \varphi(g)\partial\varphi(h)\O e^{\mu \Re(\:e^{i\sqrt{\beta}\varphi}\:(\psi))}}}{Z_{\sg(\beta,\mu,\psi)}}
\end{equation*}
and 
\begin{equation*}
\sgcf{\:e^{\pm i\sqrt{\beta}\varphi}\:(\eta)\O}:=\frac{\gffcf{\:e^{\pm i \sqrt{\beta}\varphi}\:(\eta)\O e^{\mu \Re(\:e^{i\sqrt{\beta}\varphi}\:(\psi))}}}{Z_{\sg(\beta,\mu,\psi)}},
\end{equation*}
where $\O=e^{i\varphi(f)}$ as before, and we also consider correlation functions with one or both of the $\partial$s replaced by $\bar\partial$. Note that by the proof of Lemma \ref{le:sgpf}, Proposition \ref{pr:mombd1}, and Cauchy-Schwarz, these are all finite.

We will shortly need integral kernels for these correlation functions. We will express these through a series expansion in $\mu$ for the numerator. The convergence of this series expansion is provided by the following lemma. 
\begin{lemma}\label{le:sgsmeared}
The functions 
\begin{equation}\label{eq:zmugh}
Z_\mu(g,h)=\gffcf{\partial \varphi(g)\partial\varphi(h)\O e^{\mu \Re(\:e^{i\sqrt{\beta}\varphi}\:(\psi))}},
\end{equation} 
\begin{equation*}
Z_\mu(\eta)=\gffcf{\:e^{\pm i \sqrt{\beta}\varphi}\:(\eta)\O e^{\mu \Re(\:e^{i\sqrt{\beta}\varphi}\:(\psi))}},
\end{equation*}
and 
\begin{equation*}
Z_\mu(\O)=\gffcf{\O e^{\mu \Re(\:e^{i\sqrt{\beta}\varphi}\:(\psi))}}
\end{equation*}
are entire functions of $\mu$ with (everywhere convergent) series expansions
\begin{align}\label{eq:zmuseries}
Z_\mu(g,h)=\sum_{n=0}^\infty \frac{\mu^n}{2^n n!}\sum_{\sigma_1,...,\sigma_n\in \{-1,1\}}\gffcf{\partial \varphi(g)\partial\varphi(h)\O\prod_{j=1}^n \:e^{i\sigma_j\sqrt{\beta}\varphi}\:(\psi)},
\end{align}
\begin{align}\label{eq:zmuseries2}
Z_\mu(\eta)=\sum_{n=0}^\infty \frac{\mu^n}{2^n n!}\sum_{\sigma_1,...,\sigma_n\in \{-1,1\}}\gffcf{\:e^{\pm i \sqrt{\beta}\varphi}\:(\eta)\O\prod_{j=1}^n \:e^{i\sigma_j\sqrt{\beta}\varphi}\:(\psi)},
\end{align}
and 
\begin{align}\label{eq:zmuseries3}
Z_\mu(\O)=\sum_{n=0}^\infty \frac{\mu^n}{2^n n!}\sum_{\sigma_1,...,\sigma_n\in \{-1,1\}}\gffcf{\O\prod_{j=1}^n \:e^{i\sigma_j\sqrt{\beta}\varphi}\:(\psi)}
\end{align}
Similar claims are true if one or both of the $\partial$s are replaced by $\bar\partial$.
\end{lemma}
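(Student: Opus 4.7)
The plan is to obtain the Taylor expansions in $\mu$ by naively expanding $e^{\mu \Re(\:e^{i\sqrt{\beta}\varphi}\:(\psi))}$ as a power series and justifying the interchange of expectation and summation via absolute integrability and Fubini's theorem. Since $\psi$ is real-valued,
$$\Re(\:e^{i\sqrt{\beta}\varphi}\:(\psi)) = \tfrac{1}{2}\bigl(\:e^{i\sqrt{\beta}\varphi}\:(\psi) + \:e^{-i\sqrt{\beta}\varphi}\:(\psi)\bigr),$$
so the binomial theorem gives
$$[\Re(\:e^{i\sqrt{\beta}\varphi}\:(\psi))]^n = 2^{-n}\sum_{\sigma \in \{-1,1\}^n} \prod_{j=1}^n \:e^{i\sigma_j\sqrt{\beta}\varphi}\:(\psi),$$
so the integrand of the $n$-th Taylor term coincides exactly with the $n$-th summand in \eqref{eq:zmuseries}--\eqref{eq:zmuseries3}.

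The core task is to bound the absolute values uniformly enough to apply Fubini. For \eqref{eq:zmuseries} and \eqref{eq:zmuseries3}, Cauchy--Schwarz separates the ``spectator'' factor $\Phi$ (equal to $\partial\varphi(g)\partial\varphi(h)\O$ and $\O$ respectively) from the product of Wick-ordered exponentials:
$$\gffcf{\left|\Phi \cdot \prod_{j=1}^n \:e^{i\sigma_j\sqrt{\beta}\varphi}\:(\psi)\right|} \leq \gffcf{|\Phi|^2}^{1/2}\,\gffcf{\left|\prod_{j=1}^n \:e^{i\sigma_j\sqrt{\beta}\varphi}\:(\psi)\right|^2}^{1/2}.$$
The first factor is finite and independent of $n$ and $\sigma$: $\partial\varphi(g)$ and $\partial\varphi(h)$ are centered Gaussians and thus have all moments, while $|\O| = e^{-\varphi(\Im f)}$ has all exponential moments. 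The second factor equals the expectation of a product of $2n$ Wick-ordered exponentials of charges $\pm\sqrt{\beta}\in(-\sqrt{4\pi},\sqrt{4\pi})$, so Proposition~\ref{pr:mombd1} bounds it by $C^{2n}(2n)^{\beta n/(4\pi)}$, with square root at most $C^n(2n)^{\beta n/(8\pi)}$. For \eqref{eq:zmuseries2}, one absorbs the extra factor $\:e^{\pm i\sqrt{\beta}\varphi}\:(\eta)$ directly into the Wick product (after peeling off $\O$ via Cauchy--Schwarz) and applies Proposition~\ref{pr:mombd1} to the resulting product of $n+1$ Wick exponentials.

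Summing over the $2^n$ configurations of $\sigma$ and dividing by $2^n n!$, the series of absolute values is dominated term-by-term by
$$\mathrm{const} \cdot \sum_{n \geq 0} \frac{|\mu|^n \widetilde{C}^n n^{\beta n/(8\pi)}}{n!},$$
and Stirling's formula yields $n^{\beta n/(8\pi)}/n! \sim (e/n^{1-\beta/(8\pi)})^n/\sqrt{2\pi n}$, which decays super-exponentially in $n$ since $\beta/(8\pi) < 1/2$. The dominating series thus converges for every $\mu \in \mathbb{C}$, Fubini's theorem applies, and the desired series representations together with analyticity (in fact entireness) follow. The cases with one or both $\partial$'s replaced by $\bar\partial$ are handled identically, since only the spectator factor $\Phi$ changes and the bounds on $\Phi$ are symmetric between $\partial$ and $\bar\partial$.

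The main obstacle, which is really the only delicate point, is matching the $n$-dependence of the moment bound on the product of Wick exponentials against the $n!$ in the Taylor series: the factorial-beating growth $n^{\beta n/(8\pi)}$ supplied by Proposition~\ref{pr:mombd1} is exactly the refined estimate needed here. Any cruder bound ignoring the graph-theoretic combinatorics of the nearest-neighbour decomposition (for example one based only on Onsager's inequality followed by a naive integral estimate) would grow too fast to be summable, and this is why we invested in the sharper moment bounds of Section~\ref{sec:wick}.
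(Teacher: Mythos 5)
Your proposal is correct and follows essentially the same route as the paper: expand the exponential, use the binomial identity for $[\Re(\:e^{i\sqrt{\beta}\varphi}\:(\psi))]^n$, separate the spectator factor by Cauchy--Schwarz, bound the resulting product of $2n$ Wick exponentials via Proposition~\ref{pr:mombd1}, and conclude with Stirling and Fubini. The only point worth making explicit is that the bound \eqref{eq:gpest} is stated for the regularized fields $\varphi_\epsilon$, so transferring it to $\:e^{i\sqrt{\beta}\varphi}\:(\psi)$ requires the uniform-integrability/Vitali argument of Remark~\ref{re:vitali}, which the paper invokes at exactly this step.
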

\begin{proof}
Let us first give a heuristic argument and then justify why the steps we take are allowed. Formally, if we series expand the exponential in \eqref{eq:zmugh} and swap the order of the expectation and summation, we find that $Z_{\mu}(g,h)$ would be given by 
\begin{align*}
\sum_{n=0}^\infty \frac{\mu^n}{n!}\gffcf{\partial \varphi(g)\partial\varphi(h)\O\left(\Re(\:e^{i\sqrt{\beta}\varphi}\:(\psi))\right)^n}.
\end{align*}
This is almost of the form claimed in the statement of the lemma. To massage it into the required form, note that for example from Lemma \ref{le:wickconv} one can show that for real valued $\psi$, $\overline{\:e^{i\sqrt{\beta}\varphi}\:(\psi)}=\:e^{-i\sqrt{\beta}\varphi}\:(\psi)$, so 
\begin{equation*}
\Re\left(\:e^{i\sqrt{\beta}\varphi}\:(\psi)\right)=\frac{1}{2}\left(\:e^{i\sqrt{\beta}\varphi}\:(\psi)+\:e^{-i\sqrt{\beta}\varphi}\:(\psi)\right)=\frac{1}{2}\sum_{\sigma\in \{-1,1\}}\:e^{i\sigma\sqrt{\beta}\varphi}\:(\psi).
\end{equation*}
Thus 
\begin{align*}
\left(\Re\left(\:e^{i\sqrt{\beta}\varphi}\:(\psi)\right)\right)^n =\frac{1}{2^n}\sum_{\sigma_1,...,\sigma_n\in \{-1,1\}}\prod_{j=1}^n \:e^{i\sigma_j \sqrt{\beta}\varphi}\:(\psi),
\end{align*}
which suggests that $Z_{\mu}(g,h)$ should be given by the series expansion \eqref{eq:zmuseries} as claimed.

The only step in this argument that was not properly justified was interchanging the order of expectation and the series expansion. For this we naturally want to use Fubini. It is thus sufficient to show that  
\begin{align*}
\sum_{n=0}^\infty \frac{|\mu|^n}{2^n n!}\gffcf{\left|\partial \varphi(g)\partial \varphi(h)\prod_{j=1}^n \:e^{i\sigma_j\sqrt{\beta}\varphi}\:(\psi)\right|}<\infty.
\end{align*}
Using Cauchy-Schwarz (and the fact that $\gffcf{|\partial \varphi(g)\partial \varphi(f)|^2}<\infty$ since we are dealing with Gaussian random variables) and our remark that $\overline{\:e^{i\sqrt{\beta}\varphi}\:(\psi)}=\:e^{-i\sqrt{\beta}\varphi}\:(\psi)$, we see that this is equivalent to
\begin{align*}
\sum_{n=0}^\infty \frac{|\mu|^n}{2^n n!}\left(\gffcf{\left|\:e^{i\sqrt{\beta}\varphi}\:(\psi)\right|^{2n}}\right)^{1/2}<\infty.
\end{align*}
Due to our uniform integrability and $L^p$ convergence (see Remark \ref{re:vitali}), we can use Proposition \ref{pr:mombd1} to bound  
\begin{equation*}
\left(\gffcf{\left|\:e^{i\sqrt{\beta}\varphi}\:(\psi)\right|^{2n}}\right)^{1/2}\leq C_{\psi,\beta}n^{\frac{\beta}{8\pi}n}
\end{equation*}
so the question is now convergence of the series 
\begin{align*}
\sum_{n=0}^\infty \frac{|\mu|^n}{2^n n!}n^{\frac{\beta}{8\pi}n}.
\end{align*}
By Stirling's approximation, this would be convergent for every $\mu\in \R$ even for $\beta<8\pi$, so it certainly converges for $\beta<4\pi$. Thus our formal argument is justified and we have derived the series expansion \eqref{eq:zmuseries}.

The proofs of \eqref{eq:zmuseries2} and \eqref{eq:zmuseries3} are analogous and we omit the details.
\end{proof}

\subsection{Pointwise correlation functions for the sine-Gordon model}

The main goal of this section is the proof of the following proposition. 
\begin{proposition}\label{pr:sgpw}
There exist  functions 
\begin{equation}\label{eq:func1}
(x,y)\mapsto \gffcf{\partial \varphi(x)\partial \varphi(y)\O e^{\mu \Re(\:e^{i\sqrt{\beta}\varphi}\:(\psi))}}
\end{equation}
and 
\begin{equation}\label{eq:func2}
x\mapsto \gffcf{\:e^{\pm i \sqrt{\beta}\varphi(x)}\:\O e^{\mu \Re(\:e^{i\sqrt{\beta}\varphi}\:(\psi))}}
\end{equation}
such that 
\begin{equation*}
Z_\mu(g,h)=\int_{\Omega^2}d^2x d^2y g(x)h(y)\gffcf{\partial \varphi(x)\partial \varphi(y)\O e^{\mu \Re(\:e^{i\sqrt{\beta}\varphi}\:(\psi))}}
\end{equation*}
and 
\begin{equation*}
Z_\mu(\eta)=\int_{\Omega}d^2x \eta(x)\gffcf{\:e^{\pm i \sqrt{\beta}\varphi(x)}\:\O e^{\mu \Re(\:e^{i\sqrt{\beta}\varphi}\:(\psi))}}
\end{equation*}
for $g,h,\eta\in C_c^\infty(\Omega)$ and $g,h$ having disjoint supports.

The function \eqref{eq:func1} is continuous on $\{(x,y)\in \Omega^2: x\neq y\}$ and the function \eqref{eq:func2} is continuous on $\Omega$.

Again, analogous claims are true for correlation functions involving $\bar\partial$.
\end{proposition}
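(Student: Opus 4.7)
The strategy is to combine the convergent series expansions from Lemma \ref{le:sgsmeared} with a Gaussian complete-the-square argument that identifies an integral kernel for each term, and then to sum the resulting kernel series using the moment bounds of Propositions \ref{pr:mombd1}, \ref{pr:mombd2}, and \ref{pr:mombd3}. The integral kernel identities then follow by Fubini, and continuity of the pointwise functions follows from Lipschitz-type components of those moment estimates combined with uniform convergence on compacta.

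For the derivative field case, I would first regularize via $\varphi_\epsilon$ and, setting $V_\epsilon = \varphi_\epsilon(f) + \sqrt{\beta}\sum_j\sigma_j\varphi_\epsilon(x_j)$, apply Lemma \ref{le:girsa} to obtain
\begin{equation*}
\gffcf{\partial\varphi(g)\partial\varphi(h)e^{iV_\epsilon}} = \gffcf{e^{iV_\epsilon}}\bigl[\gffcf{\partial\varphi(g)\partial\varphi(h)} - \gffcf{\partial\varphi(g)V_\epsilon}\gffcf{\partial\varphi(h)V_\epsilon}\bigr].
\end{equation*}
After incorporating the Wick-ordering prefactors from \eqref{eq:wickdef}, integrating against $\prod_j \psi(x_j)$ by Fubini, passing $\epsilon\to 0$ via Vitali's theorem (justified by Proposition \ref{pr:mombd1} and Remark \ref{re:vitali}), and shifting the derivatives from $g,h$ onto the Green's function by integration by parts (legitimate since $g,h$ have disjoint supports), I would identify the $n$-th term in the series as
\begin{equation*}
A_n^\sigma(x,y) = \int_{\Omega^n}\Bigl(\prod_j\psi(x_j)\Bigr)\gffcf{\O \prod_j \:e^{i\sigma_j\sqrt{\beta}\varphi(x_j)}\:}\bigl[\partial_x\partial_y G_\Omega(x,y) - M^\sigma(x)M^\sigma(y)\bigr]d^{2n}x,
\end{equation*}
where $M^\sigma(x;x_1,\dots,x_n) = i\int f(u)\partial_x G_\Omega(u,x)du + i\sqrt{\beta}\sum_j \sigma_j \partial_x G_\Omega(x_j,x)$. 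The candidate pointwise kernel is then $\sum_n \frac{\mu^n}{2^n n!}\sum_\sigma A_n^\sigma(x,y)$.

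I would then show this series converges uniformly on compact subsets of $\{(x,y)\in\Omega^2: x\neq y\}$. The main contribution, coming from $\partial_x\partial_y G_\Omega(x,y)$, is a smooth off-diagonal factor times a Wick-exponential integral controlled by Proposition \ref{pr:mombd2}. The shift term $M^\sigma(x)M^\sigma(y)$ expands into a sum whose summands contain factors $\partial_x G_\Omega(x_j,x)\sim (x-x_j)^{-1}$ and $\partial_y G_\Omega(x_k,y)\sim (y-x_k)^{-1}$; these are locally integrable in two dimensions and can be incorporated into the graphical nearest-neighbor argument of Section \ref{sec:mombdproof}, yielding coefficient bounds of the form $C^n n^{\beta n/(8\pi)}/n!$ that sum by Stirling for any $\mu\in\R$. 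Continuity of each $A_n^\sigma(x,y)$ on $\{x\neq y\}$, together with this uniform convergence, delivers continuity of the limiting pointwise kernel.

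For the vertex field case, no complete-the-square argument is required: the inner correlator $\gffcf{\:e^{\pm i\sqrt{\beta}\varphi(x)}\:\O \prod_j \:e^{i\sigma_j\sqrt{\beta}\varphi(x_j)}\:}$ is given directly by the analogue of \eqref{eq:icmom} with $\O$ folded in as an additional Gaussian exponential, and the pointwise kernel identification, convergence, and continuity on all of $\Omega$ follow from Proposition \ref{pr:mombd2} (with the external vertex point $x$ playing the role of the fixed variable). The main obstacle is therefore the derivative case: verifying that the $(x-x_j)^{-1}$ and $(y-x_k)^{-1}$ singularities produced by the shift term preserve both the integrability of each $A_n^\sigma$ and the factorial savings needed for series convergence, and that they do not destroy continuity on $\{x\neq y\}$. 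This requires a minor adaptation of the graphical argument of Section \ref{sec:mombdproof} that incorporates the extra locally integrable singularities without losing the $n^{\beta n/(8\pi)}/n!$ decay.
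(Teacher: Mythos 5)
Your proposal follows essentially the same route as the paper's proof: expand in $\mu$ via Lemma \ref{le:sgsmeared}, identify a pointwise kernel for each coefficient by regularizing and applying the complete-the-square identity of Lemma \ref{le:girsa} (then passing $\epsilon\to 0$ with Vitali and justifying Fubini via Propositions \ref{pr:mombd2} and \ref{pr:mombd3}), bound the kernels uniformly on compact subsets of the off-diagonal set by $C^n n^{\beta n/(8\pi)}$, and sum the resulting continuous functions. One cosmetic slip: your $M^\sigma$ carries a factor of $i$, so $-M^\sigma(x)M^\sigma(y)$ has the opposite sign to the correct term $-\gffcf{\partial\varphi(g)V_\epsilon}\gffcf{\partial\varphi(h)V_\epsilon}$ appearing in your own first display; the $i$'s should be dropped from $M^\sigma$.
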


We split the proof into parts, but the basic idea is to use the series expansion in $\mu$ from Lemma \ref{le:sgsmeared}. Our first step is to analyze the coefficients of the series expansion more precisely. We consider first the derivative case.
\begin{lemma}\label{le:coef1}
For $g,h\in  C_c^\infty(\Omega)$ with disjoint supports, $\O=e^{i\varphi(f)}$ with $f\in C_c^\infty(\Omega)$, $\psi\in C_c^\infty(\Omega)$, $\sigma_1,...,\sigma_n\in \{-1,1\}$, and $\beta\in(0,4\pi)$,  we have 
\begin{equation*}
\gffcf{\partial \varphi(g)\partial\varphi(h)\O\prod_{j=1}^n \:e^{i\sigma_j\sqrt{\beta}\varphi}\:(\psi)}=\int_{\Omega^2} d^2x d^2 y g(x)h(y)\gffcf{\partial\varphi(x)\partial \varphi(y)\O\prod_{j=1}^n \:e^{i\sigma_j\sqrt{\beta}\varphi}\:(\psi)},
\end{equation*}
where 
\begin{align*}
&\gffcf{\partial\varphi(x)\partial \varphi(y)\O\prod_{j=1}^n \:e^{i\sigma_j\sqrt{\beta}\varphi}\:(\psi)}\\
&\quad=\gffcf{\O}\int_{\Omega^n} d^{2n}u\prod_{j=1}^n(\psi_j(u_j)) e^{-\beta \sum_{1\leq k<l\leq n}\sigma_k\sigma_lG_\Omega(u_k,u_l)}  \left(\partial_x\partial_y G_\Omega(x,y)-V(x;u)V(y;u)\right),
\end{align*}
with
\begin{equation*}
\psi_j(u)=\psi(u)e^{-\frac{\beta}{2} g_\Omega(u,u)} e^{-\sqrt{\beta}\sigma_j\int_\Omega d^2v f(v)G_\Omega(v,u)}
\end{equation*}
and
\begin{equation*}
V(x;u)=\int_\Omega d^2v f(v)\partial_x G_\Omega(v,x)+\sqrt{\beta}\sum_{j=1}^n \sigma_j \partial_x G_\Omega(x,u_j).
\end{equation*}
All of these integrals are absolutely convergent, and similar claims are true if one or both of the holomorphic derivatives $\partial$ are replaced by an antiholomorphic derivative $\bar\partial$.

\end{lemma}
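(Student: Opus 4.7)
The plan is to regularize the Wick-ordered exponentials by $\:e^{i\sigma_j\sqrt{\beta}\varphi_\epsilon}\:(\psi)$, perform an explicit Gaussian computation at level $\epsilon>0$, and then pass to the $\epsilon\to 0$ limit. The central input is the Gaussian identity $\E[X_1 X_2 e^{iY}]=e^{-\frac{1}{2}\E[Y^2]}(\E[X_1 X_2]-\E[X_1 Y]\E[X_2 Y])$ for centered jointly Gaussian $(X_1,X_2,Y)$, which is \eqref{eq:csq} in slightly more general form (it is also a special case of Lemma \ref{le:girsa}).

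First, I would apply Vitali's convergence theorem, with Proposition \ref{pr:mombd1} providing uniform $L^p$ bounds and Lemma \ref{le:wickconv} providing convergence in probability, to reduce the problem to evaluating $\lim_{\epsilon\to 0}\gffcf{\partial\varphi(g)\partial\varphi(h)\O\prod_j\:e^{i\sigma_j\sqrt{\beta}\varphi_\epsilon}\:(\psi)}$. Next, using Fubini together with the definition \eqref{eq:wickdef} of the Wick-ordered exponential, I would pull the $u$-integrals outside and apply the Gaussian identity with $X_1=\partial\varphi(g)$, $X_2=\partial\varphi(h)$, and $Y=\varphi(f)+\sum_j\sigma_j\sqrt{\beta}\varphi_\epsilon(u_j)$. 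The prefactor $e^{-\frac{1}{2}\E[Y^2]}$ splits into four pieces: (a) $\gffcf{\O}=e^{-\frac{1}{2}\gffcf{\varphi(f)^2}}$; (b) the cross terms $\prod_j e^{-\sigma_j\sqrt{\beta}\gffcf{\varphi(f)\varphi_\epsilon(u_j)}}$, which become the exponential in $\psi_j$ in the limit; (c) the diagonal variance terms, which by Lemma \ref{le:varphiecov}(2) produce exactly $\epsilon^{n\beta/4\pi}e^{-n\beta c_\rho/2}$, cancelling the Wick prefactor from \eqref{eq:wickdef} and leaving $\prod_j e^{-\frac{\beta}{2}g_\Omega(u_j,u_j)}$; and (d) the off-diagonal exponential $e^{-\beta\sum_{j<k}\sigma_j\sigma_k\gffcf{\varphi_\epsilon(u_j)\varphi_\epsilon(u_k)}}$, which converges pointwise to $e^{-\beta\sum_{j<k}\sigma_j\sigma_k G_\Omega(u_j,u_k)}$ by Lemma \ref{le:varphiecov}(1). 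The bilinear piece $\E[X_1 X_2]-\E[X_1 Y]\E[X_2 Y]$, after integration by parts against $g$ and $h$ (valid since the supports are disjoint, so the $\partial_x\partial_y G_\Omega(x,y)$ singularity at $x=y$ is avoided), yields $\int g(x)h(y)(\partial_x\partial_y G_\Omega(x,y)-V(x;u)V(y;u))\,dxdy$ with $V$ as in the statement.

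Passing to the $\epsilon\to 0$ limit inside the $u$-integral is then carried out by dominated convergence: the Onsager inequality Lemma \ref{le:ons}(1) bounds $e^{-\beta\sum_{j<k}\sigma_j\sigma_k\gffcf{\varphi_\epsilon(u_j)\varphi_\epsilon(u_k)}}$ uniformly in $\epsilon$ by $e^{cn}\prod_j(\min_{k\neq j}|u_k-u_j|)^{-\beta/4\pi}$, while the remaining factors converge pointwise and admit uniform bounds coming from Lemma \ref{le:varphiecov}. A final application of Fubini pulls the $g(x)h(y)\,dxdy$ to the outside and identifies the inner integrand with the claimed pointwise correlation function.

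The main obstacle will be establishing absolute convergence of the $u$-integral uniformly in $\epsilon$ so as to justify Fubini and dominated convergence. The subtle piece is $V(x;u)V(y;u)$: each factor $\partial_x G_\Omega(x,u_j)\sim|x-u_j|^{-1}$ introduces an additional singularity at $u_j=x$ on top of the Onsager singularities $|u_j-u_k|^{-\beta/4\pi}$. Since $\beta/4\pi<1$ and $|z|^{-1}$ is locally integrable in two dimensions, a nearest-neighbor-map decomposition in the spirit of Propositions \ref{pr:mombd1}--\ref{pr:mombd3}, augmented to treat the external points $x$ and $y$ as extra sources of singularity, still yields a finite bound.
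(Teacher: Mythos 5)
Your proposal is correct and follows essentially the same route as the paper: regularize, apply the Gaussian complete-the-square identity (Lemma \ref{le:girsa}) with $Y=\varphi(f)+\sum_j\sigma_j\sqrt{\beta}\varphi_\epsilon(u_j)$, match the diagonal variance terms against the Wick prefactor via Lemma \ref{le:varphiecov}, pass to the limit by dominated convergence with the Onsager inequality, and justify the final Fubini by controlling the extra $|x-u_p|^{-1}$, $|y-u_q|^{-1}$ singularities in $V(x;u)V(y;u)$. The only cosmetic difference is that the paper handles that last step by directly invoking the already-proven Propositions \ref{pr:mombd2} and \ref{pr:mombd3} (boundedness of $G_n$ and the $|x_1-x_2|^{-\beta/2\pi}$ bound on $H_n$, combined with Young's convolution inequality for the $p\neq q$ term), which is exactly the ``augmented nearest-neighbor decomposition'' you describe.
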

\begin{proof}
Our starting point is to write 
\begin{align*}
\gffcf{\partial \varphi(g)\partial\varphi(h)\O\prod_{j=1}^n \:e^{i\sigma_j\sqrt{\beta}\varphi}\:(\psi)}&=\lim_{\epsilon\to 0}\gffcf{\partial \varphi(g)\partial\varphi(h)\O\prod_{j=1}^n \:e^{i\sigma_j\sqrt{\beta}\varphi_\epsilon}\:(\psi)}\\
&=\lim_{\epsilon\to 0}\int d^{2n}u \prod_{j=1}^n (\psi(u_j))\gffcf{\partial \varphi(g)\partial \varphi(h)\O \prod_{k=1}^n \:e^{i\sigma_k \sqrt{\beta}\varphi_\epsilon(u_k)}\:}
\end{align*}
which follows from Lemma \ref{le:wickconv}, Cauchy-Schwarz, Proposition \ref{pr:mombd1}, Vitali's convergence theorem (see Remark \ref{re:vitali}), and Fubini. We can now make use of\footnote{To be precise, we define say $V_1=\varphi(f)+\sum_{j=1}^n \sigma_j \sqrt{\beta}\varphi_\epsilon(u_j)$, $V_2=\Re(\partial \varphi(g))$, $V_3=\Im(\partial \varphi(g))$, $V_4=\Re(\partial \varphi(h))$, $V_5=\Im(\partial \varphi(h))$, $P(V_1,...,V_5)=(V_2+iV_3)(V_4+iV_5)$, and $z=i$ in Lemma \ref{le:girsa}.} Lemma \ref{le:girsa} and find
\begin{align*}
&\gffcf{\partial \varphi(g)\partial \varphi(h)\O \prod_{j=1}^n \:e^{i\sigma_j \sqrt{\beta}\varphi_\epsilon(u_j)}\:}\\
&\quad=\gffcf{\O \prod_{j=1}^n \:e^{i\sigma_j \sqrt{\beta}\varphi_\epsilon(u_j)}\:} \gffcf{\left(\partial \varphi(g)+iV_\epsilon(g,u)\right)\left(\partial \varphi(h)+iV_{\epsilon}(h,u)\right)}\\
&\quad =\gffcf{\partial \varphi(g)\partial \varphi(h)}\gffcf{\O \prod_{j=1}^n \:e^{i\sigma_j \sqrt{\beta}\varphi_\epsilon(u_j)}\:}-V_\epsilon(g,u)V_\epsilon(h,u)\gffcf{\O \prod_{j=1}^n \:e^{i\sigma_j \sqrt{\beta}\varphi_\epsilon(u_j)}\:}
\end{align*}
where 
\begin{equation*}
V_\epsilon(g,u)=\gffcf{\partial \varphi(g)\varphi(f)}+\sqrt{\beta}\sum_{j=1}^n \sigma_j\gffcf{\partial \varphi(g)\varphi_\epsilon(u_j)}. 
\end{equation*}
Next we note that since we are dealing with Gaussian random variables (so that the moment generating function is explicit in terms of the covariances), \eqref{eq:gffcov2} implies that for small enough $\epsilon$,
\begin{align}\label{eq:for51}
\gffcf{\O \prod_{j=1}^n \:e^{i\sigma_j \sqrt{\beta}\varphi_\epsilon(u_j)}\:}&= \gffcf{\O} \prod_{j=1}^n\left(e^{-\sqrt{\beta}\sigma_j \gffcf{\varphi(f)\varphi_\epsilon(u_j)}} e^{-\frac{\beta}{2}g_\Omega(u_j,u_j)}\right)\\
&\quad \times e^{-\beta \sum_{1\leq k< l\leq n}\sigma_k\sigma_l \gffcf{\varphi_\epsilon(u_k)\varphi_\epsilon(u_l)}}.\notag
\end{align}

Lemma \ref{le:varphiecov} allows us to bound quantities like $\gffcf{\varphi(f)\varphi_\epsilon(u)}$ (and hence $\gffcf{\partial \varphi(g)\varphi_\epsilon(u)}$) uniformly in $\epsilon$ (and locally uniformly in $u$), so using Proposition \ref{pr:mombd1}, we see that we can use the dominated convergence theorem to take the $\epsilon\to 0$ limit under the integral and find 
\begin{align}\label{eq:intermed}
&\gffcf{\partial \varphi(g)\partial\varphi(h)\O\prod_{j=1}^n \:e^{i\sigma_j\sqrt{\beta}\varphi}\:(\psi)}\\
\notag&=\gffcf{\O}\int_{\Omega^{n}}d^{2n}u \prod_{j=1}^n (\psi_j(u_j))\left(\gffcf{\partial \varphi(g)\partial \varphi(h)}-V(g;u)V(h;u)\right) e^{-\beta\sum_{1\leq k<l\leq n}\sigma_k\sigma_l G_\Omega(u_k,u_l)},
\end{align}
where  
\begin{align*}
\psi_j(u)=\psi(u)e^{-\sqrt{\beta}\sigma_j\int_\Omega d^2 vf(v)G_\Omega(v,u)}e^{-\frac{\beta}{2}g_\Omega(u,u)}
\end{align*}
(as in the statement of the lemma) and 
\begin{align*}
V(g;u)&=\gffcf{\partial \varphi(g)\varphi(f)}+\sqrt{\beta}\sum_{j=1}^n \sigma_j \int_\Omega d^2 v g(v)\partial_v G_\Omega(u_j,v)\\
&=\int_{\Omega^2}d^2 x d^2v g(x)f(v)\partial_x G_\Omega(x,v)+\sqrt{\beta}\sum_{j=1}^n \sigma_j \int_\Omega d^2 x g(x)\partial_x G_\Omega(x,u_j),
\end{align*}
where we used the fact (following from \eqref{eq:gomega}) that first order derivatives of $G_\Omega$ are locally integrable.

Since $g$ and $h$ have disjoint supports, we can write in \eqref{eq:intermed} 
\begin{align}\label{eq:int1}
\gffcf{\partial \varphi(g)\partial \varphi(h)}=\int_{\Omega^2}d^2x d^2 y \partial g(x)\partial h(y)G_\Omega(x,y)=\int_{\Omega^2}d^2x d^2 y  g(x) h(y)\partial_x \partial_yG_\Omega(x,y).
\end{align}
For the $V(g;u)V(h;u)$-term, we would like to use Fubini to write 
\begin{align}\label{eq:int2}
&\int_{\Omega^{n}}d^{2n}u \prod_{j=1}^n (\psi_j(u_j))V(g;u)V(h;u) e^{-\beta\sum_{1\leq k<l\leq n}\sigma_k\sigma_l G_\Omega(u_k,u_l)}\\
\notag &\stackrel{\text{to show}}{=}\int_{\Omega^2}d^2x d^2 y  g(x)h(y) \int_{\Omega^n}d^{2n}u \left(\int_\Omega d^2 v f(v)\partial_x G_\Omega(x,v)+\sqrt{\beta}\sum_{p=1}^n\sigma_p \partial_x G_\Omega(x,u_p)\right)\\
\notag &\qquad \qquad \times  \left(\int_\Omega d^2 v' f(v')\partial_y G_\Omega(y,v')+\sqrt{\beta}\sum_{q=1}^n \sigma_q \partial_y G_\Omega(y,u_q)\right) \\
\notag &\qquad \qquad \times \prod_{j=1}^n (\psi_j(u_j)) e^{-\beta\sum_{1\leq k<l\leq n}\sigma_k\sigma_l G_\Omega(u_k,u_l)}.
\end{align}
If this were justified, combining with \eqref{eq:int1} and rearranging slightly would yield the claim. The issue is to check that Fubini here is really justified. As we saw in the proof of Lemma \ref{le:gffpw}, the $f$-terms
\begin{equation*}
x\mapsto \int_{\Omega}d^2 v\, f(v)\partial_x G_\Omega(x,v)
\end{equation*}
are actually smooth functions, so the actual question is whether functions of the form 
\begin{align}
\label{eq:term1}&\partial_x G_\Omega(x,u_p)e^{-\beta \sum_{1\leq k<l\leq n}\sigma_k \sigma_lG_\Omega(u_k,u_l)},\\
\label{eq:term2}&\partial_x G_\Omega(x,u_p)\partial_y G_\Omega(y,u_p)e^{-\beta \sum_{1\leq k<l\leq n}\sigma_k \sigma_lG_\Omega(u_k,u_l)},\\
\label{eq:term3}&\partial_x G_\Omega(x,u_p)\partial_y G_\Omega(y,u_q)e^{-\beta \sum_{1\leq k<l\leq n}\sigma_k \sigma_lG_\Omega(u_k,u_l)},
\end{align}
(note that the last two terms differ in the $\partial_y G_\Omega$-term) with $p\neq q$ are locally integrable in $u$ when $|x-y|$ is bounded from below (since $g$ and $h$ have disjoint supports). For this, we make use of Proposition \ref{pr:mombd2} and Proposition \ref{pr:mombd3}.

In the notation of Proposition \ref{pr:mombd2}, we note that the $u$-integral of \eqref{eq:term1} over $K^n$ for some compact $K\subset \Omega$ can be bounded by  (a constant times)
\begin{equation*}
\int_K d^2u \frac{1}{|x-u|}|G_n(u)|,
\end{equation*}
which is convergent and bounded in $x$ due to boundedness of $G_n$. Similarly the corresponding integral for \eqref{eq:term2} can be bounded by 
\begin{equation*}
\int_K d^2 u \frac{1}{|x-u||y-u|}|G_n(u)|,
\end{equation*}
which again is convergent and bounded in $x,y$ (with $|x-y|$ bounded from below) by boundedness of $G_n$. Finally for \eqref{eq:term3}, we have an upper bound of the form 
\begin{equation*}
\int_{K^2}d^2 u d^2 v\frac{1}{|x-u|}\frac{1}{|y-v|}|H_n(u,v)|,
\end{equation*}
where $H_n$ is as in Proposition \ref{pr:mombd3}. Using Proposition \ref{pr:mombd3} and Young's convolution inequality, we can bound this by an $n$-dependent constant times
\begin{equation*}
\left(\int_{B(0,R)}d^2 u \frac{1}{|u-x|^p}\right)^{1/p}\left(\int_{B(0,R)}d^2 u \frac{1}{|u|^{\frac{\beta}{2\pi}q}}\right)^{1/q}\left(\int_{B(0,R)}d^2 u \frac{1}{|u-y|^r}\right)^{1/r}
\end{equation*}
for $p,q,r\geq 1$ satisfying $\frac{1}{p}+\frac{1}{q}+\frac{1}{r}=2$, and $R$ large enough, but fixed. We choose $q>1$ small enough such that $q\frac{\beta}{2\pi}<2$ (and the middle integral remains finite). We also choose $p=r$ so that the remaining integrals are convergent and bounded in $x,y$ (since $p,r<2$ in this case). This justifies the use of Fubini in \eqref{eq:int2}, and concludes the proof.
\end{proof}

We next record a similar claim for the correlation function involving a Wick exponential.
\begin{lemma}\label{le:coef2}
For $\eta\in C_c^\infty(\Omega)$, $\O=e^{i\varphi(f)}$ with $f\in C_c^\infty(\Omega)$, $\psi\in C_c^\infty(\Omega)$, $\sigma_1,...,\sigma_n\in \{-1,1\}$, and $\beta\in (0,4\pi)$, we have 
\begin{align*}
\gffcf{\:e^{\pm i \sqrt{\beta}\varphi}\:(\eta)\O\prod_{j=1}^n \:e^{i\sigma_j \sqrt{\beta}\varphi}\:(\psi)}=\int_{\Omega}d^2x \eta(x)\gffcf{\:e^{\pm i \sqrt{\beta}\varphi(x)}\: \O \prod_{j=1}^n \:e^{i\sigma_j \sqrt{\beta}\varphi}\:(\psi)},
\end{align*}
where 
\begin{align}
\notag &\gffcf{\:e^{\pm i \sqrt{\beta}\varphi(x)}\: \O \prod_{j=1}^n \:e^{i\sigma_j \sqrt{\beta}\varphi}\:(\psi)}\\
&\quad =\gffcf{\O}e^{-\frac{\beta}{2}g_\Omega(x,x)}e^{\mp \sqrt{\beta}\int_{\Omega}d^2 v f(v)G_\Omega(v,x)}\notag\\
&\qquad \times \int_{\Omega^n}d^{2n}u \prod_{j=1}^n (\psi_j(u_j))e^{-\beta\sum_{1\leq k<l\leq n}\sigma_k\sigma_l G_\Omega(u_k,u_l)}  e^{\mp \beta\sum_{p=1}^n \sigma_p G_\Omega(x,u_p)},\label{eq:1ptexprep}
\end{align}
and
\begin{equation*}
\psi_j(u)=\psi(u)e^{-\frac{\beta}{2}g_\Omega(u,u)}e^{-\sqrt{\beta}\sigma_j\int_\Omega d^2 v f(v)G_\Omega(v,u)}. 
\end{equation*}
All the integrals are absolutely convergent.
\end{lemma}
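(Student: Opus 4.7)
The structure of the argument parallels the proof of Lemma \ref{le:coef1}, but with the simplification that no derivative fields are present, so no integration by parts is needed and the $L^p$ control is more direct.

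My first step is to express the left-hand side as the $\epsilon \to 0$ limit of its regularized version, writing
\[
\gffcf{\:e^{\pm i\sqrt{\beta}\varphi}\:(\eta)\O \prod_{j=1}^n \:e^{i\sigma_j\sqrt{\beta}\varphi}\:(\psi)} = \lim_{\epsilon \to 0} \gffcf{\:e^{\pm i\sqrt{\beta}\varphi_\epsilon}\:(\eta)\O \prod_{j=1}^n \:e^{i\sigma_j\sqrt{\beta}\varphi_\epsilon}\:(\psi)}.
\]
This is justified by Lemma \ref{le:wickconv} combined with the uniform integrability / $L^p$ convergence provided by Proposition \ref{pr:mombd1} (cf.\ Remark \ref{re:vitali}) and a repeated use of Cauchy--Schwarz. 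Next I use Fubini and the definition \eqref{eq:wickdef} of Wick ordering to rewrite the right-hand side as
\[
\int_\Omega d^2x\, \eta(x)\int_{\Omega^n} d^{2n}u \prod_{j=1}^n \psi(u_j)\, \gffcf{\O\:e^{\pm i\sqrt{\beta}\varphi_\epsilon(x)}\:\prod_{j=1}^n \:e^{i\sigma_j\sqrt{\beta}\varphi_\epsilon(u_j)}\:}.
\]

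My second step is to compute the pointwise Gaussian correlation function explicitly. Since $\varphi(f),\varphi_\epsilon(x)$, and the $\varphi_\epsilon(u_j)$ are jointly centered Gaussian, $\E[e^{iV}]=e^{-\tfrac12 \E(V^2)}$ applied to $V=\varphi(f)\pm\sqrt{\beta}\varphi_\epsilon(x)+\sum_j \sigma_j\sqrt{\beta}\varphi_\epsilon(u_j)$ together with \eqref{eq:gffcov2} yields, for $\epsilon$ smaller than the distance from the supports of $\eta$ and $\psi$ to $\partial\Omega$, exactly the expression
\[
\gffcf{\O} e^{-\tfrac{\beta}{2}g_\Omega(x,x)} e^{\mp\sqrt{\beta}\int d^2v\, f(v)G_\Omega(v,x)}\prod_j \psi_j(u_j) \cdot e^{-\beta\sum_{k<l}\sigma_k\sigma_l\gffcf{\varphi_\epsilon(u_k)\varphi_\epsilon(u_l)}} e^{\mp\beta\sum_p \sigma_p \gffcf{\varphi_\epsilon(x)\varphi_\epsilon(u_p)}}
\]
(after folding the $f$-dependent exponentials into the definition of $\psi_j$), where $\psi_j$ is as stated in the lemma.

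My third step is to pass $\epsilon \to 0$ inside the $u$- and $x$-integrals. Pointwise, Lemma \ref{le:varphiecov} item (1) shows that the regularized covariances converge to $G_\Omega$ off the diagonal. To apply dominated convergence, I use Onsager's inequality (Lemma \ref{le:ons} item (1)) to bound the full integrand uniformly in $\epsilon$ by
\[
C^{n+1}\, |\eta(x)|\prod_j|\psi(u_j)|\, \Bigl(\min_{z\in\{x,u_1,\ldots,u_n\}\setminus\{w\}}|w-z|\Bigr)^{-\tfrac{\beta}{4\pi}}_{w}
\]
(the product running over all $n+1$ vertices). The argument of Step~4 of the proof of Proposition \ref{pr:mombd1} shows this majorant is integrable over $\Omega^{n+1}$, so dominated convergence yields the formula with $\gffcf{\varphi_\epsilon\cdots}$ replaced by $G_\Omega$.

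The only remaining step is to extract the $x$-integral by applying Fubini, which will give the stated integral representation with the one-point function \eqref{eq:1ptexprep}. This is the main technical step and the place to be most careful, because of the singularities $e^{\mp\beta\sigma_p G_\Omega(x,u_p)}\sim |x-u_p|^{\mp\sigma_p \beta/2\pi}$ as $x\to u_p$. The clean way to handle it is to treat $x$ on an equal footing with the $u_j$: applying Proposition \ref{pr:mombd2} to the $(n+1)$-point integrand obtained by fixing $x_1=x$ and integrating out $u_1,\ldots,u_n$ (with all charges in $(-\sqrt{4\pi},\sqrt{4\pi})$ since $\beta<4\pi$), the resulting function of $x$ is bounded (in fact Lipschitz) on $\mathrm{supp}(\eta)$. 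In particular, the full integrand is absolutely integrable on $\Omega^{n+1}$, which justifies Fubini and simultaneously shows that the formula \eqref{eq:1ptexprep} defines an absolutely convergent integral. I expect this Fubini justification to be the only nontrivial point; everything else is a direct adaptation of computations already carried out in Lemma \ref{le:coef1} and Proposition \ref{pr:mombd1}.
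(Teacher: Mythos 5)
Your proposal is correct and follows essentially the same route as the paper: regularize, evaluate the Gaussian expectation explicitly via the covariance formula, pass $\epsilon\to 0$ inside the integrals by dominated convergence using Lemma \ref{le:varphiecov} together with the Onsager/moment-bound machinery, and justify the final Fubini step by applying Proposition \ref{pr:mombd2} with $x$ in the role of the unintegrated variable. The paper's own proof is just a terser version of the same argument, so no further comment is needed.
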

\begin{proof}
The proof is similar to the previous one. By Lemma \ref{le:wickconv}, Proposition \ref{pr:mombd1} (and Vitali's convergence theorem -- see Remark \ref{re:vitali}), and a routine Gaussian calculation (namely expressing the moment generating function in terms of covariances), we can write 
\begin{align*}
&\gffcf{\:e^{\pm i \sqrt{\beta}\varphi}\:(\eta)\O\prod_{j=1}^n \:e^{i\sigma_j \sqrt{\beta}\varphi}\:(\psi)}\\
&=\lim_{\epsilon\to 0}\gffcf{\:e^{\pm i \sqrt{\beta}\varphi_\epsilon}\:(\eta)\O\prod_{j=1}^n \:e^{i\sigma_j \sqrt{\beta}\varphi_\epsilon}\:(\psi)}\\
&=\gffcf{\O}\lim_{\epsilon\to 0}\int_{\Omega} d^2x \eta(x) e^{\mp \sqrt{\beta}\gffcf{\varphi_\epsilon(x)\varphi(f)}} e^{-\frac{\beta}{2}g_\Omega(x,x)}\\
&\qquad \times \int_{\Omega^n}d^{2n}u \prod_{j=1}^n \left(\psi(u_j) e^{-\frac{\beta}{2}g_\Omega(u_j,u_j)} e^{-\sigma_j \sqrt{\beta}\gffcf{\varphi_\epsilon(u_j)\varphi(f)}}e^{\mp \sigma_j \beta \gffcf{\varphi_\epsilon(x)\varphi_\epsilon(u_j)}}\right)\\
&\qquad \qquad \times e^{-\beta \sum_{1\leq k<l\leq n}\sigma_k \sigma_l \gffcf{\varphi_\epsilon(u_k)\varphi_\epsilon(u_l)}}.
\end{align*}
As in the previous proof, we can use Lemma \ref{le:varphiecov} and Proposition \ref{pr:mombd2} to justify the use of the dominated convergence theorem, take $\epsilon\to 0$ inside of the integrals, and to recover the claim of the lemma.
\end{proof}

To be able to use these pointwise correlation functions to prove Proposition \ref{pr:sgpw}, we need continuity estimates for them. We begin with the derivatives again. 
\begin{lemma}\label{le:cont1}
The function
\begin{equation*}
(x,y)\mapsto \gffcf{\partial \varphi(x)\partial\varphi(y)\O\prod_{j=1}^n \:e^{i\sigma_j\sqrt{\beta}\varphi}\:(\psi)}
\end{equation*}
is continuous on the set $\{(x,y)\in \Omega^2: x\neq y\}$, and for any $K\subset \{(x,y)\in \Omega^2: x\neq y\}$ compact, we have the estimate
\begin{equation}\label{eq:2ptbound}
\sup_{(x,y)\in K}\left|\gffcf{\partial \varphi(x)\partial\varphi(y)\O\prod_{j=1}^n \:e^{i\sigma_j\sqrt{\beta}\varphi}\:(\psi)}\right|\leq C^n n^{\frac{\beta}{8\pi}n}
\end{equation}
for some constant $C$ depending only on $K,\psi,f$ and $\beta$.

Analogous claims are true if one or both of the holomorphic derivatives $\partial$ are replaced by an antiholomorphic one $\bar\partial$.
\end{lemma}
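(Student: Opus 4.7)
The plan is to start from the explicit integral representation given in Lemma~\ref{le:coef1}, which writes the quantity as $\gffcf{\O}$ times an $n$-fold integral over $\Omega^n$ of the weight $\prod_j \psi_j(u_j)\, e^{-\beta\sum_{k<l}\sigma_k\sigma_l G_\Omega(u_k,u_l)}$ multiplied by $\partial_x\partial_y G_\Omega(x,y) - V(x;u)V(y;u)$. I would bound each resulting piece uniformly on $K$ by $C^n n^{\frac{\beta}{8\pi}n}$ and deduce continuity on $\{(x,y)\in\Omega^2: x\neq y\}$ via dominated convergence.

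The $\partial_x\partial_y G_\Omega(x,y)$ piece is easy: by \eqref{eq:gomega} it equals $-\frac{1}{4\pi(x-y)^2}+\partial_x\partial_y g_\Omega(x,y)$, which is smooth on $\{x\neq y\}$ and hence uniformly bounded on $K$, while the remaining $n$-fold integral is directly of the form bounded by $C^n n^{\frac{\beta}{8\pi}n}$ in Proposition~\ref{pr:mombd1} (noting that $u\mapsto \int f(v) G_\Omega(v,u)\,dv$ is smooth on $\mathrm{supp}(\psi)$, so $\max_j \|\psi_j\|_{\infty}$ is uniformly bounded in $j$ and the choice of signs $\sigma_j$). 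For the $V(x;u)V(y;u)$ piece, I would expand $V(x;u) = A(x) + \sqrt{\beta}\sum_p \sigma_p \partial_x G_\Omega(x,u_p)$, with $A$ smooth in $x$ by Lemma~\ref{le:gffpw}, and analogously for $V(y;u)$. This yields four families of terms: (a) the $A(x)A(y)$ term, handled exactly as the first piece; (b) cross terms $A(x)\partial_y G_\Omega(y,u_p)$ (and the symmetric ones), which after applying Proposition~\ref{pr:mombd2} with $x_1=u_p$ to the integral over the remaining $n-1$ variables and using local integrability of $|\partial_y G_\Omega(y,\cdot)|$ are bounded uniformly on $K$, with the combinatorial factor $n$ from summing over $p$ absorbed into $C^n$; (c) diagonal terms $\partial_x G_\Omega(x,u_p)\partial_y G_\Omega(y,u_p)$, which Proposition~\ref{pr:mombd2} reduces to $\int |\partial_x G_\Omega(x,u)||\partial_y G_\Omega(y,u)|\,du \leq C(1+\log|x-y|^{-1})$, uniformly bounded on $K$ because $|x-y|$ is bounded below there.

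The main obstacle is family (d), the off-diagonal terms $\partial_x G_\Omega(x,u_p)\partial_y G_\Omega(y,u_q)$ with $p\neq q$. Here I would invoke Proposition~\ref{pr:mombd3} to bound the remaining $(n-2)$-fold integral by $|u_p-u_q|^{-\beta/(2\pi)} C^n n^{\frac{\beta}{8\pi}n}$, leaving a two-dimensional convolution-type integral
\begin{equation*}
\int_{\Omega^2} |x-u_p|^{-1}|y-u_q|^{-1}|u_p-u_q|^{-\beta/(2\pi)}\,du_p\,du_q,
\end{equation*}
which I would control by Young's convolution inequality with exponents $a=c$ and $b>1$ chosen so that $b\beta/(2\pi)<2$, exactly as in the final step of the proof of Lemma~\ref{le:coef1}; the combinatorial factor $n(n-1)$ from the sum over $p\neq q$ is absorbed into $C^n$. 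Continuity of the map $(x,y)\mapsto \gffcf{\partial\varphi(x)\partial\varphi(y)\O\prod_{j=1}^n \:e^{i\sigma_j\sqrt{\beta}\varphi}\:(\psi)}$ on $\{x\neq y\}$ then follows in each of (a)--(d) by dominated convergence: the integrands depend continuously on $(x,y)$ off the measure-zero set of their singularities, and the uniform bounds established above (together with $|x-y|$ staying bounded below in (c) and (d)) provide the required integrable dominant. The analogous claims with one or both holomorphic derivatives $\partial$ replaced by $\bar\partial$ follow by the same argument, since the singular behavior of $\bar\partial G_\Omega$ matches that of $\partial G_\Omega$.
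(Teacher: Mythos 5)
Your proposal is correct and follows essentially the same route as the paper: the same decomposition of $V(x;u)V(y;u)$ into the smooth part, the single-kernel terms, the diagonal terms $\partial_x G_\Omega(x,u_p)\partial_y G_\Omega(y,u_p)$, and the off-diagonal terms $p\neq q$, with Propositions~\ref{pr:mombd1}--\ref{pr:mombd3} and Young's convolution inequality supplying exactly the bounds the paper uses. The only real difference is that the paper establishes continuity by proving explicit (H\"older) moduli of continuity, whereas you invoke dominated convergence; since the singularities of the kernels move with $(x,y)$ this needs the standard extra step of splitting off a small ball around the singular point (or noting that a locally integrable kernel convolved with a bounded compactly supported function is continuous), but this is routine and does not affect the validity of the argument.
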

\begin{proof}
We begin with the definition of the function: according to Lemma \ref{le:coef1}:
\begin{align*}
&\gffcf{\partial\varphi(x)\partial\varphi(y)\O\prod_{j=1}^n \:e^{i\sigma_j\sqrt{\beta}\varphi}\:(\psi)}\\
&\quad =\partial_x \partial_y G_\Omega(x,y)\gffcf{\O}\int_{\Omega^n}d^{2n}u\prod_{j=1}^n (\psi_j(u_j) )e^{-\beta \sum_{1\leq k<l\leq n}\sigma_k\sigma_lG_\Omega(u_k,u_l)}\\
&\qquad -\gffcf{\O}\int_{\Omega^n}d^{2n}u\prod_{j=1}^n (\psi_j(u_j))e^{-\beta \sum_{1\leq k<l\leq n}\sigma_k\sigma_lG_\Omega(u_k,u_l)}V(x;u)V(y;u).
\end{align*}
We see that $\partial_x\partial_y G_\Omega(x,y)$ is continuous on any $K\subset \{(x,y)\in \Omega^2: x\neq y\}$ while the constant multiplying it satisfies the bound in the statement according to Proposition \ref{pr:mombd1} (note that $\psi_j\in C_c^\infty(\Omega)$). We can thus focus on the $V$-term.

Let us recall that 
\begin{align*}
V(x;u)=\int_\Omega d^2v f(v)\partial_x G_\Omega(v,x)+\sqrt{\beta}\sum_{j=1}^n \sigma_j \partial_x G_\Omega(x,u_j).
\end{align*}
Since $x\mapsto \int_{\Omega}d^2 vf(v)\partial_x G_\Omega(x,v)$ is smooth (see the proof of Lemma \ref{le:gffpw}), it is sufficient for us to obtain the required control for terms of the following type
\begin{align}
\label{eq:boundterm1} I_1(x)&=\int_{\Omega^n}d^{2n}u\prod_{j=1}^n (\psi_j(u_j))e^{-\beta \sum_{1\leq k<l\leq n}\sigma_k\sigma_lG_\Omega(u_k,u_l)} \partial_x G_\Omega(x,u_1),\\
\label{eq:boundterm2}I_2(x,y) &= \int_{\Omega^n}d^{2n}u\prod_{j=1}^n (\psi_j(u_j))e^{-\beta \sum_{1\leq k<l\leq n}\sigma_k\sigma_lG_\Omega(u_k,u_l)} \partial_x G_\Omega(x,u_1)\partial_y G_\Omega(y,u_1),\\
\label{eq:boundterm3} I_3(x,y)&= \int_{\Omega^n}d^{2n}u\prod_{j=1}^n (\psi_j(u_j))e^{-\beta \sum_{1\leq k<l\leq n}\sigma_k\sigma_lG_\Omega(u_k,u_l)} \partial_x G_\Omega(x,u_1)\partial_y G_\Omega(y,u_2).
\end{align}
We start by looking at $I_1$. Using \eqref{eq:gomega} and the notation of Proposition \ref{pr:mombd2}, we can write 
\begin{equation*}
I_1(x)=\int_{\Omega}d^2u \psi_1(u)\left[\frac{1}{4\pi}\frac{1}{u-x}+\partial_x g_\Omega(x,u)\right]G_n(u).
\end{equation*}
Continuity and bounds for the $g_\Omega$-term follow readily from the smoothness of $g_\Omega$ and Proposition \ref{pr:mombd2}. For the $(u-x)^{-1}$, term, to estimate continuity, we write 
\begin{align*}
\left|\int_\Omega d^2 u \psi_1(u)\frac{G_n(u)}{u-x}-\int_\Omega d^2 u \psi_1(u)\frac{G_n(u)}{u-x'}\right|&\leq |x-x'| \int_\Omega d^2 u |\psi_1(u)|\frac{|G_n(u)|}{|u-x||u-x'|}\\
&\leq |x-x'| \|\psi_1 G_n\|_{L^\infty(\Omega)}\int_{B(0,R)}d^2 u \frac{1}{|u-x||u-x'|},
\end{align*}
where $R$ is large enough. One can readily\footnote{By translating and scaling, the question is equivalent to bounding $\int_{B(0,1)}d^2 u \frac{1}{|u||u-(x-x')|}$. By rotating $u$ by the phase of $x-x'$ and scaling by $|x-x'|$, we see that this integral is $\int_{|u|\leq |x-x'|^{-1}}d^2 u \frac{1}{|u||u-1|}$. For small $|x-x'|$, we can split the integration region into $B(0,10)\cup \{u: 10<|u|\leq |x-x'|^{-1}\}$. In the first region, we simply get a constant, and in the latter domain, we can bound $\frac{1}{|u||u-1|}\leq \frac{C}{|u|^2}$ for a universal $C$, so the question becomes understanding asymptotics of $\int_{10<|u|\leq |x-x'|^{-1}}\frac{d^2u}{|u|^2}$. Going into polar coordinates, we see that this integral can be bounded by a constant times $1+|\log |x-x'||$.} bound this integral by a constant times $1+|\log |x-x'||$, so we see that $I_1$ is at least Hölder continuous. For the quantitative bound, we note that 
\begin{align*}
\left|\int_\Omega d^2 u \psi_1(u) \frac{G_n(u)}{|u-x|}\right|\leq \|\psi_1 G_n\|_{L^\infty(\Omega)} \int_{\Omega} \frac{d^2 u}{|u-x|},
\end{align*}
for which Proposition \ref{pr:mombd2} provides an estimate consistent with \eqref{eq:2ptbound}. Thus we conclude that $I_1$ satisfies the bound of \eqref{eq:2ptbound}. 

For $I_2$, similar considerations show that the relevant quantity to bound and control is 
\begin{align*}
\int_{\Omega}d^2 u\psi_1(u)\frac{1}{(u-x)(u-y)}G_n(u).
\end{align*}
Using that $|x-y|$ is bounded from below, one can use similar arguments to prove that this is Hölder continuous on any compact subset of $\{(x,y)\in \Omega^2: x\neq y\}$, and for the $L^\infty$-bound we have 
\begin{align*}
\left|\int_{\Omega}d^2 u\psi_1(u)\frac{1}{(u-x)(u-y)}G_n(u)\right|\leq \|\psi_1G_n\|_{L^\infty(\Omega)}\int_\Omega \frac{d^2 u}{|u-x||u-y|}.
\end{align*}
which satisfies the bound of \eqref{eq:2ptbound} by Proposition \ref{pr:mombd2}.

For $I_3$, we see that in the notation of Proposition \ref{pr:mombd3}, the relevant quantity to estimate is 
\begin{equation*}
L(x,y):=\int_{\Omega^2}d^2 u_1 d^2 u_2 \psi_1(u_1)\psi_2(u_2)\frac{1}{u_1-x}\frac{1}{u_2-y}H_n(u_1,u_2).
\end{equation*}
For continuity, we note for example that Proposition \ref{pr:mombd3} implies that 
\begin{align*}
|L(x,y)-L(x',y)|&\leq |x-x'| C_n \int_{\Omega^2}d^2 u_1 d^2 u_2\frac{1}{|u_1-x||u_1-x'|}\frac{1}{|u_2-y|}\frac{1}{|u_1-u_2|^{\frac{\beta}{2\pi}}}
\end{align*}
for a suitable constant $C_n$. This integral can be estimated with Young's convolution inequality, and we have for a suitable $R>0$
\begin{align*}
|L(x,y)-L(x',y)|&\leq |x-x'|C_n \left(\int_{B(0,R)} d^2u \frac{1}{|u-x|^p|u-x'|^p}\right)^{1/p} \left(\int_{B(0,R)}d^2 u \frac{1}{|u|^{q\frac{\beta}{2\pi}}}\right)^{1/q}\\
&\quad \times \left(\int_{B(0,R)}d^2 u \frac{1}{|u-y|^{r}}\right)^{1/r}
\end{align*} 
with $\frac{1}{p}+\frac{1}{q}+\frac{1}{r}=2$. Choosing $q>1$ small enough to satisfy $q\frac{\beta}{2\pi}<2$ and $r=p$, we see that the second integral is finite and last one is bounded in $y$ (since $r<2$ as well). For the first integral, we see that small $|x-x'|$ behavior of the integral is the same as the small $|x|$ behavior of the integral 
\begin{equation*}
\int_{B(0,1)}d^2u \frac{1}{|u-x|^p|u|^p}=|x|^{2-2p}\int_{B(0,|x|^{-1})}d^2u \frac{1}{|u-1|^p|u|^p}.
\end{equation*}
We see that 
\begin{align*}
|L(x,y)-L(x',y)|\leq \widetilde C_n |x-x'|^{1+\frac{2}{p}-2}.
\end{align*}
A similar argument provides the corresponding estimate when varying the second variable, and we see that since $p<2$, $L$ is Hölder continuous.

A similar argument relying on Young's convolution inequality and Proposition \ref{pr:mombd3} shows that $L$ also satisfies the required $L^\infty$-bound. Thus we have established that all terms are continuous and enjoy the $L^\infty$-bound in the statement of the lemma. 
\end{proof}

We have a similar claim for the exponential case. 
\begin{lemma}\label{eq:cont2}
The function 
\begin{equation*}
x\mapsto \gffcf{\:e^{\pm i \sqrt{\beta}\varphi(x)}\: \O \prod_{j=1}^n \:e^{i\sigma_j \sqrt{\beta}\varphi}\:(\psi)}
\end{equation*}
is continuous on $\Omega$ and for any $K\subset \Omega$ compact, we have the estimate
\begin{align*}
\sup_{x\in K}\left|\gffcf{\:e^{\pm i \sqrt{\beta}\varphi(x)}\: \O \prod_{j=1}^n \:e^{i\sigma_j \sqrt{\beta}\varphi}\:(\psi)}
\right|\leq C^n n^{\frac{\beta}{8\pi}n}
\end{align*}
for some constant $C$ depending only on $\beta,K,f,\psi$.
\end{lemma}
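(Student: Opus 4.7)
The plan is to reduce everything to the representation already supplied by Lemma \ref{le:coef2} and then match the remaining integral with the $G_{n+1}$ of Proposition \ref{pr:mombd2}. According to Lemma \ref{le:coef2}, the one-point function of interest factors as
\begin{equation*}
\gffcf{\O}\, e^{-\frac{\beta}{2}g_\Omega(x,x)}\, e^{\mp \sqrt{\beta}\int_{\Omega}d^2 v\, f(v)G_\Omega(v,x)}\, I(x),
\end{equation*}
where
\begin{equation*}
I(x) := \int_{\Omega^n}d^{2n}u \prod_{j=1}^n \psi_j(u_j)\, e^{-\beta\sum_{1\leq k<l\leq n}\sigma_k\sigma_l G_\Omega(u_k,u_l)}\, e^{\mp \beta\sum_{p=1}^n \sigma_p G_\Omega(x,u_p)}
\end{equation*}
and $\psi_j \in C_c^\infty(\Omega)$. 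The prefactor $\gffcf{\O}$ is a constant, while $g_\Omega(\cdot,\cdot)$ is smooth on $\Omega\times\Omega$ and $x\mapsto \int f(v)G_\Omega(v,x)\,d^2v$ is smooth on $\Omega$ (by the argument in the proof of Lemma \ref{le:gffpw}, using \eqref{eq:gomega} and integration by parts). Thus the non-integral prefactor is smooth in $x$ and bounded, together with its first derivatives, uniformly on any compact $K\subset \Omega$ by some constant depending only on $K,\beta,f$.

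The main step is to recognize $I(x)$ as an instance of $G_{n+1}^{(\alpha,f)}$ from Proposition \ref{pr:mombd2}. Setting $n' = n+1$, $x_1 = x$, and relabeling $x_{j+1} = u_j$ for $j=1,\ldots,n$, we choose $\alpha_1 = \pm\sqrt{\beta}$, $\alpha_{j+1} = \sigma_j\sqrt{\beta}$, and $f_{j+1} = \psi_j$ for $j=1,\ldots,n$. Expanding $-\sum_{1\leq k<l\leq n+1}\alpha_k\alpha_l G_\Omega(x_k,x_l)$ according to whether $k=1$ or $k\geq 2$ reproduces exactly the two exponential factors in $I(x)$, and $\|\alpha\| = \sqrt{\beta}$ since $\beta<4\pi$ and all $|\sigma_j|=1$. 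Hence $I(x) = G_{n+1}(x)$ in the notation of Proposition \ref{pr:mombd2}.

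Applying Proposition \ref{pr:mombd2} directly yields that $I$ is Lipschitz continuous (in particular continuous) on any compact $K\subset\Omega$, with
\begin{equation*}
\|I\|_{\mathrm{Lip}(K)} \leq C^{n+1}(n+1)^{\frac{\beta}{8\pi}(n+1)},
\end{equation*}
where $C$ depends only on $K, \Omega, \beta, \psi, f$. Absorbing the off-by-one into the constant, this is bounded by $\widetilde C^n n^{\frac{\beta}{8\pi}n}$ for a slightly larger constant $\widetilde C$ of the same type. Multiplying by the smooth, uniformly bounded prefactor then gives continuity of the whole expression on $\Omega$ and the required estimate
\begin{equation*}
\sup_{x\in K}\left|\gffcf{\:e^{\pm i \sqrt{\beta}\varphi(x)}\:\O \prod_{j=1}^n \:e^{i\sigma_j \sqrt{\beta}\varphi}\:(\psi)}\right| \leq C^n n^{\frac{\beta}{8\pi}n}.
\end{equation*}
There is no real obstacle here beyond verifying the identification of $I(x)$ with $G_{n+1}$; the substance of the lemma has already been packaged into Proposition \ref{pr:mombd2}.
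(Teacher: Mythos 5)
Your proof is correct and follows essentially the same route as the paper: the paper's proof is precisely to apply Proposition \ref{pr:mombd2} to the representation \eqref{eq:1ptexprep} with $x$ in the role of $x_1$, obtaining a bound $C^{n+1}(n+1)^{\frac{\beta}{8\pi}(n+1)}$ for $G_{n+1}$ which is then absorbed into $\widetilde C^n n^{\frac{\beta}{8\pi}n}$. Your write-up merely makes explicit the identification of the integral with $G_{n+1}$ and the smoothness of the prefactors, which the paper leaves implicit.
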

\begin{proof}
This follows by applying Proposition \ref{pr:mombd2} to \eqref{eq:1ptexprep} (with $x$ playing the role of $x_1$ there). Note that in this case, we apply Proposition \ref{pr:mombd2} to $G_{n+1}$, so we get a bound of the form $C^{n+1}(n+1)^{\frac{\beta}{8\pi}(n+1)}$, but this can be bounded by $\widetilde C^n n^{\frac{\beta}{8\pi}n}$ for a suitable $\widetilde C$.
\end{proof}

We can now turn to the proof of Proposition \ref{pr:sgpw}.
\begin{proof}[Proof of Proposition \ref{pr:sgpw}]
By Lemma \ref{le:sgsmeared} and Lemma \ref{le:coef1}, we can write 
\begin{align*}
Z_\mu(g,h)&=\sum_{n=0}^\infty  \frac{\mu^n}{2^n n!}\sum_{\sigma_1,...,\sigma_n\in \{-1,1\}}\gffcf{\partial \varphi(g)\partial \varphi(h)\O\prod_{j=1}^n \:e^{i\sigma_j \sqrt{\beta}\varphi}\:(\psi)}\\
&=\sum_{n=0}^\infty \int_{\Omega^2}d^2x d^2 y g(x)h(y)\frac{\mu^n}{2^n n!}\sum_{\sigma_1,...,\sigma_n\in \{-1,1\}}\gffcf{\partial \varphi(x)\partial \varphi(y)\O\prod_{j=1}^n \:e^{i\sigma_j \sqrt{\beta}\varphi}\:(\psi)}.
\end{align*}
By Fubini and Lemma \ref{le:cont1}, we can interchange the order of the sum over $n$ and the integral over $x$ and $y$ (since $g,h$ have disjoint supports). This means that we have 
\begin{align*}
Z_\mu(g,h)&= \int_{\Omega^2}d^2x d^2 y g(x)h(y)\sum_{n=0}^\infty\frac{\mu^n}{2^n n!}\sum_{\sigma_1,...,\sigma_n\in \{-1,1\}}\gffcf{\partial \varphi(x)\partial \varphi(y)\O\prod_{j=1}^n \:e^{i\sigma_j \sqrt{\beta}\varphi}\:(\psi)}.
\end{align*}
The kernel on the right hand side is our candidate for the function 
\begin{equation*}
\gffcf{\partial \varphi(x)\partial \varphi(y)\O e^{\mu \Re(\:e^{i\sqrt{\beta}\varphi}\:(\psi))}}.
\end{equation*}
It remains to prove the continuity of this on the required domain. Each summand is continuous by Lemma \ref{le:cont1}, and for any compact $K\subset \{(x,y)\in \Omega^2: x\neq y\}$, the series converges absolutely in $L^\infty(K)$. Thus by completeness of $(C(K),\|\cdot\|_\infty)$, our series defines a continuous function on each compact $K\subset \{(x,y)\in \Omega^2: x\neq y\}$ and thus defines a continuous function on $\{(x,y)\in \Omega^2: x\neq y\}$. The proofs for the exponential and other derivative cases are analogous and we omit the details. 
\end{proof}

This concludes our construction of the sine-Gordon correlation functions featuring in Theorem \ref{th:main} so we can turn to the proof of the OPEs.

\section{The OPE for derivative fields for the sine-Gordon model -- Proof of Theorem \ref{th:main}}\label{sec:sGope}

In this section, we prove Theorem \ref{th:main}. This will follow immediately from the following two results. 
\begin{proposition}\label{pr:ope1}
For the functions defined in Proposition \ref{pr:sgpw}, we have that the following function
\begin{align*}
\gffcf{\partial\varphi(x)\partial\varphi(y)\O e^{\mu \Re(\:e^{i\sqrt{\beta}\varphi}\:(\psi))}}&-\gffcf{\partial \varphi(x)\partial \varphi(y)}\gffcf{\O e^{\mu \Re(\:e^{i\sqrt{\beta}\varphi}\:(\psi))}}\\
&\quad -\mu \frac{\beta}{16\pi} \frac{|x-y|^2}{(x-y)^2}\psi(y)\gffcf{\:\cos (\sqrt{\beta}\varphi(y))\:\O e^{\mu \Re(\:e^{i\sqrt{\beta}\varphi}\:(\psi))}}
\end{align*}
has a limit as $x\to y$.
\end{proposition}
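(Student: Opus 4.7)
The plan is to use the $\mu$-series expansion for the numerator from Lemma \ref{le:sgsmeared}, together with the pointwise representation from Lemma \ref{le:coef1}, and then isolate the singular $\frac{\bar x-\bar y}{x-y}$ contribution term by term in $n$. At level $n$, Lemma \ref{le:coef1} writes the $n$-th coefficient as
\begin{equation*}
\gffcf{\O}\int_{\Omega^n}d^{2n}u\,\prod_{j=1}^n \psi_j(u_j)\,e^{-\beta\sum_{k<l}\sigma_k\sigma_l G_\Omega(u_k,u_l)}\bigl(\partial_x\partial_y G_\Omega(x,y)-V(x;u)V(y;u)\bigr).
\end{equation*}
The $\partial_x\partial_y G_\Omega(x,y)$ piece factors out and, summed over $n$, reproduces exactly $\gffcf{\partial\varphi(x)\partial\varphi(y)}\gffcf{\O e^{\mu\Re(\:e^{i\sqrt\beta\varphi}\:(\psi))}}$. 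So after the first subtraction, we are left to control the contribution of $-V(x;u)V(y;u)$.

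Next I expand $V(x;u)V(y;u)$. Writing $A(x)=\int f(v)\partial_xG_\Omega(v,x)\,d^2v$ (smooth in $x$ by the argument in Lemma \ref{le:gffpw}), the product splits into: (i) the $A(x)A(y)$ piece; (ii) mixed pieces $A(x)\sqrt\beta\sum_j\sigma_j\partial_yG_\Omega(y,u_j)$ and symmetric; (iii) off-diagonal $\beta\sum_{j\neq j'}\sigma_j\sigma_{j'}\partial_xG_\Omega(x,u_j)\partial_yG_\Omega(y,u_{j'})$; and (iv) diagonal $\beta\sum_j\partial_xG_\Omega(x,u_j)\partial_yG_\Omega(y,u_j)$. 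Pieces (i)--(iii) extend continuously to $x=y$: for (i) this is immediate; for (ii) the only potential issue is $u_j\to y$, where $\partial_y G_\Omega(y,u_j)\sim\frac{1}{y-u_j}$ is integrable against the smooth (by Proposition \ref{pr:mombd2}) function obtained after integrating out $u_{\neq j}$; for (iii) the kernel is $\frac{1}{(x-u_j)(y-u_{j'})}$ times a function controlled by Proposition \ref{pr:mombd3}, and a Young-type estimate as in the proof of Lemma \ref{le:cont1} shows continuity down to $x=y$.

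The heart of the argument is piece (iv). By the symmetry in $j$, I integrate out $u_2,\ldots,u_n$ to reduce to
\begin{equation*}
n\beta\int_\Omega d^2u_1\,\psi_{\sigma_1}(u_1)\,\partial_x G_\Omega(x,u_1)\,\partial_y G_\Omega(y,u_1)\,H^{(\sigma)}(u_1),
\end{equation*}
where $H^{(\sigma)}$ is the integral over the remaining variables, which Proposition \ref{pr:mombd2} shows to be Lipschitz on compacts. Using $\partial_xG_\Omega(x,u)=-\frac{1}{4\pi(x-u)}+\partial_xg_\Omega(x,u)$, and similarly for $y$, the $g_\Omega$ cross terms are regular as $x\to y$, so only
\begin{equation*}
\frac{n\beta}{16\pi^2}\int_\Omega\frac{F(u_1)}{(x-u_1)(y-u_1)}\,d^2u_1,\qquad F=\psi_{\sigma_1}\,H^{(\sigma)},
\end{equation*}
can produce a singularity. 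Writing $\Phi(x)=\int F(u)/(x-u)\,d^2u$ and using $\partial_{\bar x}\Phi(x)=\pi F(x)$ (which is Cauchy--Pompeiu, valid since $F$ is at least Lipschitz), the key Taylor expansion
\begin{equation*}
\Phi(x)-\Phi(y)=\pi F(y)(\bar x-\bar y)+(x-y)\partial_y\Phi(y)+o(|x-y|),
\end{equation*}
combined with the partial fraction $\frac{1}{(x-u)(y-u)}=\frac{1}{y-x}\bigl(\frac{1}{x-u}-\frac{1}{y-u}\bigr)$, yields the singular piece
\begin{equation*}
-\frac{n\beta}{16\pi}\,F(y)\,\frac{\bar x-\bar y}{x-y}+O(1).
\end{equation*}
Putting back the overall minus sign from $-V(x;u)V(y;u)$ and summing $\sigma$, the prefactor $F(y)=\psi(y)e^{-\frac{\beta}{2}g_\Omega(y,y)}e^{-\sqrt\beta\sigma_1 A(y)}H^{(\sigma)}(y)$ reassembles — by relabelling $\sigma_1\to\sigma_0$, $(\sigma_2,\ldots,\sigma_n)\to(\sigma'_1,\ldots,\sigma'_{n-1})$, and applying Lemma \ref{le:coef2} in reverse — precisely into the $(n-1)$-th coefficient of $\psi(y)\gffcf{\:\cos(\sqrt\beta\varphi(y))\:\O\prod\:e^{i\sigma'_j\sqrt\beta\varphi}\:(\psi)}$. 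The combinatorial factor $\frac{\mu^n}{2^n n!}\cdot n=\mu\cdot\frac{\mu^{n-1}}{2^{n-1}(n-1)!}\cdot\tfrac12$ recovers exactly $\mu\frac{\beta}{16\pi}\frac{\bar x-\bar y}{x-y}\psi(y)$ times the series for $\gffcf{\:\cos(\sqrt\beta\varphi(y))\:\O e^{\mu\Re(\:e^{i\sqrt\beta\varphi}\:(\psi))}}$, which is the term to be subtracted.

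Finally, to conclude I need to interchange $\lim_{x\to y}$ with $\sum_n$. The remainder at order $n$ comes from pieces (i)--(iii), and from the Taylor error in piece (iv); Propositions \ref{pr:mombd1}--\ref{pr:mombd3} bound each of these on a neighborhood of $(y,y)$ by $C^n n^{\frac{\beta}{8\pi}n}$ (uniformly in $x$), which is summable against $\mu^n/n!$ since $\beta<4\pi$. Dominated convergence then closes the argument. The main obstacle I anticipate is the careful bookkeeping in matching the coefficient in piece (iv) to the cosine correlation function via Lemma \ref{le:coef2}, and in verifying that the error terms in the Taylor expansion of $\Phi$ are bounded with the correct $n$-dependence; the latter requires a Lipschitz-in-$u_1$ estimate on $H^{(\sigma)}$ with bounds from Proposition \ref{pr:mombd2} applied with a constant uniform over $\sigma\in\{-1,1\}^n$.
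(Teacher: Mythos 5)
Your proposal is correct and follows essentially the same route as the paper: the same $\mu$-series expansion via Lemmas \ref{le:sgsmeared} and \ref{le:coef1}, the same four-way splitting of $V(x;u)V(y;u)$ (your pieces (i)--(iv) are the paper's $I_{n,1},\dots,I_{n,4}$ with the diagonal/off-diagonal split of $I_{n,4}$), the same identification of the diagonal piece as the sole source of the singularity, and the same reassembly of its coefficient into $\mu\psi(y)\sgcf{\:\cos(\sqrt{\beta}\varphi(y))\:\O}$ via Lemma \ref{le:coef2}. The one local difference is how the singular part of $\int_\Omega F(u)\,((x-u)(y-u))^{-1}\,d^2u$ is extracted: the paper freezes $F$ at $y$, absorbs the Lipschitz remainder into the continuous part, and computes the model integral explicitly in polar coordinates to get $-\pi|x-y|^2/(x-y)^2$, whereas you use partial fractions together with the identity $\bar\partial\Phi=\pi F$ for the Cauchy transform and a first-order Taylor expansion of $\Phi$. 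The two agree since $|x-y|^2/(x-y)^2=(\bar x-\bar y)/(x-y)$; your route is slightly slicker but leans on the quantitative $C^{1}$ (indeed $C^{1,\alpha}$) regularity of the Cauchy transform of a Lipschitz function, with constants that must be controlled by the Lipschitz bound of Proposition \ref{pr:mombd2} uniformly in $\sigma$ so that the Taylor errors remain summable against $\mu^n/n!$ --- a point you correctly flag as the remaining bookkeeping.
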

\begin{proof}
We start with the representation (see the proof of Proposition \ref{pr:sgpw})
\begin{align*}
\gffcf{\partial\varphi(x)\partial\varphi(y)\O e^{\mu \Re(\:e^{i\sqrt{\beta}\varphi}\:(\psi))}}&=\sum_{n=0}^\infty \frac{\mu^n}{2^n n!}\sum_{\sigma_1,...,\sigma_n\in \{-1,1\}}\gffcf{\partial \varphi(x)\partial \varphi(y)\O \prod_{j=1}^n \:e^{i\sigma_j \sqrt{\beta}\varphi}\:(\psi)}.
\end{align*}
From Lemma \ref{le:coef1} and Lemma \ref{le:cont1}, we can write 
\begin{align}\label{eq:thiseq}
\gffcf{\partial \varphi(x)\partial \varphi(y)\O \prod_{j=1}^n \:e^{i\sigma_j \sqrt{\beta}\varphi}\:(\psi)}&=\gffcf{\O}\int_{\Omega^n}d^{2n}u\prod_{j=1}^n (\psi_j(u_j))e^{-\beta \sum_{1\leq k<l\leq n}\sigma_k\sigma_l G_\Omega(u_k,u_l)}\\
&\qquad \times (\partial_x\partial_y G_\Omega(x,y)-V(x;u)V(y;u)),\notag 
\end{align}
where 
\begin{equation*}
\psi_j(u)=\psi(u)e^{-\frac{\beta}{2}g_\Omega(u,u)}e^{-\sqrt{\beta}\sigma_j \int_\Omega d^2 v f(v)G_\Omega(v,u)}
\end{equation*}
and 
\begin{align*}
V(x;u)=\int_\Omega d^2 vf(v)\partial_x G_\Omega(v,x)+\sqrt{\beta}\sum_{j=1}^n \sigma_j \partial_x G_\Omega(x,u_j).
\end{align*}
We claim that the first term in \eqref{eq:thiseq} satisfies
\begin{align*}
\gffcf{\O}\int_{\Omega^n}d^{2n} u\, \prod_{j=1}^n(\psi_j(u_j)) e^{-\beta\sum_{1\leq l<k\leq n}\sigma_l \sigma_k G_\Omega(u_l,u_k)}=\gffcf{\O \prod_{j=1}^n \:e^{i\sqrt{\beta}\sigma_j\varphi}\:(\psi)}.
\end{align*}
This can be verified by starting from the right hand side and writing it as $\lim_{\epsilon\to 0}\gffcf{\O \prod_{j=1}^n \:e^{i\sqrt{\beta}\sigma_j\varphi_\epsilon}\:(\psi)}$, and arguing as in the proof of Lemma \ref{le:coef1} (see in particular \eqref{eq:for51}).

Recalling that $\gffcf{\partial\varphi(x)\partial \varphi(y)}=\partial_x\partial_y G_\Omega(x,y)$ (see e.g. the proof of Lemma \ref{le:gffpw} with $f=0$), we thus see from Lemma \ref{le:sgsmeared} 
\begin{align*}
&\gffcf{\partial \varphi(x)\partial\varphi(y)\O e^{\mu \Re(\:e^{i\sqrt{\beta}\varphi}\:(\psi))}}-\gffcf{\partial\varphi(x)\partial\varphi(y)}\gffcf{\O e^{\mu \Re(\:e^{i\sqrt{\beta}\varphi}\:(\psi))}}\\
&=-\gffcf{\O}\sum_{n=0}^\infty \frac{\mu^n}{2^n n!}\sum_{\sigma_1,...,\sigma_n\in \{-1,1\}}\int_{\Omega^n}d^{2n}u\, \prod_{j=1}^n (\psi_j(u_j))e^{-\beta\sum_{1\leq k<l\leq n}\sigma_k\sigma_l G_\Omega(u_k,u_l)}V(x;u)V(y;u).
\end{align*}
Our task is now to understand the $x\to y$ asymptotics of this sum (up to a regular term)

For this, we recall the definition of $V$ from Lemma \ref{le:coef1} and write 
\begin{align*}
&\int_{\Omega^n}d^{2n}u\, \prod_{j=1}^n (\psi_j(u_j))e^{-\beta\sum_{1\leq k<l\leq n}\sigma_k\sigma_l G_\Omega(u_k,u_l)}V(x;u)V(y;u)\\
&=\int_{\Omega}d^2 v f(v)\partial_x G_\Omega(v,x)\int_{\Omega}d^2 v' f(v')\partial_y G_\Omega(v',y)\int_{\Omega^n}d^{2n}u\, \prod_{j=1}^n (\psi_j(u_j))e^{-\beta\sum_{1\leq k<l\leq n}\sigma_k\sigma_l G_\Omega(u_k,u_l)} \\
&\quad +\int_{\Omega}d^2 v f(v)\partial_x G_\Omega(v,x) \sqrt{\beta}\sum_{j=1}^n \sigma_j \int_{\Omega^n}d^{2n}u\, \prod_{j=1}^n (\psi_j(u_j))e^{-\beta\sum_{1\leq k<l\leq n}\sigma_k\sigma_l G_\Omega(u_k,u_l)}  \partial_y G_\Omega(y,u_j)\\
&\quad +\int_{\Omega}d^2 v f(v)\partial_y G_\Omega(v,y) \sqrt{\beta}\sum_{j=1}^n \sigma_j \int_{\Omega^n}d^{2n}u\, \prod_{j=1}^n (\psi_j(u_j))e^{-\beta\sum_{1\leq k<l\leq n}\sigma_k\sigma_l G_\Omega(u_k,u_l)}  \partial_x G_\Omega(x,u_j)\\
&\quad +\beta \sum_{p,q=1}^n \sigma_p\sigma_q\int_{\Omega^n}d^{2n}u\, \prod_{j=1}^n (\psi_j(u_j))e^{-\beta\sum_{1\leq k<l\leq n}\sigma_k\sigma_l G_\Omega(u_k,u_l)} \partial_x G_\Omega(x,u_p) \partial_y G_\Omega(y,u_q)\\
&=:\sum_{j=1}^4I_{n,j}(x,y;\sigma).
\end{align*}

Let us now look at the sums of these various terms.

\medskip

\underline{The sum of the $I_{n,1}$ terms:} Here we note that the prefactor 
\begin{equation*}
\int_{\Omega}d^2v\, f(v)\partial_x G_\Omega(v,x)\int_{\Omega}d^2 v' \, f(v')\partial_y G_\Omega(v',y)
\end{equation*}
 does not depend on $n$ or $\sigma$, so we can perform the $\sigma$ and $n$ sums (as we did for the $\partial_x \partial_yG_\Omega(x,y)$ term) to find that 
\begin{align*}
&\gffcf{\O}\sum_{n=0}^\infty \frac{\mu^n}{2^n n!}\sum_{\sigma_1,...,\sigma_n \in \{-1,1\}}I_{n,1}(x,y;\sigma)\\
&=\int_{\Omega}d^2v\, f(v)\partial_x G_\Omega(v,x)\int_{\Omega}d^2 v' \, f(v')\partial_y G_\Omega(v',y) \gffcf{\O e^{\mu \Re(\:e^{i\sqrt{\beta}\varphi}\:(\psi))}}.
\end{align*}
We know from the proof of Lemma \ref{le:gffpw} that $\int_{\Omega}d^2v\, f(v)\partial_x G_\Omega(v,x)$ is a smooth function of $x$, so we see that 
\begin{align*}
\lim_{x\to y}\gffcf{\O}\sum_{n=0}^\infty \frac{\mu^n}{2^n n!}\sum_{\sigma_1,...,\sigma_n \in \{-1,1\}}I_{n,1}(x,y;\sigma)
\end{align*}
exists. 

\medskip

\underline{The sum of the $I_{n,2}$ terms:} Here we note that again the prefactor $\int_{\Omega}d^2 v\, f(v)\partial_x G_\Omega(x,v)$ is $n$-independent and a smooth function, but to see that the full sum is smooth in $x$, we need to show that it is convergent. For this, we note that the integral 
\begin{equation*}
\int_{\Omega^n}d^{2n}u \, \prod_{j=1}^n (\psi_j(u_j))e^{-\beta\sum_{1\leq k<l\leq n}\sigma_k\sigma_l G_\Omega(u_k,u_l)}\partial_y G_\Omega(y,u_j),
\end{equation*}
is precisely of the form of $I_1(y)$ (with a suitable permutation of the indices) from \eqref{eq:boundterm1}. With the same argument as in the proof of Lemma \ref{le:cont1}, we see that using Proposition \ref{pr:mombd2}, for any compact $K\subset \Omega$ there exists a $C_K>0$ such that
\begin{align*}
\sup_{y\in K}\left|\int_{\Omega^n}d^{2n}u \, \prod_{j=1}^n (\psi_j(u_j))e^{-\beta\sum_{1\leq k<l\leq n}\sigma_k\sigma_l G_\Omega(u_k,u_l)}\partial_y G_\Omega(y,u_j)\right|\leq C_K^n n^{\frac{\beta}{8\pi}n}.
\end{align*}
We conclude (once again by Stirling) that 
\begin{align*}
\gffcf{\O}\sum_{n=0}^\infty \frac{\mu^n}{2^n n!}\sum_{\sigma_1,...,\sigma_n \in \{-1,1\}}I_{n,2}(x,y;\sigma)
\end{align*}
converges and defines a smooth function of $x$ (in particular, it has a limit as $x\to y$).

\medskip

\underline{The sum of the $I_{n,3}$ terms:} This is similar to the $I_{n,2}$-case, but now the $x$-dependence does not factor out. However, we still have a sum involving terms of the form $I_1$ from Lemma \ref{le:cont1}. In the proof of Lemma \ref{le:cont1}, we saw these to be continuous for each $n$ and due to the $L^\infty$–bound we already mentioned (and the completeness of the space $C(K,\|\cdot\|_{L^\infty(K)})$), we see that also 
\begin{align*}
\gffcf{\O}\sum_{n=0}^\infty \frac{\mu^n}{2^n n!}\sum_{\sigma_1,...,\sigma_n \in \{-1,1\}}I_{n,3}(x,y;\sigma)
\end{align*}
is a continuous function (and thus has a limit as $x\to y$).

\medskip

\underline{The sum of the $I_{n,4}$ terms:}  Here we split $I_{n,4}$ into two parts:
\begin{align*}
I_{n,4}^1(x,y;\sigma)=\beta \sum_{p\neq q}\sigma_p \sigma_q \int_{\Omega^n}d^{2n}u\, \prod_{j=1}^n (\psi_j(u_j))e^{-\beta \sum_{1\leq k<l\leq n}\sigma_k \sigma_l G_\Omega(u_k,u_l)}\partial_x G_\Omega(x,u_p)\partial_y G_\Omega(y,u_q)
\end{align*}
and 
\begin{align*}
I_{n,4}^2(x,y;\sigma)=\beta \sum_{p=1}^n\int_{\Omega^n}d^{2n}u\, \prod_{j=1}^n (\psi_j(u_j))e^{-\beta \sum_{1\leq k<l\leq n}\sigma_k \sigma_l G_\Omega(u_k,u_l)}\partial_x G_\Omega(x,u_p)\partial_y G_\Omega(y,u_p).
\end{align*}
We note that in the notation of Lemma \ref{le:cont1}, $I_{n,4}^1$ is a sum of terms of the form $I_3$ while $I_{n,4}^2$ is a sum of terms of the form $I_2$. From the proof of Lemma \ref{le:cont1}, we see that $I_3$ (and hence $I_{n,4}^1$) is continuous on all of $\Omega\times \Omega$. From the same proof, we also deduce that for some constant $C_K$, we have 
\begin{align*}
\sup_{x,y\in K}|I_{n,4}^1(x,y;\sigma)|\leq C_K^n n^{\frac{\beta}{8\pi}n}.
\end{align*}
Thus
\begin{align*}
\sum_{n=0}^\infty \frac{\mu^n}{2^n n!}\sum_{\sigma_1,...,\sigma_n\in \{-1,1\}}I_{n,4}^1(x,y;\sigma)
\end{align*}
is a continuous function of $x,y$ on $K\times K$. 

Note that so far, we have argued that all of the contributions coming from $I_{n,1},I_{n,2},I_{n,3}$ and $I_{n,4}^1$ produce a continuous function in $x$ so their contribution affect only the regular part of the correlation function. So if there is a singular part, it must come from $I_{n,4}^2$. We will now show that there is indeed one. In the notation of the proof of Lemma \ref{le:cont1}, we see that $I_{n,4}^2$ is a sum of terms of the form $I_2$. By (the proof of) Lemma \ref{le:cont1}, we thus know that this is a continuous function \emph{off of the diagonal} $x=y$, and arguments such as Fubini are justified. To extract the diagonal behavior, we argue similarly to the proof of Lemma \ref{le:cont1}. First of all, using \eqref{eq:gomega}, we can write 
\begin{align*}
I_{n,4}^2(x,y;\sigma)&=\beta \sum_{p=1}^n\int_{\Omega^n}d^{2n}u\, \prod_{j=1}^n (\psi_j(u_j))e^{-\beta \sum_{1\leq k<l\leq n}\sigma_k \sigma_l G_\Omega(u_k,u_l)}\frac{1}{16\pi^2}\frac{1}{(x-u_p)(y-u_p)}\\
&\quad -\beta \sum_{p=1}^n\int_{\Omega^n}d^{2n}u\, \prod_{j=1}^n (\psi_j(u_j))e^{-\beta \sum_{1\leq k<l\leq n}\sigma_k \sigma_l G_\Omega(u_k,u_l)}\frac{1}{4\pi}\frac{1}{x-u_p}\partial_y g_\Omega(y,u_p)\\
&\quad -\beta \sum_{p=1}^n\int_{\Omega^n}d^{2n}u\, \prod_{j=1}^n (\psi_j(u_j))e^{-\beta \sum_{1\leq k<l\leq n}\sigma_k \sigma_l G_\Omega(u_k,u_l)}\frac{1}{4\pi}\frac{1}{y-u_p}\partial_x g_\Omega(x,u_p)\\
&\quad +\beta \sum_{p=1}^n\int_{\Omega^n}d^{2n}u\, \prod_{j=1}^n (\psi_j(u_j))e^{-\beta \sum_{1\leq k<l\leq n}\sigma_k \sigma_l G_\Omega(u_k,u_l)}\partial_x g_\Omega(x,u_p)\partial_y g_\Omega(y,u_p).
\end{align*}
The last three terms can be bounded similarly to the $I_{n,2}$ and $I_{n,3}$ terms, so we only need to focus on the first term. For this, we have by Fubini
\begin{align*}
&\beta \sum_{p=1}^n\int_{\Omega^n}d^{2n}u\, \prod_{j=1}^n (\psi_j(u_j))e^{-\beta \sum_{1\leq k<l\leq n}\sigma_k \sigma_l G_\Omega(u_k,u_l)}\frac{1}{16\pi^2}\frac{1}{(x-u_p)(y-u_p)}\\
&=\beta\sum_{p=1}^n \int_{\Omega}d^2 u_p \psi_p(u_p) \frac{1}{16\pi^2}\frac{1}{(x-u_p)(y-u_p)}G_n^p(u_p),
\end{align*}
where 
\begin{equation*}
G_n^p(u_p)=\int_{\Omega^{n-1}}d^2 u_1\cdots d^2 u_{p-1} d^2 u_{p+1}\cdots d^2 u_n \prod_{j\neq p}(\psi_j(u_j))e^{-\beta \sum_{1\leq k<l\leq n}\sigma_k \sigma_l G_\Omega(u_k,u_l)}.
\end{equation*}
Next we write 
\begin{align*}
&\beta\sum_{p=1}^n \int_{\Omega}d^2 u_p \psi_p(u_p) \frac{1}{16\pi^2}\frac{1}{(x-u_p)(y-u_p)}G_n^p(u_p)\\
&=\beta\sum_{p=1}^n \psi_p(y)G_n^p(y) \int_{\Omega}  d^2 u_p \frac{1}{16\pi^2}\frac{1}{(x-u_p)(y-u_p)}\\
&\quad+\beta\sum_{p=1}^n   \int_\Omega d^2 u_p \frac{1}{16\pi^2}\frac{\psi_p(u_p)G_n^p(u_p)-\psi_p(y)G_n^p(y)}{(x-u_p)(y-u_p)}.
\end{align*}
Arguing\footnote{To be more precise, let $f$ be compactly supported and Lipschitz and consider the function $h(x)= \int d^2 u\, \frac{f(u)-f(y)}{(x-u)(y-u)}$. We then have $|h(x)-h(x')|\leq \|f\|_{\mathrm{Lip}}|x-x'|\int_{\mathrm{supp}(f)}d^2 u\, \frac{1}{|x-u||x'-u|}\leq C_f |x-x'|(1+|\log |x-x'||)$ for some constant $C_f$ depending only on the support of $f$ and $\|f\|_{\mathrm{Lip}}$. Thus $h$ is (Hölder) continuous.} as in the proof of Lemma \ref{le:cont1}, one readily finds from the Lipschitz bound of Proposition \ref{pr:mombd2} that the last term is a continuous function of $x$ on $\Omega$, and that its sup-norm over any compact $K\subset \Omega$ satisfies our familiar bound $C_K^n n^{\frac{\beta}{8\pi}n}$ implying that the sum over $n$ and $\sigma$ produces a continuous function -- once again, the contribution of this term can be absorbed into the regular part.

Before completing our proof, we summarize what we have proven so far. We have argued that 
 \begin{align*}
&\gffcf{\partial \varphi(x)\partial\varphi(y)\O e^{\mu \Re(\:e^{i\sqrt{\beta}\varphi}\:(\psi))}}-\gffcf{\partial\varphi(x)\partial\varphi(y)}\gffcf{\O e^{\mu \Re(\:e^{i\sqrt{\beta}\varphi}\:(\psi))}}\\
&\qquad +\beta \int_{\Omega}d^2 u\frac{1}{16\pi^2}\frac{1}{(x-u)(y-u)} \gffcf{\O}\sum_{n=0}^\infty \frac{\mu^n}{2^n n!}\sum_{\sigma_1,...,\sigma_n\in \{-1,1\}}\sum_{p=1}^n \psi_p(y)G_n^p(y)
\end{align*}
is a continuous function of $x$ (in particular, a limit exists as $x\to y$).  It thus remains to identify the sum over $n$ as a correlation function, and to compute the small $|x-y|$ asymptotics of the integral. Let us first look at the integral.

\medskip

\underline{Asymptotics of the integral:} To study the $u$-integral, let $\delta>0$ be small enough and fixed and satisfy $x\in B(y,\delta)\subset \Omega$ so that we have 
\begin{align*}
\int_\Omega d^2 u \frac{1}{(u-x)(u-y)}&=\int_{B(y,\delta)}d^2 u \frac{1}{(u-x)(u-y)}+\int_{\Omega\setminus B(y,\delta)}d^2 u \frac{1}{(u-x)(u-y)}.
\end{align*}
The second term, and its contribution to the full correlation function, has a limit as $x\to y$, so again, any possible singularity comes from the first one. For this, we note that 
\begin{align*}
\int_{B(y,\delta)}d^2 u \frac{1}{(u-x)(u-y)}&=\int_{B(0,\delta)}d^2 u \frac{1}{u-(x-y)}\frac{1}{u}\\
&=-\int_{|u|<|x-y|}d^2 u \frac{1}{x-y} \sum_{m=0}^\infty \frac{u^m}{(x-y)^m}\frac{1}{u}\\
&\quad +\int_{|x-y|<|u|<\delta}d^2 u \sum_{m=0}^\infty \frac{(x-y)^m}{u^m}\frac{1}{u^2}\\
&=-\pi\frac{|x-y|^2}{(x-y)^2},
\end{align*}
where we noted that by Fubini and going into polar coordinates, only the $m=1$ term of the first sum integrates to something non-zero.

We conclude that 
\begin{align*}
\int_\Omega d^2 u \frac{1}{(u-x)(u-y)}=-\pi \frac{|x-y|^2}{(x-y)^2}+\text{continuous}.
\end{align*}

\medskip

\underline{Evaluating the sum:} To evaluate the sum, we recall our definition of $\psi_p$ and $G_n^p$, and write 
\begin{align*}
&\gffcf{\O}\sum_{n=0}^\infty \frac{\mu^n}{2^n n!}\sum_{\sigma_1,...,\sigma_n\in \{-1,1\}}\sum_{p=1}^n \psi_p(y)G_n^p(y)\\
&=\gffcf{\O}\sum_{n=0}^\infty \frac{\mu^n}{2^n n!}\sum_{p=1}^n\sum_{\sigma_1,...,\sigma_n\in\{-1,1\}} \int_{\Omega^{n-1}}d^2 u_1\cdots d^2 u_{p-1}d^2 u_{p+1}\cdots d^2 u_n \prod_{j=1}^n(\psi_j(u_j)) \\
&\qquad \qquad \qquad \qquad \qquad \qquad \qquad \qquad \qquad \times e^{-\beta \sum_{1\leq k<l\leq n}\sigma_k \sigma_l G_\Omega(u_k,u_l)}, 
\end{align*}
where in the integral, we understand that $u_p=y$. From Lemma \ref{le:coef2} and (the proof of) Proposition \ref{pr:sgpw}, we recognize this to be 
\begin{align*}
&\sum_{n=0}^\infty \frac{\mu^n}{2^n n!}\sum_{p=1}^n \sum_{\sigma_1,...,\sigma_n\in \{-1,1\}}\psi(y)\gffcf{\:e^{i\sigma_p \sqrt{\beta}\varphi(y)}\: \O \prod_{j\neq p}\:e^{i\sigma_j\sqrt{\beta}\varphi}\:(\psi)}\\
&=\sum_{n=1}^\infty \frac{\mu^n}{(n-1)!} \psi(y)\gffcf{\:\cos(\sqrt{\beta}\varphi(y))\: \O \left(\Re(\:e^{i\sqrt{\beta}\varphi}\:(\psi))\right)^{n-1}}\\
&=\mu \psi(y) \gffcf{\:\cos(\sqrt{\beta}\varphi(y))\: \O e^{\mu \Re(\:e^{i\sqrt{\beta}\varphi}\:(\psi))}}.
\end{align*}

Putting everything together, we conclude that 
 \begin{align*}
&\gffcf{\partial \varphi(x)\partial\varphi(y)\O e^{\mu \Re(\:e^{i\sqrt{\beta}\varphi}\:(\psi))}}-\gffcf{\partial\varphi(x)\partial\varphi(y)}\gffcf{\O e^{\mu \Re(\:e^{i\sqrt{\beta}\varphi}\:(\psi))}}\\
&\qquad -\frac{\beta}{16\pi} \frac{|x-y|^2}{(x-y)^2}\mu \psi(y)\gffcf{\:\cos(\sqrt{\beta}\varphi(y))\: \O e^{\mu \Re(\:e^{i\sqrt{\beta}\varphi}\:(\psi))}}
\end{align*}
has a limit as $x\to y$, which concludes the proof.
\end{proof}
The claim in Theorem \ref{th:main} about $\partial \varphi(x)\partial \varphi(y)$, namely \eqref{eq:OPE1} follows simply by dividing the result of Proposition \ref{pr:ope1} by $\gffcf{e^{\mu \Re(\:e^{i\sqrt{\beta}\varphi}\:(\psi))}}$. The claim for $\bar\partial\varphi(x)\bar\partial\varphi(y)$, namely \eqref{eq:OPE2}, follows by complex conjugating this result (and replacing $f$ by $-f$).

It thus remains to prove the claim for the correlation functions involving $\partial \varphi(x)\bar\partial\varphi(y)$, namely \eqref{eq:OPE3}. For this, we have the following result.
\begin{proposition}
For the functions defined in Proposition \ref{pr:sgpw}, we have that the following function
\begin{align*}
\gffcf{\partial\varphi(x)\bar \partial\varphi(y)\O e^{\mu \Re(\:e^{i\sqrt{\beta}\varphi}\:(\psi))}}& +\mu \frac{\beta}{8\pi} \log |x-y|^{-1} \psi(y)\gffcf{\:\cos (\sqrt{\beta}\varphi(y))\:\O e^{\mu \Re(\:e^{i\sqrt{\beta}\varphi}\:(\psi))}}
\end{align*}
has a limit as $x\to y$.
\end{proposition}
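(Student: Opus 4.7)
The plan is to mirror the proof of Proposition \ref{pr:ope1} essentially step-by-step, with $\bar\partial_y$ in place of $\partial_y$ throughout, and then to carefully redo the one piece of the argument that changes qualitatively, namely the evaluation of the relevant singular integral. First I would derive the analogue of Lemma \ref{le:coef1} and Lemma \ref{le:cont1} for the mixed correlation function; the proofs go through verbatim since they only use that one of $\partial,\bar\partial$ acts on $x$ and the other on $y$, together with local $L^p$-integrability of first-order derivatives of $G_\Omega$ and the Onsager-type moment bounds of Propositions \ref{pr:mombd2} and \ref{pr:mombd3}. This yields, via Lemma \ref{le:sgsmeared} and Proposition \ref{pr:sgpw}, the series representation
\begin{align*}
&\gffcf{\partial\varphi(x)\bar\partial\varphi(y)\O e^{\mu\Re(\:e^{i\sqrt\beta\varphi}\:(\psi))}}-\gffcf{\partial\varphi(x)\bar\partial\varphi(y)}\gffcf{\O e^{\mu\Re(\:e^{i\sqrt\beta\varphi}\:(\psi))}}\\
&\qquad=-\gffcf{\O}\sum_{n=0}^\infty\frac{\mu^n}{2^n n!}\sum_{\sigma\in\{\pm 1\}^n}\int_{\Omega^n}d^{2n}u\prod_{j=1}^n(\psi_j(u_j))\,e^{-\beta\sum_{k<l}\sigma_k\sigma_l G_\Omega(u_k,u_l)}\,V(x;u)\bar V(y;u),
\end{align*}
where $\bar V(y;u)$ denotes the obvious analogue of $V(y;u)$ with $\partial_y$ replaced by $\bar\partial_y$. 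Note that $\gffcf{\partial\varphi(x)\bar\partial\varphi(y)}=\partial_x\bar\partial_y G_\Omega(x,y)$ is smooth on $\{x\neq y\}$ (indeed continuous across the diagonal), by the same computation as in Lemma \ref{le:gffope} and \eqref{eq:gomega}, and hence contributes no singularity.

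Next I would split $V(x;u)\bar V(y;u)$ into the same four pieces $I_{n,1},I_{n,2},I_{n,3},I_{n,4}$ as in the proof of Proposition \ref{pr:ope1} (now with $\bar\partial_y$ replacing $\partial_y$ wherever appropriate), and argue that the contributions of $I_{n,1}$, $I_{n,2}$, $I_{n,3}$ and the off-diagonal piece $I_{n,4}^1$ (with $p\neq q$) all produce continuous functions of $(x,y)$ on $\Omega\times\Omega$, with the same $C^n n^{\frac{\beta}{8\pi}n}$ bounds justifying term-by-term summability. The arguments are literally those from Proposition \ref{pr:ope1}: Young's convolution inequality combined with Propositions \ref{pr:mombd2} and \ref{pr:mombd3} (the exponents stay in the same ranges since $|\partial_x G_\Omega(x,u)|$ and $|\bar\partial_y G_\Omega(y,u)|$ have identical local integrability), and the smoothness of $\int f(v)\partial_x G_\Omega(x,v)\,dv$ (and similarly with $\bar\partial_y$) observed in the proof of Lemma \ref{le:gffpw}.

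All the singular behaviour therefore comes from $I_{n,4}^2$, the diagonal piece with $p=q$. Using \eqref{eq:gomega} to split $\partial_x G_\Omega(x,u_p)\bar\partial_y G_\Omega(y,u_p)$ into its leading term $\frac{1}{16\pi^2}\frac{1}{(x-u_p)(\bar y-\bar u_p)}$ plus locally bounded cross terms (which again contribute only to the regular part by the same Lipschitz-cancellation trick used in Proposition \ref{pr:ope1}), one is reduced to analysing, for a small fixed $\delta>0$,
\begin{equation*}
J(x,y):=\int_{B(y,\delta)}\frac{d^2u}{(x-u)(\bar y-\bar u)}.
\end{equation*}
Substituting $w=u-y$ and $\zeta=x-y$ and integrating in polar coordinates $w=re^{i\theta}$, the $\theta$-integral evaluates by a residue computation to $2\pi/r$ for $r>|\zeta|$ and to $0$ for $r<|\zeta|$, giving $J(x,y)=2\pi\log|x-y|^{-1}+O(1)$ as $x\to y$. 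Combined with the sign $-\gffcf{\O}$ out front and the factor $\beta/(16\pi^2)$, this contributes $-\frac{\mu\beta}{8\pi}\log|x-y|^{-1}$ times the remaining sum, which by the very same identification as at the end of the proof of Proposition \ref{pr:ope1} (using Lemma \ref{le:coef2} to recognise the sum over $\sigma_p$ as producing $\:\cos(\sqrt\beta\varphi(y))\:$ and the factorial rearrangement of the $\sum_p$) equals $\mu\psi(y)\gffcf{\:\cos(\sqrt\beta\varphi(y))\:\O e^{\mu\Re(\:e^{i\sqrt\beta\varphi}\:(\psi))}}$. Adding back $+\mu\frac{\beta}{8\pi}\log|x-y|^{-1}\psi(y)\gffcf{\:\cos(\sqrt\beta\varphi(y))\:\O e^{\mu\Re(\:e^{i\sqrt\beta\varphi}\:(\psi))}}$ cancels this singular contribution, yielding a function with a finite limit as $x\to y$.

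The only genuinely new point compared to Proposition \ref{pr:ope1} is the residue computation producing $2\pi\log|x-y|^{-1}$ (rather than $-\pi\frac{|x-y|^2}{(x-y)^2}$); everything else is a mechanical bookkeeping change from $\partial_y$ to $\bar\partial_y$. The mild technical nuisance to watch out for is keeping the Lipschitz-subtraction step $\psi_p(u)G_n^p(u)-\psi_p(y)G_n^p(y)$ compatible with the new kernel $\frac{1}{(x-u)(\bar y-\bar u)}$, but since the bound $\frac{|u-y|}{|x-u||\bar y-\bar u|}=\frac{1}{|x-u|}$ is integrable uniformly in $x$, this works just as in the holomorphic case.
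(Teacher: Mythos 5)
Your proposal is correct and follows essentially the same route as the paper: reduce to the decomposition from Proposition \ref{pr:ope1}, observe that $\partial_x\bar\partial_y G_\Omega(x,y)=\partial_x\bar\partial_y g_\Omega(x,y)$ and the terms $I_{n,1},\dots,I_{n,4}^1$ contribute only to the regular part, and isolate the singularity in $I_{n,4}^2$ via the integral $\int_{B(0,\delta)}\frac{d^2w}{(w-(x-y))\bar w}=2\pi\log|x-y|^{-1}+O(1)$. The only (cosmetic) difference is that you evaluate the angular integral by residues where the paper uses the same Laurent-series expansion as in the holomorphic case; both give the identical answer and your sign and constant bookkeeping ($\frac{\beta}{16\pi^2}\cdot 2\pi=\frac{\beta}{8\pi}$) match the statement.
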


\begin{proof}
The proof is essentially the same as that of Proposition \ref{pr:ope1}. The first difference is that $\gffcf{\partial\varphi(x)\partial\varphi(y)}$ gets replaced by $\gffcf{\partial\varphi(x)\bar\partial\varphi(y)}$. This is continuous as $x\to y$ (since it is just $\partial_x \bar\partial_y g_\Omega(x,y)$), so it can be absorbed into the part that has a limit as $x\to y$.

The analysis of the terms corresponding to $I_{n,1},I_{n,2},I_{n,3}, I_{n,4}^1$ is completely analogous -- we only used size bounds for showing that these produce continuous functions. For $I_{n,4}^2$, one can again (with completely analogous arguments) reduce the question to understanding the contributing asymptotics of the integral
\begin{align*}
\int_{B(0,\delta)}d^2 u \frac{1}{u-(x-y)}\frac{1}{\bar u}.
\end{align*}
Using the same series expansions, we see that now the only non-zero contribution comes from 
\begin{equation*}
\int_{|x-y|<|u|<\delta}d^2 u \frac{1}{|u|^2}=2\pi (\log |x-y|^{-1}-\log \delta^{-1}).
\end{equation*}
After this, the proof continues exactly as in Proposition \ref{pr:ope1}, and we recover the claim.
\end{proof}
Again, by dividing by the normalization constant, we recover \eqref{eq:OPE3}. This concludes the proof of Theorem \ref{th:main}.

\appendix 

\section{A tool for computing Gaussian expectations}\label{app:gauss}

We review in this appendix a simple version of Girsanov's theorem for a finite dimensional Gaussian distribution which we use repeatedly.
\begin{lemma}\label{le:girsa}
Let $V_1,...,V_n$ be centered, real-valued, and jointly Gaussian random variables with covariance matrix $\Sigma$. Moreover, let $P:\C^n \to \C$ be a polynomial. We then have for any $z\in \C$
\begin{equation}\label{eq:girsa}
\E\left[P(V_1,...,V_n)e^{zV_1}\right]=\E\left[e^{zV_1}\right]\E\left[P\left(V_1+z\E(V_1^2),...,V_n+z\E(V_1V_n)\right)\right]
\end{equation} 
\end{lemma}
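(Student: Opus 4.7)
The plan is to reduce to a one-dimensional Gaussian computation by an orthogonal decomposition, and then to establish the identity via completion of the square. First observe that both sides of \eqref{eq:girsa} are entire functions of $z\in\C$: the right-hand side is a polynomial in $z$ multiplied by $e^{z^2\E(V_1^2)/2}$, and the left-hand side is absolutely convergent as a function of $z$ because $|P(V_1,\dots,V_n)e^{zV_1}|$ is integrable (all moments of Gaussians exist, and $e^{(\Re z) V_1}$ has finite moments of every polynomial order). By analyticity, it suffices to prove the identity for real $z$. Moreover, if $\E(V_1^2)=0$, then $V_1=0$ almost surely, every covariance $\E(V_1V_j)$ vanishes, $e^{zV_1}=1$, and the identity is trivially true; so we may assume $\sigma^2:=\E(V_1^2)>0$.

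Next, set $\alpha_j=\E(V_1V_j)/\sigma^2$ and $U_j=V_j-\alpha_j V_1$ for $j=1,\dots,n$ (so $U_1=0$). Then $(U_2,\dots,U_n)$ is a Gaussian vector uncorrelated with $V_1$, hence independent of $V_1$. Writing $V_j=U_j+\alpha_j V_1$ and using Fubini together with this independence, I would compute
\begin{align*}
\E\!\left[P(V_1,\dots,V_n)e^{zV_1}\right]
&=\E_{U}\!\left[\E_{V_1}\!\left[P(V_1,U_2+\alpha_2 V_1,\dots,U_n+\alpha_n V_1)\,e^{zV_1}\right]\right].
\end{align*}
The inner expectation is a one-dimensional Gaussian integral in $V_1\sim\mathcal N(0,\sigma^2)$, and the standard completion of the square identity $\E[F(V_1)e^{zV_1}]=e^{z^2\sigma^2/2}\E[F(V_1+z\sigma^2)]$, valid for any $F$ of polynomial growth, gives
\begin{align*}
\E_{V_1}\!\left[P(V_1,U_2+\alpha_2V_1,\dots)e^{zV_1}\right]
=e^{z^2\sigma^2/2}\,\E_{V_1}\!\left[P\bigl(V_1+z\sigma^2,\,U_2+\alpha_2(V_1+z\sigma^2),\dots\bigr)\right].
\end{align*}

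Restoring the outer expectation and using $U_j+\alpha_jV_1=V_j$ together with $\alpha_j\sigma^2=\E(V_1V_j)$, the shifted arguments become exactly $V_j+z\E(V_1V_j)$. Since $\E[e^{zV_1}]=e^{z^2\sigma^2/2}$, this is precisely \eqref{eq:girsa}. The completion-of-squares step is genuinely one-dimensional and elementary; the only mild obstacle is bookkeeping, namely verifying that the shift $V_1\mapsto V_1+z\sigma^2$ in the conditional variables translates correctly into simultaneous shifts $V_j\mapsto V_j+z\E(V_1V_j)$ for all $j$, which the identity $\alpha_j\sigma^2=\E(V_1V_j)$ makes automatic.
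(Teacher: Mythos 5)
Your proof is correct, and it takes a genuinely different route from the paper's. The paper works directly with the joint density: it writes out $\frac{1}{\sqrt{(2\pi)^n\det\Sigma}}e^{-\frac12 x^{\mathsf T}\Sigma^{-1}x}$, performs the simultaneous change of variables $x_i=y_i+z\Sigma_{1i}$, and completes the square in the full $n$-dimensional quadratic form (after the same preliminary reduction to real $z$ via analyticity that you use). You instead project onto the orthogonal complement of $V_1$, setting $U_j=V_j-\frac{\E(V_1V_j)}{\E(V_1^2)}V_1$, use the independence of $(U_2,\dots,U_n)$ from $V_1$ granted by joint Gaussianity, and reduce everything to the elementary one-dimensional shift $\E[F(V)e^{zV}]=e^{z^2\sigma^2/2}\E[F(V+z\sigma^2)]$. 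The two arguments are both completion-of-the-square in spirit, but yours buys two things: it never needs $\Sigma$ to be invertible (the paper's density formula tacitly assumes nondegeneracy, which would otherwise require an approximation argument), and the only degenerate case you must treat, $\E(V_1^2)=0$, is dispatched trivially as you note. The paper's computation is more symmetric and self-contained in that it never invokes conditional independence, but the bookkeeping of the $n$-dimensional quadratic form is heavier. Your verification that the conditional shift $V_1\mapsto V_1+z\sigma^2$ propagates to $V_j\mapsto V_j+z\E(V_1V_j)$ via $\alpha_j\sigma^2=\E(V_1V_j)$ is exactly the right point to check, and it is correct.
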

\begin{proof}
First of all, since $\E[e^{zV_1}]=e^{\frac{z^2}{2}\E(V_1^2)}$, one readily checks that the right hand side of \eqref{eq:girsa} is an entire function of $z$. Moreover, since we are dealing with Gaussian random variables, we have $P(V_1,...,V_n)\in L^2(\P)$ and $e^{R|V_1|}\in L^2(\P)$ for each $R>0$. Thus using Fubini and Morera's theorem, one sees that also the left hand side of \eqref{eq:girsa} is an entire function of $z$. It is thus sufficient for us to verify that \eqref{eq:girsa} holds for $z\in \R$.

The verification of this is a standard ``complete the square'' argument we now present. As we are dealing with centered jointly Gaussian random variables, we have 
\begin{align*}
\E\left[P(V_1,...,V_n)e^{zV_1}\right]=\frac{1}{\sqrt{(2\pi)^n \det \Sigma}}\int_{\R^n} d^n x P(x_1,...,x_n) e^{zx_1} e^{-\frac{1}{2}x^\mathsf T\Sigma^{-1}x}.
\end{align*}
Making the change of variables $x_i=y_i+z \Sigma_{1i}$, we have 
\begin{align*}
zx_1-\frac{1}{2}x^\mathsf T\Sigma^{-1}x&=zy_1+z^2\Sigma_{11}-\frac{1}{2}y^\mathsf T\Sigma^{-1}y-z\sum_{i,j=1}^n \Sigma_{1i}(\Sigma^{-1})_{ij}y_j-\frac{1}{2}z^2 \sum_{i,j=1}^n \Sigma_{1i}\Sigma_{1j}(\Sigma^{-1})_{ij}\\
&=\frac{z^2}{2}\Sigma_{11}-\frac{1}{2}y^\mathsf T\Sigma^{-1}y.
\end{align*}
Thus 
\begin{align*}
\E\left[P(V_1,...,V_n)e^{zV_1}\right]&=e^{\frac{z^2}{2}\Sigma_{11}}\frac{1}{\sqrt{(2\pi)^n \det \Sigma}}\int_{\R^n} d^n y P(y_1+z\Sigma_{11},...,y_n+z\Sigma_{1n})e^{-\frac{1}{2}y^\mathsf T \Sigma^{-1}y}\\
&=e^{\frac{z^2}{2}\E(V_1^2)}\E\left[P\left(V_1+z\E(V_1^2),...,V_n+\E(V_1V_n)\right)\right]
\end{align*}
which is the claim (for real valued $z$). This concludes the proof. 
\end{proof}

\section{Proof of some facts from Section \ref{sec:tools}}\label{app:tools}

In this appendix we provide proofs of Lemma \ref{le:nnnum} and Lemma \ref{le:intest} which we used in Section \ref{app:tools}. These are not new, and similar arguments can be found in \cite{GP} and \cite[Appendix A]{JSW}.

Before proving the bound in Lemma \ref{le:nnnum}, we prove an exact formula for the quantity $|\mathcal F_{2,k}^{(n)}|$.

\begin{proposition}\label{thm:LDGoTLRTs}
For any $n \in \mathbb{N}$ and $k \leq n/2$, the number of directed two-loop rooted forests $F \in \mathcal{F}^{(n)}_{2, k}$ with $n$ vertices and $k$ two-loops is
\begin{align*}
|\mathcal{F}^{(n)}_{2, k}|
=
\frac{n!}{2^k k!}
\frac{2 k n^{n-2 k -1} }{(n-2k)!}.
\end{align*}
\end{proposition}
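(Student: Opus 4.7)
The plan is to count $|\mathcal{F}^{(n)}_{2,k}|$ by decomposing the construction of such a forest into three independent choices. First, I would select the set $S$ of $2k$ vertices that will belong to the two-loops; this can be done in $\binom{n}{2k}$ ways. Second, I would partition $S$ into $k$ unordered pairs (the two-loops themselves), which can be done in $(2k)!/(2^k k!)$ ways. Third, I need to count the number of ways to complete the picture: for each of the remaining $n-2k$ vertices, specify which vertex (in $S$ or outside) it points to, under the constraint that the tree structure so obtained has its roots in $S$ (equivalently, that following arrows from any non-$S$ vertex eventually reaches $S$).

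The key observation is that this last step is exactly the enumeration of labelled rooted forests on $\{1,\ldots,n\}$ whose set of roots is prescribed to be $S$. By the generalized Cayley formula, for any set $R \subset \{1,\ldots,n\}$ of size $r$ with $1 \le r \le n$, the number of such rooted forests equals
\begin{equation*}
r \cdot n^{n-r-1}.
\end{equation*}
This classical identity (which specializes to Cayley's $n^{n-2}$ when $r=1$) admits several well-known proofs: a Prüfer-type bijective encoding, an application of the matrix-tree theorem to an appropriately modified Laplacian, or a direct double-counting argument. Applying it with $R = S$ and $r = 2k$ gives $2k \cdot n^{n-2k-1}$.

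Combining the three steps and simplifying:
\begin{equation*}
|\mathcal{F}^{(n)}_{2,k}|
= \binom{n}{2k} \cdot \frac{(2k)!}{2^k k!} \cdot 2k \cdot n^{n-2k-1}
= \frac{n!}{2^k k!} \cdot \frac{2k \, n^{n-2k-1}}{(n-2k)!},
\end{equation*}
which matches the claimed formula.

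The only nontrivial input is the generalized Cayley formula for rooted forests with prescribed root set, and that is the step I expect to spend the most care on (either by citing it or sketching a short proof); everything else is bookkeeping. A minor point worth verifying is that, after the pairing in $S$ is fixed, the two-loop edges within $S$ are automatically determined and do not interfere with the forest enumeration on the remaining $n-2k$ vertices, since those vertices are required to send their outgoing arrows into a tree rooted at $S$ without creating any further cycles.
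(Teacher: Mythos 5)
Your proposal is correct, and it takes a genuinely different route from the paper. The paper decomposes a forest in $\mathcal{F}^{(n)}_{2,k}$ into $2k$ labelled rooted trees paired off by the two-loops, passes to the exponential generating function $T(x)=\sum_m (C_m/m!)x^m$ of rooted trees, derives the functional equation $T(x)=xe^{T(x)}$ from a recursion on the root degree, and then extracts the coefficient $[T(x)^{2k}]_n$ via the Lagrange--B\"urmann theorem (outsourcing that last computation to the appendix of the cited reference). You instead observe that an element of $\mathcal{F}^{(n)}_{2,k}$ is determined by (a) the set $S$ of $2k$ loop vertices, (b) a perfect matching of $S$, and (c) a rooted forest on $[n]$ with prescribed root set $S$, and you invoke the generalized Cayley formula $r\,n^{n-r-1}$ for the count in (c). The arithmetic
\begin{equation*}
\binom{n}{2k}\cdot\frac{(2k)!}{2^k k!}\cdot 2k\,n^{n-2k-1}=\frac{n!}{2^k k!}\cdot\frac{2k\,n^{n-2k-1}}{(n-2k)!}
\end{equation*}
checks out, and the decomposition is a genuine bijection: the two-loop edges within $S$ are forced by the matching, and the out-edges of the remaining $n-2k$ vertices are exactly the data of a forest rooted at $S$, with no further cycles possible. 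Your approach is shorter and avoids formal power series entirely, at the cost of importing the prescribed-root-set Cayley formula as a black box (which you rightly flag as the one step needing a citation or a short Pr\"ufer-type or matrix-tree argument); the paper's generating-function route is essentially a proof of that same formula in disguise, since Lagrange inversion applied to $T(x)=xe^{T(x)}$ is one of the standard derivations of $r\,n^{n-r-1}$. Either way the identity holds, including the boundary cases $n=2k$ (where the forest count degenerates to $1$) and $k=0$ (where the factor $2k$ correctly kills the count).
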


\begin{proof}
Proofs can be found, e.g., in \cite{GP} or \cite[Appendix A]{JSW}. We explicate an elementary combinatorial argument and refer to \cite[Appendix A]{JSW} for the rest.

Let $C_m$, $m \in \mathbb{N}$, denote the number of rooted trees with $m$ labelled vertices. As in Section \ref{sec:tools}, we direct the trees toward the root. For instance, $C_1 = 1$, $C_2 = 2$ (the trees $ 1 \rightarrow 2 $ and $ 2 \rightarrow 1$) and $C_3 = 9$ ($6$ re-labellings of the tree $1 \rightarrow 2 \rightarrow 3$ and $3$ of the tree $1 \rightarrow 2 \leftarrow 3$). A directed two-loop rooted forest $F \in \mathcal{F}^{(n)}_{2, k}$ with $n$ vertices and $k$ two-loops ($1 \leq k \leq n/2$) can be seen as a collection of $2k$ such labelled trees, paired by the two-loops. Denoting the vertex count of the $i$:th tree by $m_i$ (so $1 \leq i \leq 2k$ and $m_1 + \ldots + m_{2 k} = n$) the number of ways to distribute the labels $1, \ldots, n$ to the different trees is given by the multinomial coefficient
\begin{align*}
\binom{n}{m_1, \ldots, m_{2 k}}=\frac{n!}{m_1! \ldots m_{2k} !}.
\end{align*}
With the labels fixed, each tree has $C_{m_i}$ possible structures.  Imposing for a moment an additional order to these trees (there are $k! 2^k$ such orders, by the freedom to permute and reverse the two-loops), it is easy to compute $\# \mathcal{F}^{(n)}_{2, k}$ in terms of the auxiliary quantity
\begin{align*}
T_m := C_m / m!.
\end{align*}

Altogether, this reasoning gives
\begin{align*}
|\mathcal{F}^{(n)}_{2, k}|
& =
\frac{1}{2^k k!} \sum_{ \substack{m \in \mathbb{Z}_{\geq 1}^{2 k} \\ m_1 + \ldots + m_{2 k} = n }} 
\binom{n}{m_1, \ldots, m_{2 k}} C_{m_1} \cdots C_{m_{2 k}} \\
& = \frac{n!}{2^k k!} \sum_{ \substack{m \in \mathbb{Z}_{\geq 1}^{2 k} \\ m_1 + \ldots + m_{2 k} = n }}  T_{m_1} \cdots T_{m_{2 k}}.
\end{align*}
Furthermore, in terms of the generating function $T(x) = \sum_{m=1}^\infty T_m x^m$ (a formal power series), one readily observes that the sum above is the coefficient of $x^n$ in the power series of $(T(x))^{2k}$, denoted $[T (x)^{2 k}]_n$:
\begin{align*}
|\mathcal{F}^{(n)}_{2, k}| =  \frac{n!}{2^k k!} [T (x)^{2 k}]_n.
\end{align*}
The problem of finding $|\mathcal{F}^{(n)}_{2, k}|$ is thus transferred into finding $[T (x)^{2 k}]_n$.

\begin{lemma}
The generating function $T(x) = \sum_{m \geq 1} T_m x^m$ satisfies $T(x) = x e^{T(x)}$. It is the inverse function of $\phi(x)=xe^{-x}$ as a formal power series, and as a convergent complex power series in a neighborhood of $0$.
\end{lemma}

\begin{proof}
The first claim follows from a recursive formula for $C_m$, derived similarly as the formula for $| \mathcal{F}^{(n)}_{2, k}|$: a labelled rooted tree of $m$ nodes can be constructed by gluing $d$ smaller labelled rooted trees by directed edges to the root vertex, where $1 \leq d \leq (m-1)$ is the (in-)degree of the root vertex. Arguing as above, one obtains
\begin{align*}
C_m
& =
\sum_{d=1}^{m-1} \frac{1}{d!}
\sum_{ \substack{r \in \mathbb{Z}_{\geq 1}^{d} \\ r_1 + \ldots + r_{d} = m-1}} 
\binom{m}{1, r_1, \ldots, r_{d}}  C_{r_1} \cdots C_{r_{d}} \\
\Rightarrow \quad 
T_m & = \sum_{d=1}^{m-1} \frac{1}{d!}
\sum_{ \substack{r \in \mathbb{Z}_{\geq 1}^{d} \\ r_1 + \ldots + r_{d} = m-1}}   T_{r_1} \cdots T_{r_{d}}.
\end{align*}
and this identity gives
\begin{align*}
x e^{T(x)} &= \sum_{d=0}^\infty x \frac{T(x)^d}{d!} 
= \sum_{d=0}^\infty x \frac{1}{d!} \sum_{ r \in \mathbb{Z}_{\geq 1}^{d} } T_{r_1} \cdots T_{r_{d}} x^{r_1} \cdots x^{r_{d}} 
= \sum_{m=1}^\infty T_m x^m = T(x).
\end{align*}
For the second claim, just compute $\phi(T(x))=T(x)e^{-T(x)}=x e^{T(x)}e^{-T(x)}=x$. Finally, to go to usual complex power series, note that $\phi(z)$ has a local, complex analytic inverse $g: U \to \mathbb{C}$ in some neighborhood $U$ of $0$. The equation $\phi(g(z))=z$ (resp. $\phi(T(x))=x$) determines recursively the power series coefficients of $g$ (resp. $T$). Hence, the power series coefficients of $g$ and $T$ coincide.
\end{proof}

Finally, using the fact that $ T^{-1}(z) = z e^{-z}$ in a neighbourhood of $0$, $[T (x)^{2 k}]_n$ can be found using the Lagrange--B\"urmann theorem \cite[Chapter VII.3]{WW}, see \cite[Appendix A]{JSW}.
\end{proof}

Proving Lemma \ref{le:nnnum} from this explicit formula is now simple.

\begin{proof}[Proof of Lemma \ref{le:nnnum}]
To obtain the form of the bound stated in Lemma \ref{le:nnnum}, note that
\begin{align*}
|\mathcal{F}^{(n)}_{2, k}|
&= \frac{n!}{2^k k!}\frac{2 k n^{n-2 k-1} }{(n-2 k)!} \\
& \leq
2^{n-k} (n-k)! \frac{ n^{n-2 k} }{(n-2k)!} 
\qquad \text{(since $2 k \leq n$ and $\tfrac{n!}{(n - k)! k!} \leq 2^n$)} \\
& \leq C
2^{n-k} (n-k)! \frac{ n^{n-2 k} }{(\tfrac{n-2k}{e})^{n-2 k}} 
\qquad \text{(Stirling)}.
\end{align*}
Writing $n-2 k = s \in [0, n]$ we have $\log \tfrac{ n^{n-2 k} }{(n-2 k)^{n-2 k}} = - s \log (s/n) \leq C n $. Hence, we obtain
\begin{align*}
|\mathcal{F}^{(n)}_{2, k}|\leq  C^n (n-k)! .
\end{align*}
\end{proof}

\begin{proof}[Proof of Lemma \ref{le:intest}]
Our first remark is that scaling the integration variables by $R$, we have for a suitable constant $C=C_{\alpha,R}$
\begin{align*}
\int_{U(R)}d^{2(l+m)}u \prod_{j=1}^l |u_j|^{-\frac{\alpha}{2\pi}}\prod_{k=l+1}^{l+m}|u_k|^{-\frac{\alpha}{4\pi}}\leq C^{l+m}\int_{U(1)}d^{2(l+m)}u \prod_{j=1}^l |u_j|^{-\frac{\alpha}{2\pi}}\prod_{k=l+1}^{l+m}|u_k|^{-\frac{\alpha}{4\pi}},
\end{align*}
so we focus on this integral. Going into polar coordinates in each $u_j$, and then defining $t_j=\frac{1}{4}r_j^2$, there exists again a constant $C=C_\alpha$ such that 
\begin{align*}
\int_{U(1)}d^{2(l+m)}u \prod_{j=1}^l |u_j|^{-\frac{\alpha}{2\pi}}\prod_{k=l+1}^{l+m}|u_k|^{-\frac{\alpha}{4\pi}}\leq C^{l+m}\int_{\mathcal U}d^{l+m}t \prod_{j=1}^l t_j^{-\frac{\alpha}{4\pi}}\prod_{k=l+1}^{l+m}t_k^{-\frac{\alpha}{8\pi}},
\end{align*}
where 
\begin{equation*}
\mathcal U=\left\{(t_1,...,t_{l+m})\in (0,1)^{l+m}: \sum_{j=1}^{l+m}t_j\leq 1\right\}.
\end{equation*}
This last integral can be evaluated (note that it is closely related to the Dirichlet distribution): from \cite[Section 12.5]{WW}, if $f:[0,1]\to \R$ is continuous and $\alpha_1,...,\alpha_N>0$, then
\begin{equation*}
\int_{\mathcal U}d^{N}t f(t_1+...+t_N) \prod_{j=1}^N t_j^{\alpha_j-1}=\frac{\Gamma(\alpha_1)\cdots \Gamma(\alpha_N)}{\Gamma(\alpha_1+\cdots+\alpha_N)}\int_0^1 f(t) t^{\sum_{j=1}^N \alpha_j-1}dt.
\end{equation*}
For us $f=1$, and we find the bound
\begin{align*}
\int_{U(R)}d^{2(l+m)}u \prod_{j=1}^l |u_j|^{-\frac{\alpha}{2\pi}}\prod_{k=l+1}^{l+m}|u_k|^{-\frac{\alpha}{4\pi}}\leq C^{l+m} \frac{\Gamma(1-\frac{\alpha}{4\pi})^l\Gamma(1-\frac{\alpha}{8\pi})^m}{\Gamma(1+l(1-\frac{\alpha}{4\pi})+m(1-\frac{\alpha}{8\pi}))},
\end{align*}
which is of the form we want.
\end{proof}

\end{document}